\theoremstyle{plain}
\newtheorem{thm}{Theorem}[section]
\newtheorem{prp}[thm]{Proposition}
\newtheorem{lem}[thm]{Lemma}
\newtheorem*{stoll*}{Stollmann's Lemma}
\newtheorem*{av*}{Spectral Averaging}
\theoremstyle{definition}
\newtheorem*{rem*}{Remark}
\newcommand{\dd}{\mathrm{d}}
\newcommand{\N}{\mathbb{N}}
\newcommand{\Z}{\mathbb{Z}}
\newcommand{\Q}{\mathbb{Q}}
\newcommand{\R}{\mathbb{R}}
\DeclareMathOperator{\dist}{dist}
\DeclareMathOperator{\expect}{\mathbb{E}}
\DeclareMathOperator{\one}{\mathbf{1}}
\DeclareMathOperator{\prob}{\mathbb{P}}
\DeclareMathOperator{\supp}{supp}
\DeclareMathOperator{\tr}{tr}
\title[Abstract Wegner estimates]{Some Abstract Wegner estimates with applications}
\author{Mostafa \textsc{Sabri}}
\address{Institut de Math\'{e}matiques de Jussieu, Universit\'{e} Paris Diderot Paris 7, B\^atiment Sophie Germain, 75013, Paris, France}
\email{sabri@math.jussieu.fr}
\subjclass[2010]{Primary 82B44. Secondary 47B80, 34B45}
\keywords{Wegner estimates, random operators, sparse potentials, quantum graphs}
\date{July 9, 2013}
\begin{document}

\begin{abstract}
We prove some abstract Wegner bounds for random self-adjoint operators. Applications include elementary proofs of Wegner estimates for discrete and continuous Anderson Hamiltonians with possibly sparse potentials, as well as Wegner bounds for quantum graphs with random edge length or random vertex coupling. We allow the coupling constants describing the randomness to be correlated and to have quite general distributions.
\end{abstract}

\bibliographystyle{amsplain}

\maketitle

\section{Introduction}    \label{sec:intro}

\textbf{Background.} Wegner estimates for random Schr\"odinger operators have been the subject of active research for the last three decades. Given a random self-adjoint operator $A(\omega)$ with a discrete spectrum $\{ E_j(\omega) \}$ and a fixed interval $I$, the aim is to obtain good bounds on the average number of $E_j(\omega)$ in $I$. Such estimates can be used in a proof of Anderson localization via multiscale analysis, or in the study of the continuity of the integrated density of states. These estimates are named after Wegner's work \cite{W}.

The aim of this paper is to derive some abstract Wegner bounds for some random self-adjoint operators on a Hilbert space, and to apply them afterwards for specific models. This approach proves to be rewarding, if only because it considerably shortens the proof of a Wegner bound for the model at hand. This is not the first attempt to provide abstract bounds; see \cite{CHKN} for a previous one.

\textbf{Results.} The abstract Wegner estimates are stated in Section~\ref{sec:abs} and applied in Section~\ref{sec:app4}. We first obtain optimal bounds on the lattice and non-optimal bounds in the continuum. We allow the potential to be sparse, i.e. make no covering assumption. This includes models with surface and Delone potentials. We then give Wegner bounds for quantum graphs with random edge lengths or random vertex couplings. In each application, we allow the coupling constants entering the randomness to be correlated and only assume that their distributions have no atoms. A comparison with previous results is provided for each application. We conclude the paper with an appendix describing the spectra of Anderson models with half-space potentials. This illustrates the non-triviality of some of our bounds.

\textbf{Notations.} We assume the probability space has the form $(\Omega, \mathfrak{F}, \prob)$, where $\Omega = \mathcal{B}^{\mathcal{I}}$ for some Borel set $\mathcal{B} \subseteq \R$ and some countable index set $\mathcal{I}$. Here $\prob$ is a probability measure on $\Omega$ and $\mathfrak{F}=\mathop \otimes_{\alpha \in \mathcal{I}} \mathfrak{B}$, where $\mathfrak{B}$ is the Borel $\sigma$-algebra of $\mathcal{B}$. By definition, $\mathfrak{F}$ is generated by cylinder sets of the form $\{ \omega=(\omega_{\alpha}) : \omega_{\alpha_1} \in A_1, \ldots, \omega_{\alpha_n} \in A_n \}$, with $\alpha_j \in \mathcal{I}$ and $A_j \in \mathfrak{B}$. Any product space $\mathcal{B}^{\mathcal{I}}$ is assumed to be endowed with the $\sigma$-algebra $\mathfrak{F}=\mathop \otimes_{\alpha \in \mathcal{I}} \mathfrak{B}$, which we shall often omit.

Fix $\alpha \in \mathcal{I}$, let $Y_{\alpha} := \mathcal{B}^{\mathcal{I} \setminus \{\alpha\}}$, $\mathcal{Y}_{\alpha} := \mathop \otimes_{\beta \neq \alpha} \mathfrak{B}$ and denote $\hat{\omega}_{\alpha} := (\omega_{\beta})_{\beta \neq \alpha}$. Define $\tau_{\alpha} : \Omega \to \mathcal{B} \times Y_{\alpha}$ by $\tau_{\alpha}: \omega \mapsto (\omega_{\alpha},\hat{\omega}_{\alpha})$. Then applying \cite[Corollary 10.4.15]{Bog07} to $(\mathcal{B} \times Y_{\alpha}, \mathfrak{B} \mathop \otimes \mathcal{Y}_{\alpha}, \prob \circ \tau_{\alpha}^{-1})$, we may find for each $\hat{\omega}_{\alpha} \in Y_{\alpha}$ a probability measure $\mu_{\hat{\omega}_{\alpha}}$ on $(\mathcal{B},\mathfrak{B})$ such that, if $A \in \mathfrak{F}$ and $A_{\hat{\omega}_{\alpha}} := \{ \omega_{\alpha} : (\omega_{\alpha},\hat{\omega}_{\alpha}) \in \tau_{\alpha}(A) \}$, then the map $\hat{\omega}_{\alpha} \mapsto \mu_{\hat{\omega}_{\alpha}}(A_{\hat{\omega}_{\alpha}})$ is $\mathcal{Y_{\alpha}}$-measurable and $\prob(A) = \int_{Y_{\alpha}} \mu_{\hat{\omega}_{\alpha}}(A_{\hat{\omega}_{\alpha}}) \dd \prob_{Y_{\alpha}}(\hat{\omega}_{\alpha})$. Here $\prob_{Y_{\alpha}} := \prob \circ \pi_{Y_{\alpha}}^{-1}$, where $\pi_{Y_{\alpha}} : \omega \mapsto \hat{\omega}_{\alpha}$. The measures $\mu_{\hat{\omega}_{\alpha}}(B)$ are essentially regular versions of $\prob \big\{ \omega_{\alpha} \in B \big| \hat{\omega}_{\alpha} \big\}$. We will usually omit the identification map $\tau_{\alpha}$ and simply regard elements of $\Omega$ as ordered pairs $(\omega_{\alpha},\hat{\omega}_{\alpha})$, so that $A_{\hat{\omega}_{\alpha}}$ is a section of $A$, $A_{\hat{\omega}_{\alpha}} = \{ \omega_{\alpha} : (\omega_{\alpha},\hat{\omega}_{\alpha}) \in A \}$.

Now fix a finite set $\mathcal{I}_F \subseteq \mathcal{I}$ (e.g. $\mathcal{I} = \Z^d$ and $\mathcal{I}_F \subset \Z^d$ a cube). We express our Wegner bounds in terms of the following modulus of continuity
\begin{equation}
s_F(\prob,\varepsilon) = \max_{\alpha \in \mathcal{I}_F} \expect_{Y_{\alpha}} \Big\{ \sup_{E \in \R} \mu_{\hat{\omega}_{\alpha}}(E,E+ \varepsilon) \Big\} \, . \label{eq:sP0}
\end{equation}

We show in Section~\ref{sub:technical} that for any probability measure $\mu$ on $\R$,
\begin{equation}
\sup_{E \in \R} \mu(E,E+ \varepsilon) = \sup_{E \in \Q} \mu(E,E+ \varepsilon) \, . \label{eq:sP1}
\end{equation}
In particular, $\hat{\omega}_{\alpha} \mapsto \sup_{E \in \R} \mu_{\hat{\omega}_{\alpha}}(E,E+ \varepsilon)$ is $\mathcal{Y}_{\alpha}$-measurable, so $s_F(\prob,\varepsilon)$ is well defined. We also verify that in the special case where $\prob = \mathop \otimes_{\alpha \in \mathcal{I}} \mu_{\alpha}$ for some probability measures $\mu_{\alpha}$ on $\R$, we have $s_F(\prob,\varepsilon) = \max_{\alpha \in \mathcal{I}_F} \sup_{E \in \R} \mu_{\alpha}(E,E+\varepsilon)$.

\textbf{Remark.} Our bounds are useful if the probability measure $\prob$ is continuous. If $\prob = \mathop \otimes_{\alpha \in \mathcal{I}} \mu$, it is sufficient for localization to have $\mu$ H\"older (or even log-H\"older) continuous. This, of course, encompasses the case where $\mu$ has a bounded density $\mu=\rho(\lambda) \dd \lambda$.

We will not treat here random Schr\"odinger operators with a sign-indefinite single-site potential. The reader can find some Wegner estimates for such models in \cite{K95}, \cite{HK01}, \cite{V1}, \cite{V2}, \cite{PTV} and \cite{Kru}, assuming the distribution $\mu$ of the $(\omega_{\alpha})$ has a density. See also the recent survey \cite{EKTV}. For sign-indefinite models on the lattice, the density assumption on $\mu$ can be relaxed if the disorder is large; see \cite[Theorem 1.2]{ESS} and \cite[Proposition 5.1]{ETV} for a related result. For sign-indefinite models in the continuum however, there are to the best of our knowledge no Wegner bounds without the hypothesis that $\mu$ has a density.

\section{Abstract Theorems}    \label{sec:abs}
In the following we give three abstract Wegner estimates. Theorem~\ref{thm:1} is optimal, but is only valid for finite dimensional spaces. It can be applied for example to discrete Schr\"odinger operators on finite cubes $\Lambda$, acting on $\ell^2(\Lambda)$. Theorems~\ref{thm:weg-stol} and \ref{thm:lppvcov} on the other hand are valid in an arbitrary separable Hilbert space, but they are not optimal.

\subsection{Finite dimensional Hilbert spaces}
\subsection*{Hypotheses (A)}
\begin{enumerate}[1)]
\item We fix a probability space $(\Omega,\mathfrak{F},\prob)$ with $\Omega = \mathcal{B}^{\mathcal{I}}$ for some Borel set $\mathcal{B} \subseteq \R$, some countable index set $\mathcal{I}$, and fix a finite-dimensional Hilbert space $\mathcal{H}$. 
\item $H(\omega)$ is a self-adjoint operator on $\mathcal{H}$ for each $\omega \in \Omega$.
\item Fix a bounded interval $I$. There exist a constant $\gamma>0$ and a self-adjoint operator $W$ such that $\prob$-almost surely,
\[ 
\chi_I(H(\omega)) W \chi_I(H(\omega)) \ge \gamma \chi_I(H(\omega)) \, .
\]
\item The operator $W$ takes the form
\[ 
W = \sum_{\alpha \in \mathcal{I}_F} U_{\alpha} \, ,
\] 
for some finite set $\mathcal{I}_F \subseteq \mathcal{I}$, where the $U_{\alpha}$ are self-adjoint operators.
\item Fix an orthonormal basis $\{e_j\}_{j \in J}$ for $\mathcal{H}$. We define $\mathcal{I}_j := \{ \alpha \in \mathcal{I}_F : U_{\alpha} e_j \neq 0 \}$, $C_{\text{fin}} := \max_{j \in J} | \mathcal{I}_j |$ and  $J_{\textup{eff}} := \{ j \in J: U_{\alpha} e_j \neq 0 \text{ for some } \alpha \in \mathcal{I}_F \}$.
\end{enumerate}

Note that one may take $\mathcal{I} = \mathcal{I}_F = J$ and $W= \sum_{j \in J} P_j = \mathrm{Id}$, where $P_j f := \langle f, e_j \rangle e_j$, in which case conditions 3 and 4 hold trivially on any interval with $\gamma=1$ and $C_{\text{fin}} = 1$. For random Schr\"odinger operators, the $U_{\alpha}$ can be the single-site potentials. Condition 3 is sometimes called an \emph{uncertainty principle}, and an efficient criterion to check its validity was established in \cite{BLS}. The constant $\gamma$ often depends on $I$.

The following proposition is the key idea for obtaining optimal Wegner bounds without covering assumptions. It decomposes the trace into local contributions of the $U_{\alpha}$. The proof is given in Section~\ref{sec:proof1}.

\begin{prp}       \label{prp:local}
Suppose that $H(\omega)$ satisfies Hypotheses \emph{(A)} in the interval $I$. Then $\prob$-almost surely,
\[ 
\tr[\chi_I(H(\omega))] \le \gamma^{-2} C_{\textup{fin}} \sum_{j \in J_{\textup{eff}}} \sum_{\alpha \in \mathcal{I}_j} \langle U_{\alpha} \chi_I(H(\omega)) U_{\alpha} e_j, e_j \rangle \, .
\]
\end{prp}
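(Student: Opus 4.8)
The plan is to reduce the statement to an operator inequality for the spectral projection $P := \chi_I(H(\omega))$, and then to expand the resulting trace in the orthonormal basis $\{e_j\}_{j \in J}$. Fix $\omega$ in the full-measure set on which condition~3 of Hypotheses~(A) holds, and write $P := \chi_I(H(\omega))$; this is an orthogonal projection, so $P = P^{*} = P^{2}$.

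First I would upgrade the uncertainty principle $\gamma P \le PWP$ to an inequality involving $W^{2}$. For every $f \in \mathcal{H}$,
\[
\gamma \|Pf\|^{2} = \gamma \langle Pf, f\rangle \le \langle PWPf, f\rangle = \langle WPf, Pf\rangle \le \|WPf\|\,\|Pf\| ,
\]
so $\gamma \|Pf\| \le \|WPf\|$, and squaring, $\gamma^{2} \langle Pf, f\rangle = \gamma^{2}\|Pf\|^{2} \le \|WPf\|^{2} = \langle PW^{2}Pf, f\rangle$, where I used $W = W^{*}$. Hence $\gamma^{2} P \le PW^{2}P$ as operators. Since $\mathcal{H}$ is finite dimensional the trace is monotone on self-adjoint operators, so, using also cyclicity of the trace,
\[
\gamma^{2}\, \tr[\chi_I(H(\omega))] = \gamma^{2}\, \tr P \le \tr[PW^{2}P] = \tr[WPW] = \sum_{j \in J} \|PWe_j\|^{2} .
\]

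It remains to localize each summand. By condition~4 of Hypotheses~(A) and the definition of $\mathcal{I}_j$ (the terms with $\alpha \in \mathcal{I}_F \setminus \mathcal{I}_j$ vanish), $We_j = \sum_{\alpha \in \mathcal{I}_j} U_\alpha e_j$ is a sum of at most $|\mathcal{I}_j| \le C_{\textup{fin}}$ vectors, so the Cauchy--Schwarz inequality gives
\[
\|PWe_j\|^{2} = \Bigl\| \sum_{\alpha \in \mathcal{I}_j} P U_\alpha e_j \Bigr\|^{2} \le |\mathcal{I}_j| \sum_{\alpha \in \mathcal{I}_j} \|P U_\alpha e_j\|^{2} \le C_{\textup{fin}} \sum_{\alpha \in \mathcal{I}_j} \langle U_\alpha P U_\alpha e_j, e_j\rangle ,
\]
the last step using $\|P U_\alpha e_j\|^{2} = \langle P U_\alpha e_j, U_\alpha e_j\rangle = \langle U_\alpha P U_\alpha e_j, e_j\rangle$ ($P$ a projection, $U_\alpha$ self-adjoint). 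For $j \notin J_{\textup{eff}}$ one has $We_j = 0$ and $\mathcal{I}_j = \emptyset$, so such $j$ contribute nothing to either side; summing the last display over $j \in J_{\textup{eff}}$ and dividing by $\gamma^{2}$ gives exactly the asserted bound.

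I do not expect a genuine obstacle. The only two points that need a little care are the implication $\gamma P \le PWP \Rightarrow \gamma^{2} P \le PW^{2}P$, which must be obtained via Cauchy--Schwarz and $W = W^{*}$ rather than by naively ``squaring'' an operator inequality, and the bookkeeping guaranteeing that only $j \in J_{\textup{eff}}$ and $\alpha \in \mathcal{I}_j$ actually contribute, so that passing from a sum over $J$ to one over $J_{\textup{eff}}$ is an identity, not merely an inequality.
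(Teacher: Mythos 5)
Your proof is correct and follows essentially the same route as the paper: both reduce the claim to $\gamma^{2}\tr[\chi_I] \le \tr[W\chi_I W]$ and then localize via a Cauchy--Schwarz bound over the at most $C_{\textup{fin}}$ indices in $\mathcal{I}_j$, discarding the vanishing terms with $j \notin J_{\textup{eff}}$. The only (cosmetic) difference is that you obtain the intermediate inequality from the pointwise operator bound $\gamma^{2}P \le PW^{2}P$ via Cauchy--Schwarz, whereas the paper reaches the same trace inequality by a Young-inequality absorption argument applied to $\tr[\chi_I W]$.
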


For our first Wegner bound, we need one more hypothesis:

\subsection*{Hypothesis (B)}
$H(\omega)$ satisfies Hypotheses (A). Moreover, given $\omega = (\omega_{\alpha})_{\alpha \in \mathcal{I}} \in \Omega$, $H(\omega)$ has the form
\[ 
H(\omega) = H_1 + \sum_{\alpha \in \mathcal{I}_F} \omega_{\alpha} U_{\alpha},
\]
where $H_1$ is self-adjoint, $U_{\alpha} \ge 0$ and $\|U_{\alpha}\| \le C_U$ for all $\alpha$.

Hence, randomness must appear as an additive perturbation and the $U_{\alpha}$ must be positive operators. The proof of the next theorem is given in Section~\ref{sec:proof2}.

\begin{thm}    \label{thm:1}
Suppose that $H(\omega)$ satisfies Hypotheses \emph{(A)} and \emph{(B)} in the interval $I$. Then $\tr [ \chi_I(H(\omega))]$ is measurable and
\[ 
\expect \{ \tr [ \chi_I(H(\omega))] \} \le C_W \cdot |J_{\textup{eff}} \,| \cdot s_F(\prob,|I|), 
\]
where $C_W := 6 \gamma^{-2} C^2_U C^2_{\textup{fin}}$ and $s_F(\prob,\varepsilon)$ is defined in \textup{(\ref{eq:sP0})}.
\end{thm}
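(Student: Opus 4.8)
The plan is to combine Proposition~\ref{prp:local} with the spectral averaging technique, applied one coordinate at a time via the conditional measures $\mu_{\hat{\omega}_\alpha}$ constructed in the Notations section. Starting from Proposition~\ref{prp:local}, the left-hand side $\tr[\chi_I(H(\omega))]$ is bounded $\prob$-a.s. by $\gamma^{-2} C_{\textup{fin}} \sum_{j \in J_{\textup{eff}}} \sum_{\alpha \in \mathcal{I}_j} \langle U_\alpha \chi_I(H(\omega)) U_\alpha e_j, e_j\rangle$, so after checking measurability of $\omega \mapsto \tr[\chi_I(H(\omega))]$ (which follows since $H(\omega) = H_1 + \sum \omega_\alpha U_\alpha$ depends continuously, hence measurably, on $\omega$ in a finite-dimensional space, and $\chi_I$ can be approximated appropriately, or via the resolvent) it suffices to bound $\expect\{\langle U_\alpha \chi_I(H(\omega)) U_\alpha e_j, e_j\rangle\}$ for each fixed $j \in J_{\textup{eff}}$ and $\alpha \in \mathcal{I}_j$. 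Here I would write $\langle U_\alpha \chi_I(H(\omega)) U_\alpha e_j, e_j\rangle = \langle \chi_I(H(\omega)) v, v\rangle$ with $v = U_\alpha e_j$, $\|v\| \le C_U$, and note this equals $\|\chi_I(H(\omega)) v\|^2 \le \|v\|^2 \le C_U^2$, but more to the point it is a matrix element of a spectral projection that we will average in $\omega_\alpha$.

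The key step is the one-parameter spectral averaging bound. For fixed $\hat{\omega}_\alpha$, Hypothesis (B) gives $H(\omega) = H(\omega_\alpha = 0) + \omega_\alpha U_\alpha$ with $U_\alpha \ge 0$, $\|U_\alpha\| \le C_U$. I expect to invoke the standard fact (this is what the \texttt{av*} environment header anticipates) that for a self-adjoint $A$, a nonnegative bounded $U$, and $v$ in the range consideration, $\int_{\R} \langle \chi_I(A + t U)\, U^{1/2} w, U^{1/2} w\rangle\, \dd\nu(t) \le \|w\|^2 \sup_{E} \nu(E, E+|I|)$ for any probability measure $\nu$; written in terms of $v = U_\alpha e_j$ one gets $\langle U_\alpha \chi_I U_\alpha e_j, e_j\rangle \le C_U \langle \chi_I\, U_\alpha^{1/2} e_j, U_\alpha^{1/2} e_j\rangle$ (using $U_\alpha \le C_U \cdot \mathrm{Id}$ on the relevant vectors) and then the $t$-average of $\langle \chi_I(H_1' + t U_\alpha) U_\alpha^{1/2} e_j, U_\alpha^{1/2} e_j\rangle$ against $\mu_{\hat{\omega}_\alpha}$ is at most $\|U_\alpha^{1/2} e_j\|^2 \sup_E \mu_{\hat{\omega}_\alpha}(E, E+|I|) \le C_U \sup_E \mu_{\hat{\omega}_\alpha}(E, E+|I|)$. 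Combining, $\expect_{\omega_\alpha}\{\langle U_\alpha \chi_I(H(\omega)) U_\alpha e_j, e_j\rangle \,\big|\, \hat{\omega}_\alpha\} \le C_U^2 \sup_E \mu_{\hat{\omega}_\alpha}(E, E+|I|)$, and then taking $\expect_{Y_\alpha}$ and using $\alpha \in \mathcal{I}_F$ gives the bound $C_U^2\, s_F(\prob, |I|)$ for each such term, by the definition \eqref{eq:sP0} of $s_F$.

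To finish, I would assemble the sum: by Proposition~\ref{prp:local},
\[
\expect\{\tr[\chi_I(H(\omega))]\} \le \gamma^{-2} C_{\textup{fin}} \sum_{j \in J_{\textup{eff}}} \sum_{\alpha \in \mathcal{I}_j} C_U^2\, s_F(\prob,|I|) \le \gamma^{-2} C_{\textup{fin}} \cdot |J_{\textup{eff}}| \cdot C_{\textup{fin}} \cdot C_U^2\, s_F(\prob,|I|),
\]
since $|\mathcal{I}_j| \le C_{\textup{fin}}$ for each $j$. This yields a constant $\gamma^{-2} C_U^2 C_{\textup{fin}}^2$, and the stated $C_W = 6\gamma^{-2} C_U^2 C_{\textup{fin}}^2$ absorbs the numerical losses incurred in the precise form of the spectral averaging inequality (the factor $6$ is typical of versions of spectral averaging that do not assume a density, e.g.\ via Stone's formula / boundary values of the resolvent, where one splits into real and imaginary parts and uses a Poisson-kernel type estimate). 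The main obstacle I anticipate is making the spectral averaging step fully rigorous \emph{without any absolute continuity assumption} on $\mu_{\hat{\omega}_\alpha}$: one cannot differentiate in $t$, so instead I would approximate $\chi_I$ by resolvent expressions, use $\Im (A - E - \ii\eta)^{-1}$ as a smoothed version of the spectral measure, apply a rank-one (or monotone positive perturbation) resolvent identity to control the $t$-dependence, integrate against $\mu_{\hat{\omega}_\alpha}$, and bound the result by $\sup_E \mu_{\hat{\omega}_\alpha}(E,E+\eta)$-type quantities before letting $\eta \downarrow |I|$; keeping track of the constants through this limiting procedure is where the factor $6$ and the bulk of the technical care will go. Measurability of all intermediate quantities in $\hat{\omega}_\alpha$ is guaranteed by \eqref{eq:sP1} and the measurability statements recorded after it.
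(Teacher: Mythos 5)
Your overall route is the paper's: integrate Proposition~\ref{prp:local}, disintegrate $\prob$ into the conditional measures $\mu_{\hat\omega_\alpha}$, and average over $\omega_\alpha$ via a one-parameter spectral averaging with $A = H_1 + \sum_{\beta\neq\alpha}\omega_\beta U_\beta$ and $B = U_\alpha$; the measurability discussion is also fine. However, one step fails as written: the pointwise reduction $\langle U_\alpha \chi_I(H(\omega)) U_\alpha e_j, e_j\rangle \le C_U \langle \chi_I(H(\omega))\, U_\alpha^{1/2}e_j, U_\alpha^{1/2}e_j\rangle$. This does not follow from $U_\alpha \le C_U\,\mathrm{Id}$: the left side is $\|\chi_I(H(\omega))U_\alpha e_j\|^2$ and the right side is $C_U\|\chi_I(H(\omega))U_\alpha^{1/2}e_j\|^2$, and since $\chi_I(H(\omega))$ need not commute with $U_\alpha$ (nor is $U_\alpha$ assumed diagonal in the basis $\{e_j\}$), one cannot move a power of $U_\alpha$ across the projection. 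Concretely, with $C_U=1$, $U_\alpha^{1/2}=\mathrm{diag}(1,\tfrac12)$, the relevant vector proportional to $(1,1)$, and the projection of rank one onto a vector orthogonal to $U_\alpha^{1/2}e_j$ but not to $U_\alpha e_j$, the right side vanishes while the left side does not; nothing in Hypotheses (A)--(B) excludes such configurations for $\chi_I(H(\omega))$.

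The repair is immediate and is exactly the paper's move: use the identity $\langle U_\alpha \chi_I(H(\omega)) U_\alpha e_j, e_j\rangle = \langle U_\alpha^{1/2}\chi_I(H(\omega))U_\alpha^{1/2}\phi, \phi\rangle$ with $\phi := U_\alpha^{1/2}e_j$ (no inequality needed), then apply the sandwiched spectral averaging of \cite{St10}, Theorem 3.2, $\int_{\R}\langle B^{1/2}\chi_I(A+tB)B^{1/2}\phi,\phi\rangle\,\dd\mu(t) \le 6\|B\|\,\|\phi\|^2\, s(\mu,|I|)$, valid for \emph{arbitrary} probability measures, with $B=U_\alpha$; each term is then $\le 6\|U_\alpha\|\,\|U_\alpha^{1/2}e_j\|^2\, s(\mu_{\hat\omega_\alpha},|I|) \le 6C_U^2\, s_F(\prob,|I|)$ after taking $\expect_{Y_\alpha}$, and summing over $\alpha\in\mathcal{I}_j$ (at most $C_{\textup{fin}}$ terms) and $j\in J_{\textup{eff}}$ gives exactly $C_W = 6\gamma^{-2}C_U^2C_{\textup{fin}}^2$. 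Two further remarks: the unsandwiched bound you state, $\int\langle\chi_I(A+tU)\psi,\psi\rangle\,\dd\nu(t)\lesssim\|\psi\|^2\sup_E\nu(E,E+|I|)$, is false for general $\psi$ and is only legitimate here because your $\psi=U_\alpha^{1/2}e_j$ is of the form $U_\alpha^{1/2}w$, and the correct constant carries the factor $6\|U_\alpha\|$; and the ``bulk of technical care'' you anticipate for averaging without a density is unnecessary --- the paper simply cites \cite{St10} and uses (\ref{eq:sup}) to pass from half-open to open intervals.
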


The fact that uncertainty principles imply Wegner bounds was first realized in \cite{CHK0} and \cite{CHK}. There however, the authors considered the spectral projectors $\chi_I(H_1)$. It was later noticed in \cite{St10} that the arguments become simpler if one considers $\chi_I(H(\omega))$, and this idea was used again in \cite{Klein} and \cite{EK}.

It is worthwile to note that if $H_1$ has the special form $H_1 = \sum_{\alpha \in \mathcal{I}_F} c_{\alpha} U_{\alpha}$, that is, if $H(\omega) = \sum_{\alpha \in \mathcal{I}_F} (c_{\alpha}+\omega_{\alpha}) U_{\alpha}$ for some bounded non-random constants $c_{\alpha}$, then analogs of Proposition~\ref{prp:local} and Theorem~\ref{thm:1} hold for intervals not containing $0$, without the need for an uncertainty principle. Such models arise when studying discrete acoustic operators on $\ell^2(\Z^d)$. We refer the reader to \cite{Kitagaki} for details.

\subsection{Separable Hilbert spaces} We now work in the general setting.

Given $\mathcal{C} \subseteq \R$, we say that $f:\mathcal{C}^{\mathcal{I}} \to \R$ is \emph{monotone increasing} (resp. \emph{monotone decreasing}) if $v_{\alpha} \le w_{\alpha}$ for all $\alpha \in \mathcal{I}$ implies $f(v) \le f(w)$ (resp. $f(v) \ge f(w)$). 

\subsection*{Hypotheses (C)}
\begin{enumerate}[1)]
\item We fix a probability space $(\Omega,\mathfrak{F},\prob)$ with $\Omega = \mathcal{C}^{\mathcal{I}}$ for some interval $\mathcal{C} \subseteq \R$ and some countable index set $\mathcal{I}$. We assume $\prob$ has no atoms, more precisely $s_F(\prob,\varepsilon) \to 0$ as $\varepsilon \to 0$. We also fix a separable Hilbert space $\mathcal{H}$.
\item $H(\omega)$ is a self-adjoint operator on $\mathcal{H}$ for each $\omega \in \Omega$. It is bounded from below and has an orthonormal basis of eigenvectors, with eigenvalues $\lambda_1(\omega) \le \lambda_2(\omega) \le \ldots$\,. 
\item Fix an open interval $I$. There exists a number $K$ independent of $\omega$ such that
\[
n > K \implies \lambda_n(\omega) \notin I.
\]
\item Either $\mathcal{D} := D(H(\omega))$ or $\mathcal{D} :=D(\mathfrak{h}^{\omega})$ is independent of $\omega$, where $\mathfrak{h}^{\omega}$ is the form associated with $H(\omega)$. In the first case we define $f_u(\omega):= \langle H(\omega) u, u \rangle$, in the second case we define $f_u(\omega):= \mathfrak{h}^{\omega}[u]$, for $u \in \mathcal{D}$.
\item There exists a finite set $\mathcal{I}_F \subseteq \mathcal{I}$ such that $f_u(\omega)$ only depends on $(\omega_{\alpha})_{\alpha \in \mathcal{I}_F}$. We denote by $\one_F$ the element $\one_F:=(x_{\alpha}) \in \R^{\mathcal{I}}$ with $x_{\alpha} = 1$ for $\alpha \in \mathcal{I}_F$ and $x_{\alpha}=0$ otherwise.

We also assume that there exists $\gamma>0$ such that for every $u \in \mathcal{D}$, $f_u(\omega)$ satisfies one of the following properties for every $\omega \in \Omega$ and $t \ge 0$ such that $\omega - t \cdot \one_F \in \Omega:$
\begin{enumerate}[a.]
\item $f_u$ is monotone increasing and $f_u(\omega) - f_u(\omega - t \cdot \one_F) \ge t \gamma \|u\|^2$.
\item $f_u \in C^1(\Omega)$, $\frac{\partial f_u(\omega)}{\partial \omega_{\alpha}} \ge 0$ $\forall \alpha \in \mathcal{I}_F$ and $\sum_{\alpha \in \mathcal{I}_F} \frac{\partial f_u(\omega)}{\partial \omega_{\alpha}} \ge \gamma \|u\|^2$.
\item $f_u$ is monotone decreasing and $f_u(\omega) - f_u(\omega - t \cdot \one_F) \le - t \gamma \|u\|^2$.
\item $f_u \in C^1(\Omega)$, $\frac{\partial f_u(\omega)}{\partial \omega_{\alpha}} \le 0$ $\forall \alpha \in \mathcal{I}_F$ and $\sum_{\alpha \in \mathcal{I}_F} \frac{\partial f_u(\omega)}{\partial \omega_{\alpha}} \le - \gamma \|u\|^2$.
\end{enumerate}
\end{enumerate}

Here $f_u \in C^1(\Omega)$ if $f_u$ is continuous on $\Omega$ and continuously differentiable on $\mathring{\Omega}$.

Note that Wegner bounds formulated in terms of $s_F(\prob,\varepsilon)$ are useful precisely when $\prob$ has no atoms. We need this assumption for technical reasons in Section~\ref{sub:technical}. Condition 3 typically holds for any $I \subset \R$. If $\mathcal{H}$ is finite dimensional, it is satisfied with $K=\dim \mathcal{H}$ (since there is no eigenvalue with $n>K$). For infinite dimensional spaces, it is satisfied if $H(\omega)$ is bounded from below by a non random operator $H_0$ with a compact resolvent. In this case, $K$ usually depends on $I$. The only ``real'' conditions are 4 and 5. Condition 5 assumes monotonicity and ``diagonal covering'' for $H(\omega)$. We remove the latter restriction in Theorem~\ref{thm:lppvcov}.

In the applications, it will be important that $H(\omega)$ is not supposed to have the form $H(\omega) = H_1 + \sum_{\alpha \in \mathcal{I}} \omega_{\alpha} U_{\alpha}$, and that Hypothesis (C.5.b) is still sufficient to conclude. Let us state the theorem, see Section~\ref{sec:proof3} for a proof, which is based on ideas from \cite{Sto2}.

\begin{thm}      \label{thm:weg-stol}
Suppose that $H(\omega)$ satisfies Hypotheses \emph{(C)} in the interval $I$. Then $\tr [ \chi_I(H(\omega))]$ is $\mathfrak{F}_{\prob}$-measurable, where $\mathfrak{F}_{\prob}$ denotes the $\prob$-completion of $\mathfrak{F}$, and
\[ 
\overline{\expect} \{ \tr [ \chi_I(H(\omega))] \} \le 2 K \cdot |\mathcal{I}_F| \cdot s_F\Big(\prob,\frac{|I|}{\gamma} \Big) \, ,
\]
where $\overline{\expect}$ denotes the extension of $\expect$ to $\mathfrak{F}_{\prob}$ and $s_F(\prob,\varepsilon)$ is defined in \textup{(\ref{eq:sP0})}.
\end{thm}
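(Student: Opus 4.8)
The plan is to prove Theorem~\ref{thm:weg-stol} by reducing the multi-parameter estimate to a one-parameter spectral shift and applying Stollmann's lemma. First I would reduce to the case of Hypothesis (C.5.a): in the $C^1$ cases (b) and (d) one integrates the partial derivatives along the diagonal direction $\one_F$ to recover a monotone bound of the form $f_u(\omega) - f_u(\omega - t\cdot\one_F) \ge t\gamma\|u\|^2$ (resp. $\le$), and case (c) is handled by replacing $\omega_\alpha$ with $-\omega_\alpha$, so that without loss of generality $f_u$ is monotone increasing and shifts by at least $t\gamma$ per unit of $t$ along $\one_F$. The key consequence is a comparison between eigenvalue counting functions: if $N_I(\omega) := \tr[\chi_I(H(\omega))]$ and we write $I = (a,b)$, then the min-max principle together with the form (or operator) monotonicity in (C.5.a) gives $\lambda_n(\omega - t\cdot\one_F) \le \lambda_n(\omega)$ and in fact $\lambda_n(\omega) - \lambda_n(\omega - t\cdot\one_F) \ge t\gamma$ as long as both lie in a region where the relevant eigenvector can be used as a trial function. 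From this one deduces the crucial monotonicity-type inequality
\[
\tr[\chi_{(a,b)}(H(\omega))] \le \tr[\chi_{(a - t\gamma, b)}(H(\omega - t\cdot\one_F))] - \tr[\chi_{(a-t\gamma,a]}(H(\omega))] + (\text{error}),
\]
or more cleanly, a bound on $\expect\{N_I(\omega)\}$ by an integral over $t$ of the "surface" measure of eigenvalues crossing the endpoints, which is the standard spectral-shift mechanism.

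The heart of the argument is Stollmann's lemma (which the preamble reserves a theorem environment for), applied to the single variable $s \in \mathcal{C}$ obtained by translating all coordinates in $\mathcal{I}_F$ simultaneously: for fixed $\hat\omega$ (the coordinates outside $\mathcal{I}_F$, which $f_u$ does not see, plus a reference point), the function $s \mapsto \lambda_n(\omega + s\cdot\one_F)$ is monotone with a uniform lower bound $\gamma$ on its increments, so the set of $s$ for which some $\lambda_n$ lies in $I$ has, for each fixed $n$, measure at most $|I|/\gamma$ in the pushed-forward measure. Summing over the at most $K$ relevant indices $n$ (Hypothesis (C.3)) and controlling the diagonal translation by the one-dimensional marginals $\mu_{\hat\omega_\alpha}$ for $\alpha \in \mathcal{I}_F$, one obtains, using the disintegration set up in the Notations section and the definition \eqref{eq:sP0} of $s_F$, the bound
\[
\expect\{N_I(\omega)\} \le K \sum_{\alpha \in \mathcal{I}_F} \expect_{Y_\alpha}\Big\{\sup_{E}\mu_{\hat\omega_\alpha}\big(E, E + \tfrac{|I|}{\gamma}\big)\Big\} \le K\cdot|\mathcal{I}_F|\cdot s_F\big(\prob, \tfrac{|I|}{\gamma}\big);
\]
the extra factor of $2$ in the statement absorbs the boundary terms at the two endpoints $a$ and $b$ of the interval (each endpoint contributing one crossing count), and is also where one must be a little careful about open versus closed intervals and about the fact that $\chi_I$ counts eigenvalues that may sit exactly at a translate of an endpoint.

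For measurability: $N_I(\omega)$ need not be $\mathfrak{F}$-measurable a priori because $\chi_I$ of a self-adjoint operator is only upper/lower semicontinuous in the strong resolvent sense at the endpoints of $I$, but the monotonicity in $\omega$ (along $\one_F$ and, via (C.5), coordinatewise after the reduction) makes $N_I$ a monotone function of finitely many real coordinates, hence measurable with respect to the $\prob$-completion $\mathfrak{F}_\prob$; this is why the statement uses $\overline{\expect}$ rather than $\expect$. The main obstacle I anticipate is the bookkeeping at the interval endpoints — making the spectral-shift inequality above rigorous requires tracking eigenvalues that enter $I$ through $a$ versus those that leave through $b$ under the translation $\omega \mapsto \omega - t\cdot\one_F$, and ensuring the "error" terms from eigenvalues that both enter and exit within a single step are nonnegative and can be discarded. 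The reduction from the $C^1$ hypotheses (b), (d) to the monotone shift hypotheses (a), (c) is routine (fundamental theorem of calculus along a line segment inside $\mathring\Omega$, plus a limiting argument to reach the boundary using continuity of $f_u$ on $\Omega$), and the disintegration manipulation is exactly the one prepared in the Notations; so apart from the endpoint care, the proof is a fairly direct application of Stollmann's lemma model by model.
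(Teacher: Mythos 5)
Your overall strategy is the paper's: by min--max, each eigenvalue $\lambda_n(\omega)$ is monotone in $\omega$ and shifts by at least $t\gamma$ under $\omega\mapsto\omega-t\cdot\one_F$; the $C^1$ cases (C.5.b), (C.5.d) reduce to (C.5.a), (C.5.c) by integrating along a segment; one sums $\overline{\prob}\{\lambda_n(\omega)\in I\}$ over the $n\le K$ relevant indices; and measurability is handled in the completion $\mathfrak{F}_{\prob}$ (in the paper via Lemma~\ref{lem:monotone}, which itself needs the no-atoms hypothesis -- note it is each $\lambda_n$, not the counting function $N_I$, that is monotone; $N_I$ is measurable only because it is a finite sum of $\chi_I(\lambda_n(\omega))$). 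However, two steps of your plan have genuine gaps. First, the way you invoke Stollmann's lemma -- ``applied to the single variable $s$'' with ``the set of $s$ for which some $\lambda_n$ lies in $I$'' having small measure ``in the pushed-forward measure'' -- does not make sense as stated: $\prob$ is a measure on the product $\mathcal{C}^{\mathcal{I}}$ and admits no disintegration along the diagonal direction $\one_F$, so there is no one-dimensional measure on the $s$-line to which a Chebyshev-type argument can be applied. The actual mechanism (paper's Lemma~\ref{lem:auxi}, following Stollmann) is a telescoping over the coordinates of $\mathcal{I}_F$: one shifts one coordinate $\alpha_j$ at a time, shows that the relevant section in the variable $\omega_{\alpha_j}$ is contained in an interval of length $\eta=|I|/\gamma$, and controls it by the conditional measure $\mu_{\hat{\omega}_{\alpha_j}}$; this is precisely where the factor $|\mathcal{I}_F|$ comes from, and only one endpoint of $I$ is ever used.

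Second, your explanation of the factor $2$ is wrong, and the issue it actually accounts for is absent from your plan. Hypothesis (C.5.a) only gives the shift inequality when $\omega-t\cdot\one_F\in\Omega$; since $\mathcal{C}$ may be bounded below, the shifted configuration can exit the box $\Omega=\mathcal{C}^{\mathcal{I}}$, and the event $\{\omega-\eta\cdot\one_F\notin\Omega\}$ must be estimated separately. In the paper (Lemma~\ref{lem:sto}) this event means some $\omega_{\alpha}$, $\alpha\in\mathcal{I}_F$, lies within $\eta$ of $\inf\mathcal{C}$, which is again bounded by $|\mathcal{I}_F|\cdot s_F(\prob,\eta)$; the constant is $\delta=2$ exactly because of this second term, and it would be $1$ if $\inf\mathcal{C}=-\infty$. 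Your claim that the $2$ ``absorbs the boundary terms at the two endpoints $a$ and $b$'' misdiagnoses this: the Stollmann argument uses a single endpoint of $I$, and without the separate treatment of $\{\omega-\eta\cdot\one_F\notin\Omega\}$ your estimate is incomplete on the event where the hypothesis cannot be invoked. The spectral-shift/counting-function inequality with ``error'' terms that you sketch is unnecessary once each $\lambda_n$ is treated individually, but it is not the source of the difficulty; the missing ingredients are the coordinatewise telescoping and the boundary-of-$\Omega$ event.
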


Note that we have $s_F(\prob,\varepsilon)$ in the RHS; the quantity $s_F(\overline{\prob},\varepsilon)$ has not been defined. For the applications, classic arguments from \cite{KM82} show that $\tr [ \chi_I(H(\omega))]$ is actually $\mathfrak{F}$-measurable, so that $\overline{\expect}$ reduces to $\expect$ in the LHS.

For a random Schr\"odinger operator restricted to a cube $\Lambda$, the constant $K$ comes e.g. from a Weyl law and takes the form $C \cdot |\Lambda|$. The term $|\mathcal{I}_F|$ measures the contribution of the random potential in $\Lambda$, and will be approximately $|\Lambda|$ for standard single-particle systems. Hence, the upper bound is not linear in $|\Lambda|$.

There are mainly two applications for Wegner estimates: the first to prove localization via multiscale analysis, the second to study the continuity of the integrated density of states (IDS) of $H(\omega)$. For the first purpose, Theorem~\ref{thm:weg-stol} is satisfactory because the term $s_F(\prob,\frac{|I|}{\gamma})$ will be very small assuming $\prob = \mathop \otimes \mu$ with $\mu$ (log-)H\"older continuous, so it will completely outweight the terms $K$ and $|\mathcal{I}_F|$. For the study of the IDS however, this theorem is not satisfactory.

It seems the ``bad'' term here is $K$. Indeed, for discrete models with sparse potentials supported in a set $G$, one expects $|\Lambda \cap G|$ in the upper bound (see Section~\ref{sub:discret}), and this is precisely the term $|\mathcal{I}_F|$ in this case, not $K$ which arguably will be $|\Lambda|$.

Theorem~\ref{thm:1} also has the advantage of avoiding the diagonal cover in Hypothesis (C.5) by means of the uncertainty principle. As we show in Section~\ref{sec:proof3}, there is a related counterpart of this idea for Theorem~\ref{thm:weg-stol}. Namely, if one can prove that the eigenvalues $\lambda_n(\omega)$ of $H(\omega)$ are monotone increasing and satisfy
\[
\big( \lambda_n(\omega) - \lambda_n(\omega - t \cdot \one_F) \big) \chi_I(\lambda_n(\omega)) \ge t \gamma \cdot \chi_I(\lambda_n(\omega)),
\]
then Theorem~\ref{thm:weg-stol} is still valid if we only assume Hypotheses (C.1) to (C.3). 

We finally give our last abstract theorem, which is probably the most original result of this section.

\begin{thm}     \label{thm:lppvcov}
Suppose that $H(\omega)$ satisfies  Hypotheses \emph{(C.1)} to \emph{(C.4)} in the interval $I=(E_1,E_2)$, where $E_2<0$. Assume moreover that $\Omega = [q_-,q_+]^{\mathcal{I}}$, fix $q>q_+$, and suppose there exists $\zeta \neq 0$ such that for any $u \in \mathcal{D}$,
\begin{equation}
f_u(\omega) = a(u) - \sum_{\alpha \in \mathcal{I}_F} (q-\omega_{\alpha})^{\zeta} b_{\alpha}(u) \label{eq:lppv}
\end{equation}
for some finite set $\mathcal{I}_F \subseteq \mathcal{I}$ and some constants $a(u)\ge0$ and $b_{\alpha}(u) \ge 0$. Then $\tr[ \chi_I(H(\omega))]$ is $\mathfrak{F}_{\prob}$-measurable and
\[
\overline{\expect} \{ \tr[ \chi_I(H(\omega))] \} \le 2K \cdot | \mathcal{I}_F | \cdot s_F\left(\prob, (q-q_-)\Big((1+\frac{|I|}{|E_2|})^{\frac{1}{|\zeta|}} - 1 \Big) \right).
\]
In particular, if $\zeta=\pm 1$, we have
\[
\overline{\expect} \{ \tr[ \chi_I(H(\omega))] \} \le 2K \cdot | \mathcal{I}_F | \cdot s_F\Big(\prob, \frac{q-q_-}{|E_2|} |I| \Big).
\]
\end{thm}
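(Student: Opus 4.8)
The plan is to reduce to (the proof of) Theorem~\ref{thm:weg-stol}. As noted right after that theorem, its proof uses Hypotheses (C.1)--(C.3) together with nothing more than: the eigenvalues $\lambda_n(\omega)$ are monotone, and there is a shift length $\tau^*>0$ such that decreasing (resp.\ increasing, according to the direction of monotonicity) the couplings in $\mathcal{I}_F$ by $\tau^*$ sends every $\lambda_n(\omega)$ lying in $I$ to a value $\le E_1=\inf I$; the output is then $\overline{\expect}\{\tr[\chi_I(H(\omega))]\}\le 2K\,|\mathcal{I}_F|\,s_F(\prob,\tau^*)$, with $\mathfrak{F}_{\prob}$-measurability established exactly as in Theorem~\ref{thm:weg-stol}. (The linear inequality $\lambda_n(\omega)-\lambda_n(\omega-t\cdot\one_F)\ge t\gamma$ on $\chi_I$ stated there is merely a convenient device producing such a $\tau^*$, namely $\tau^*=|I|/\gamma$; the Stollmann-type covering argument only uses its consequence at the single value $t=\tau^*$, and it also takes care of the fact that $\omega\mp\tau^*\cdot\one_F$ may leave $\Omega$.) Thus it suffices to produce $\tau^*\le\delta:=(q-q_-)\big((1+|I|/|E_2|)^{1/|\zeta|}-1\big)$ and then invoke that $\varepsilon\mapsto s_F(\prob,\varepsilon)$ is nondecreasing.

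Assume first $\zeta>0$. From \eqref{eq:lppv} and $b_\alpha(u)\ge0$, $f_u$ is monotone increasing, so by the min-max principle together with (C.2), (C.4) the $\lambda_n(\omega)$ are monotone increasing. Fix $\tau\ge0$ with $\omega-\tau\cdot\one_F\in\Omega$. For $\alpha\in\mathcal{I}_F$ we have $0<q-\omega_\alpha\le q-q_-$, whence $(q-\omega_\alpha+\tau)^\zeta=(q-\omega_\alpha)^\zeta\big(1+\tfrac{\tau}{q-\omega_\alpha}\big)^\zeta\ge\beta\,(q-\omega_\alpha)^\zeta$ with $\beta:=\big(1+\tfrac{\tau}{q-q_-}\big)^\zeta\ge1$. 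Summing against $b_\alpha(u)\ge0$ and using $\sum_{\alpha\in\mathcal{I}_F}(q-\omega_\alpha)^\zeta b_\alpha(u)=a(u)-f_u(\omega)$ gives $f_u(\omega-\tau\cdot\one_F)=a(u)-\sum_{\alpha}(q-\omega_\alpha+\tau)^\zeta b_\alpha(u)\le(1-\beta)a(u)+\beta f_u(\omega)\le\beta f_u(\omega)$, the last step since $a(u)\ge0$ and $\beta\ge1$. Now take $n$ and $\omega$ with $\lambda_n(\omega)\in I$, and let $V$ be any $n$-dimensional subspace of $\ran\chi_{(-\infty,\lambda_n(\omega)]}(H(\omega))\subseteq\mathcal{D}$. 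For a unit vector $u\in V$ one has $f_u(\omega)\le\lambda_n(\omega)<E_2<0$, so
\[
f_u(\omega-\tau\cdot\one_F)\ \le\ \beta f_u(\omega)\ \le\ \beta\,\lambda_n(\omega)\ <\ \beta E_2,
\]
and by min-max $\lambda_n(\omega-\tau\cdot\one_F)\le\beta E_2$. Choosing $\tau=\tau^*$ so that $\beta=1+|I|/|E_2|=|E_1|/|E_2|$, i.e.\ $\tau^*=(q-q_-)\big((1+|I|/|E_2|)^{1/\zeta}-1\big)=\delta$, we get $\beta E_2=E_1$, hence $\lambda_n(\omega-\tau^*\cdot\one_F)\le E_1\notin I$, as required.

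For $\zeta<0$ the situation is symmetric: $f_u$ and the $\lambda_n$ are monotone decreasing and one shifts upward. Using again $0<q-\omega_\alpha\le q-q_-$, and now that $x\mapsto(1-x)^\zeta$ is increasing on $[0,1)$, the same computation yields, for $\omega+\tau\cdot\one_F\in\Omega$ and $\tau<q-q_-$, that $f_u(\omega+\tau\cdot\one_F)\le\beta f_u(\omega)$ with $\beta:=\big(1-\tfrac{\tau}{q-q_-}\big)^\zeta\ge1$. Choosing $\tau=\tau^*$ with $\beta=|E_1|/|E_2|$ gives $\tau^*=(q-q_-)\big(1-(1+|I|/|E_2|)^{-1/|\zeta|}\big)$; here $\tau^*<q-q_-$ holds automatically, and $\tau^*\le\delta$ because $1-y^{-1}\le y-1$ for $y\ge1$ (with $y=(1+|I|/|E_2|)^{1/|\zeta|}$). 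Exactly as before one obtains $\lambda_n(\omega+\tau^*\cdot\one_F)\le E_1\notin I$ whenever $\lambda_n(\omega)\in I$. In either case, invoking Theorem~\ref{thm:weg-stol} as explained,
\[
\overline{\expect}\{\tr[\chi_I(H(\omega))]\}\ \le\ 2K\,|\mathcal{I}_F|\,s_F(\prob,\tau^*)\ \le\ 2K\,|\mathcal{I}_F|\,s_F(\prob,\delta),
\]
which is the claim; for $\zeta=\pm1$ one has $\delta=(q-q_-)|I|/|E_2|$, giving the stated special case.

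The step I expect to need the most care is the first one: one must verify that the covering argument in the proof of Theorem~\ref{thm:weg-stol} really delivers its conclusion from just ``one shift by $\tau^*$ ejects eigenvalues from $I$'' — the linear bound on $\lambda_n(\omega)-\lambda_n(\omega-t\cdot\one_F)$ quoted there is strictly too weak to reach the constant $\delta$ when $|\zeta|>1$, so one genuinely needs the sharper, nonlinear gap estimate above evaluated at $t=\tau^*$. Everything else is elementary; the remaining small points are the signs in the $\zeta<0$ case and keeping $\tau<q-q_-$ (so $\beta$ is defined) and $q-\omega_\alpha-\tau>0$ for $\alpha\in\mathcal{I}_F$ (automatic once $\omega+\tau\cdot\one_F\in\Omega$).
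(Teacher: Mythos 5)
Your proof is correct, but it follows a genuinely different route from the paper. The paper's proof performs the logarithmic change of variables $v_{\alpha}=\ln(q-\omega_{\alpha})$, controls the pushforward measure in Lemma~\ref{lem:push}, and then applies a Stollmann-type lemma with an exponential gap condition (Lemma~\ref{Stol-mod}) to the transformed eigenvalues $\nu_n(v)=\mu_n(T_2(v))$; the factor $(q-q_-)\big((1+|I|/|E_2|)^{1/|\zeta|}-1\big)$ arises from converting the shift length back through the exponential. You instead stay in the original variables: you derive the multiplicative bound $f_u(\omega\mp\tau\cdot\one_F)\le\beta f_u(\omega)$ directly from \eqref{eq:lppv} (using $a(u)\ge 0$, $b_{\alpha}(u)\ge 0$ and $q-\omega_{\alpha}\le q-q_-$), pass it to the eigenvalues by min-max exactly as the paper does for $r_u$, and observe that on $I$ (where $\lambda_n<E_2<0$) one single shift of length $\tau^*$ ejects the eigenvalue below $E_1$. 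Your key structural claim — that the covering argument behind Theorem~\ref{thm:weg-stol} uses monotonicity plus the ejection property at the one value $t=\tau^*$ only, and already handles the event $\omega\mp\tau^*\cdot\one_F\notin\Omega$ — is accurate: Lemma~\ref{lem:auxi} needs only monotonicity and finite dependence, and the proof of Lemma~\ref{lem:sto} (equally Lemma~\ref{Stol-mod}) invokes the gap hypothesis solely at $t=\eta$, while measurability comes from Lemma~\ref{lem:monotone} as in the paper. The one caveat is presentational: you cannot cite Theorem~\ref{thm:weg-stol} as a black box, since Hypothesis (C.5) fails here; you must rerun its proof (trace formula via (C.2)--(C.3), then Lemma~\ref{lem:auxi} plus the boundary term giving the factor $2$), which you acknowledge and which is indeed routine. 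What the two approaches buy: the paper's version isolates two reusable statements (the pushforward estimate and the exponential-gap Stollmann lemma, stated for general monotone $\varphi$), whereas yours is shorter, avoids the change of variables entirely, and for $\zeta<0$ even yields the slightly smaller shift $\tau^*=(q-q_-)\big(1-(1+|I|/|E_2|)^{-1/|\zeta|}\big)\le\delta$, hence the same stated bound by monotonicity of $\varepsilon\mapsto s_F(\prob,\varepsilon)$.
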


The proof is given in Section~\ref{sec:proof4}. It uses an idea from \cite{LPPV}, who roughly considered the case $a(u) \equiv 0$ and $\prob = \mathop \otimes \mu_{\alpha}$ with $\mu_{\alpha} = \rho_{\alpha}(\lambda) \dd \lambda$. Both hypotheses were important to their proof, and we overcome this difficulty by generalizing ideas from \cite{Sto2}. Note however that we need (C.4), the argument of \cite{LPPV} holds under a weaker assumption on $D(H(\omega))$.

Of course, the main advantage here in comparison with Theorem~\ref{thm:weg-stol} is that we only suppose $b_{\alpha}(u) \ge 0$. Theorem~\ref{thm:weg-stol} would need a condition like $\sum_{\alpha \in \mathcal{I}_F} b_{\alpha}(u) \ge \gamma \|u\|^2$ for all $u$. The price to pay is that the bound only holds for specific intervals.

In the applications we shall only need the case $\zeta=1$. However the greater generality does not require additional effort, and we believe it could be useful for models not considered here. For example, the case $\zeta=-2$ appears in the model of \cite{LPPV}.

\section{Applications}             \label{sec:app4}
\subsection{Discrete multi-particle models}      \label{sub:discret}
Consider the Hilbert space $\ell^2(\Z^{nd})$, where $n \in \N^{\ast}$ represents the number of particles living in $\Z^d$. Let $\mathcal{B} \subseteq \R$ be a Borel set and consider a probability space $(\Omega,\prob)$, where $\Omega = \mathcal{B}^{\Z^d}$. Given $\omega = (\omega_{\alpha}) \in \Omega$, let 
\[ 
H(\omega) := H_0 + V^{\omega}, \qquad H_0 := -\Delta + V_0 \, ,
\]
where $-\Delta$ is the discrete Laplace operator on $\ell^2(\Z^{nd})$, $V_0$ is a real non-random potential (possibly an interaction) and for $\mathbf{x}=(x_1,\ldots,x_n)\in (\Z^d)^n \equiv \Z^{nd}$,
\[ 
V^{\omega}(\mathbf{x}) = \sum_{1 \le i \le n} V^{\omega}(x_i) = \sum_{1 \le i \le n} \sum_{\alpha \in \mathbb{Z}^d} \omega_{\alpha}u_{\alpha}(x_i) \, .
\]
Since $\prob$ is arbitrary, the $(\omega_{\alpha})$ are allowed to be correlated. We assume $V_0$ is bounded and the $u_{\alpha}:\Z^d \to \R$ satisfy $0 \le u_{\alpha} \le C_u$ for some uniform $C_u\ge0$. We also assume the $u_{\alpha}$ are compactly supported, that is, if for $j \in \Z^d$, $\mathbf{j} \in \Z^{nd}$ and $L \in \N$ we define the cubes
\[
\Lambda^{(1)}_L(j) := \{ x \in \Z^d : \| x - j \|_{\infty} \le L \}, \quad \text{and} \quad \Lambda^{(n)}_L(\mathbf{j}) := \{ \mathbf{x} \in \Z^{nd} : \| \mathbf{x} - \mathbf{j} \|_{\infty} \le L \},
\]
then we assume there exists an $R\ge0$ such that $u_{\alpha}(j) = 0$ for all $j \notin \Lambda^{(1)}_R(\alpha)$.

As $\supp u_{\alpha}$ is compact, we may interchange the sums and write
\[
V^{\omega} = \sum_{\alpha \in \Z^d} \omega_{\alpha} U_{\alpha}, \quad \text{with} \quad U_{\alpha}(x_1,\ldots,x_n) = \sum_{1\le i \le n} u_{\alpha}(x_i). 
\]

\textbf{Example.} A simple and interesting case is when $n=1$, a non-empty set $G \subset \Z^d$ is given, and $u_{\alpha} = \delta_{\alpha}$ inside $G$ and $u_{\alpha} \equiv 0$ outside $G$, where $\delta_{\alpha}$ is the characteristic function of $\{\alpha\}$. In this case,
\[
H(\omega) = H_0 + \sum_{\alpha \in G} \omega_{\alpha} \delta_{\alpha} \, .
\]

For instance, we may take $G = \Z^d \setminus \{0\}$, which gives rise to a non-covering situation. More generally, $G$ could be a \emph{Delone set} (i.e. $\exists K \ge 0$ such that $\forall j \in \Z^d$, the cube $\Lambda^{(1)}_K(j)$ contains at least one point of $G$) or a subspace $\Z^{d_1} \times \{0\}$ of $\Z^d$, in which case one speaks of \emph{surface potentials}.

\textbf{Discussion of the results.}
\begin{enumerate}[\textbullet]
\item In the case of covering, i.e. $u_{\alpha} \ge c \cdot \delta_{\alpha}$ for all $\alpha$, we have an optimal Wegner bound in any interval $I$. This extends \cite[Theorem 2.1]{K} and \cite[Theorem 2.3]{KN} (because we neither assume that $u_{\alpha} = \delta_{\alpha}$ nor that $\prob= \mathop \otimes \mu$ with $\mu=\rho(t) \dd t$) and improves \cite[Theorem 1]{CS} (because our bound is linear in $|\Lambda|$). Note that the arguments of \cite{KN} actually allow for $\prob$ as general as ours. The multiscale analysis also requires two-volume Wegner bounds (cf. \cite[Corollary 2.4]{KN}); we prove these in \cite{Sab0}.
\item If we have no covering and $\Omega = [q_-,q_+]^{\Z^d}$ with $q_-< 0$, i.e. the perturbation can be negative, we obtain Wegner bounds below $E_0 := \inf \sigma(H_0)$. This extends \cite[Theorems 8,13]{KV}, first because we make no regularity assumption on $\prob$, second because our bound is optimal and valid for multi-particles. Our result also extends the optimal bound \cite[Theorem 4.1]{Kitagaki10} because we allow for general $u_{\alpha}$ and $n$.

But is there any spectrum below $E_0$? We show in Section~\ref{sec:spectra} that if $n=1$, if $G \subseteq \Z^d$ contains a half-space, if $V_0$ is periodic and if $H(\omega) = H_0 + \sum_{\alpha \in G} \omega_{\alpha} \delta_{\alpha}$, then $H(\omega)$ has a spectral interval below $E_0$ almost surely, provided that $\prob = \mathop \otimes \mu$ and $\supp \mu \supseteq [a,b]$, $a<b\le0$. This illustrates that our bound is indeed non-trivial\footnote{Note that a single nonzero $\omega_{\alpha} u_{\alpha}$ actually suffices to create a spectral point below $E_0$ if $q_+<q_{\ast}$, $q_{\ast}=q_{\ast}(\|H_0\|,u_{\alpha})$. So our Wegner bound is also useful when the perturbation is highly negative and the spectral bottom is not an isolated point. This is likely to be the case if the operator is ergodic, e.g. $n=1$ and $H(\omega) = H_0 + \sum_{\alpha \in G} \omega_{\alpha} \delta_{\alpha}$, with $G=(M\Z)^d$ and $V_0$ $M$-periodic.}. The advantage here compared to first item is, of course, the fact that we allow $G \neq \Z^d$.
\item If we have no covering and $\Omega = [q_-,q_+]^{\Z^d}$ with $0 \le q_-$, i.e. the perturbation is positive, we obtain optimal Wegner bounds below $E_q:= \inf \sigma(H_0+qW)$ for any $q>q_-$, where $W := \sum_{\alpha} U_{\alpha}$. But again, is there any spectrum below $E_q$? 

The recent preprints \cite{EK} and \cite{R} have the advantage of giving a complete Wegner bound for some operators in this situation. Namely, the paper \cite{EK} assumes that $n=1$, $H(\omega) = H_0 + \sum_{\alpha \in G} \omega_{\alpha} \delta_{\alpha}$, where $G \subseteq \Z^d$ is a Delone set, $\prob = \mathop \otimes \mu_{\alpha}$ and $\supp \mu_{\alpha} \subset [0,M]$. Under some condition (cf. \cite[Eq.(1.13)]{EK}), the authors establish Wegner bounds for intervals near $E_0$, and in contrast to our result, they show that these intervals contain some spectrum of $H(\omega)$ almost surely. A different proof for this Delone Wegner bound can be found in \cite{R}, in the special case where $V_0 \equiv 0$.

To conclude, let us mention that we can actually use the results of \cite{EK} to illustrate that our Wegner bound for positive perturbations is indeed interesting. Namely, if we take $t:=q$ sufficiently large, then \cite[Theorem 1.3]{EK} combined with \cite[Proposition 1.5]{EK} assert that the above Delone operator has $E_q>E_0$ and some spectrum in $[E_0,E_q)$ almost surely. Our Wegner bound is thus nontrivial for $I \subseteq [E_0,E_q)$.
\end{enumerate}

\textbf{Boundary conditions.} Let $\Lambda:=\Lambda^{(n)}_L(\mathbf{x}) \subset \Z^{nd}$ be a cube. The \emph{simple boundary conditions} are obtained by restricting the matrix of $H(\omega)$ to $\Lambda$, i.e. if $(e_{\mathbf{j}})_{\mathbf{j} \in \Z^{nd}}$ is the standard basis of $\ell^2(\Z^{nd})$, then $H^{\mathrm{S}}_{\Lambda}(\omega)(\mathbf{i},\mathbf{j}) = \langle H(\omega) e_{\mathbf{i}},e_{\mathbf{j}} \rangle$ if both $\mathbf{i},\mathbf{j} \in \Lambda$ and $H^{\mathrm{S}}_{\Lambda}(\omega)(\mathbf{i},\mathbf{j})=0$ otherwise. Next, there are the \emph{Dirichlet} $H^{\mathrm{D}}_{\Lambda}(\omega)$ and \emph{Neumann} $H^{\mathrm{N}}_{\Lambda}(\omega)$ boundary conditions, which were introduced in \cite{Si84} to provide analogs for the lattice of the Dirichlet-Neumann bracketing; see \cite[Section 5.2]{K2} for details. The only identity we need is \cite[Eq. (5.42)]{K2}, which asserts that if $H=-\Delta+V$, then $H \le H_{\Lambda}^{\mathrm{D}} \oplus H_{\Lambda^c}^{\mathrm{D}}$. In particular, if $E_0 := \inf \sigma(H)$, $E^{\Lambda,\mathrm{D}}_0 := \inf \sigma(H_{\Lambda}^{\mathrm{D}})$ and $g \in \ell^2(\Lambda^c)$ is identically zero, then
\[
E^{\Lambda,\mathrm{D}}_0 = \inf_{f \in \ell^2(\Lambda),\|f\|=1} \langle H_{\Lambda}^{\mathrm{D}}f,f \rangle + \langle H_{\Lambda^c}^{\mathrm{D}}g,g \rangle \ge \inf_{\varphi \in \ell^2(\Z^{nd}),\|\varphi\|=1} \langle H \varphi, \varphi \rangle = E_0 \, ,
\]
i.e. Dirichlet boundary conditions \emph{shift the spectrum up}.

\textbf{The result.} Let $\mathbf{x}=(x_1,\ldots,x_n)\in(\Z^d)^n$ and $\Lambda^{(n)}_L(\mathbf{x}) \subset \Z^{nd}$. Consider the Hilbert space $\mathcal{H}:=\ell^2\big(\Lambda^{(n)}_L(\mathbf{x})\big)$ with standard basis $(e_{\mathbf{j}})_{\mathbf{j} \in \Lambda^{(n)}_L(\mathbf{x})}$ and the operator $H^{\bullet}_{\Lambda^{(n)}_L(\mathbf{x})}(\omega)$, where $\bullet = \mathrm{S}$, $\mathrm{D}$ or $\mathrm{N}$. Let
\[
W := \sum_{\alpha \in \Z^d} U_{\alpha}, \quad \text{and} \quad
W_{\Lambda_L^{(n)}(\mathbf{x})} := \sum_{\alpha \in \mathcal{I}_F} U_{\alpha}, \quad \text{where } \mathcal{I}_F := \bigcup_{i=1}^n \Lambda_{L+R}^{(1)}(x_i) \, .
\]

We first show in Theorem~\ref{thm:multi-cont} that uncertainty principles imply Wegner bounds, then we give in Lemma~\ref{lem:uncertain} concrete cases in which the uncertainty principle holds.

\begin{thm}      \label{thm:multi-cont}
Let $\Lambda := \Lambda_L^{(n)}(\mathbf{x})$ be a cube and suppose $H^{\bullet}_{\Lambda}(\omega)$ satisfies
\begin{equation}
\chi_I(H^{\bullet}_{\Lambda}(\omega)) W_{\Lambda} \chi_I(H^{\bullet}_{\Lambda}(\omega)) \ge \gamma \chi_I(H^{\bullet}_{\Lambda}(\omega)) \qquad \prob\text{-a.s.} \label{eq:ap1}
\end{equation}
in an interval $I$, for some $\gamma >0$. Then
\[ 
\expect \{ \tr[ \chi_I(H^{\bullet}_{\Lambda}(\omega)) ] \} \le C_W \cdot \big| \tilde{\Lambda}^{(n)}_L \big| \cdot s_F(\prob,|I|), 
\]
where $C_W = 6 n^4 \gamma^{-2} C^2_u (2R+1)^{2d}$ and $\tilde{\Lambda}^{(n)}_L := \{ \mathbf{j} \in \Lambda^{(n)}_L(\mathbf{x}) : U_{\alpha} e_{\mathbf{j}} \neq 0 \text{ for some } \alpha \in \mathcal{I}_F \}$.
\end{thm}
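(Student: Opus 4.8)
The plan is to obtain Theorem~\ref{thm:multi-cont} as a direct instance of the abstract Theorem~\ref{thm:1}, applied on the finite-dimensional Hilbert space $\mathcal{H} := \ell^2\big(\Lambda_L^{(n)}(\mathbf{x})\big)$ with index set $\mathcal{I} = \Z^d$, finite subset $\mathcal{I}_F = \bigcup_{i=1}^n \Lambda_{L+R}^{(1)}(x_i)$, orthonormal basis $\{e_{\mathbf{j}}\}_{\mathbf{j} \in \Lambda}$, and $W := W_\Lambda = \sum_{\alpha \in \mathcal{I}_F} U_\alpha|_\Lambda$, where $U_\alpha|_\Lambda$ denotes the restriction to $\mathcal{H}$ of multiplication by $U_\alpha(\mathbf{x}) = \sum_{i=1}^n u_\alpha(x_i)$. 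One may assume $I$ bounded: otherwise exhaust $I$ by the bounded subintervals $I \cap (-m,m)$ and pass to the limit, using that \eqref{eq:ap1} descends to subintervals (multiply on both sides by $\chi_{I'}$, $I'\subseteq I$) and that $s_F(\prob,\cdot)$ is nondecreasing, while the traces converge since $\dim\mathcal{H}<\infty$.

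First I would verify Hypotheses (A) and (B) for $H^\bullet_\Lambda(\omega)$. Conditions (A.1) and (A.2) are immediate. For (A.4) and (B): the boundary condition $\bullet \in \{\mathrm{S},\mathrm{D},\mathrm{N}\}$ only alters the discrete Laplacian, whereas the random potential $V^\omega$ is a diagonal multiplication operator; hence $H^\bullet_\Lambda(\omega) = H_1 + \sum_{\alpha \in \mathcal{I}_F} \omega_\alpha\, U_\alpha|_\Lambda$ with $H_1 := (-\Delta)^\bullet_\Lambda + V_0|_\Lambda$ self-adjoint, each $U_\alpha|_\Lambda$ self-adjoint and $\ge 0$ (multiplication by the nonnegative function $U_\alpha$, since $u_\alpha \ge 0$), and $\|U_\alpha|_\Lambda\| \le n C_u =: C_U$ because $0 \le u_\alpha \le C_u$. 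Condition (A.3), the uncertainty principle with constant $\gamma$, is precisely the standing hypothesis \eqref{eq:ap1}.

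It then remains to pin down the combinatorial constants of (A.5). For $\mathbf{j} = (j_1,\dots,j_n) \in \Lambda$ one has $U_\alpha e_{\mathbf{j}} \neq 0$ iff $u_\alpha(j_i) \neq 0$ for some $i$, which by the support condition $\supp u_\alpha \subseteq \Lambda_R^{(1)}(\alpha)$ forces $\alpha \in \bigcup_{i=1}^n \Lambda_R^{(1)}(j_i)$; hence $|\mathcal{I}_{\mathbf{j}}| \le n(2R+1)^d$ and so $C_{\textup{fin}} \le n(2R+1)^d$. The same estimate combined with $\|j_i - x_i\|_\infty \le L$ shows that $U_\alpha|_\Lambda \neq 0$ implies $\alpha \in \mathcal{I}_F$, so $W_\Lambda$ genuinely captures all the randomness seen by $\Lambda$ and the sum defining $W$ in (A.4) is over the right set. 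Finally, $J_{\textup{eff}} = \tilde{\Lambda}^{(n)}_L$ by definition.

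With these verifications done, Theorem~\ref{thm:1} gives the $\mathfrak{F}$-measurability of $\tr[\chi_I(H^\bullet_\Lambda(\omega))]$ together with $\expect\{\tr[\chi_I(H^\bullet_\Lambda(\omega))]\} \le 6\gamma^{-2} C_U^2 C_{\textup{fin}}^2 \cdot |J_{\textup{eff}}| \cdot s_F(\prob,|I|)$; substituting the bounds $C_U \le n C_u$ and $C_{\textup{fin}} \le n(2R+1)^d$ (which only enlarges the constant) and $|J_{\textup{eff}}| = |\tilde{\Lambda}^{(n)}_L|$ yields $C_W = 6 n^4 \gamma^{-2} C_u^2 (2R+1)^{2d}$, as claimed. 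There is essentially no obstacle here; the only points requiring a little care are that the choice of boundary condition does not disturb the additive diagonal form demanded by Hypothesis (B) — so the argument runs uniformly for $\bullet = \mathrm{S},\mathrm{D},\mathrm{N}$ — and that the crude counting producing $C_{\textup{fin}}$ and $C_U$ carries the correct powers of $n$ and $(2R+1)^d$.
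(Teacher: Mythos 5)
Your proposal is correct and follows essentially the same route as the paper: both verify Hypotheses (A) and (B) for $H^{\bullet}_{\Lambda}(\omega) = H_1 + \sum_{\alpha \in \mathcal{I}_F} \omega_{\alpha} U_{\alpha}$ with $C_U \le nC_u$, $C_{\textup{fin}} \le n(2R+1)^d$, $J_{\textup{eff}} = \tilde{\Lambda}^{(n)}_L$, and then invoke Theorem~\ref{thm:1}. Your extra remarks (reduction to bounded $I$, checking that all $U_{\alpha}$ acting nontrivially on $\Lambda$ lie in $\mathcal{I}_F$) are harmless elaborations of points the paper leaves implicit.
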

If $u_{\alpha}=c_{\alpha} \delta_{\alpha}$ with $c_{\alpha} \ge 0$, then $C_u=\sup_{\alpha \in \mathcal{I}} c_{\alpha}$ and $R=0$. If, moreover $n=1$ and $H(\omega) = H_0 + \sum_{\alpha \in G} \omega_{\alpha} \delta_{\alpha}$, then $\tilde{\Lambda}_L^{(1)} = \Lambda_L^{(1)}(x) \cap G$.
\begin{proof}
$H^{\bullet}_{\Lambda}(\omega)$ is a self-adjoint operator given by $H^{\bullet}_{\Lambda}(\omega) = H_1 + \sum_{\alpha \in \mathcal{I}_F} \omega_{\alpha} U_{\alpha}$, with $H_1 = H^{\bullet}_{0,\Lambda}$ self-adjoint. Moreover, $U_{\alpha} \ge 0$, $\| U_{\alpha} \| \le C_U := nC_u$ and $\mathcal{I}_{\mathbf{j}} := \{ \alpha : U_{\alpha} e_{\mathbf{j}} \neq 0 \} \subseteq \bigcup_{k=1}^n \Lambda^{(1)}_R(j_k)$, hence $C_{\text{fin}} := \max |\mathcal{I}_{\mathbf{j}}| \le n (2R+1)^d$. The claim now follows from Theorem~\ref{thm:1}.
\end{proof}

\begin{lem}    \label{lem:uncertain}
Fix $\eta>0$. The uncertainty principle \textup{(\ref{eq:ap1})} holds in any interval
\begin{enumerate}[\rm (1)]
\item $I \subset \R$, if \,$\exists c>0$ such that $u_{\alpha} \ge c \cdot \delta_{\alpha}$ for all $\alpha$, with $\gamma = nc$.
\item $I \subset (- \infty, E_q - \eta]$, if $\Omega = [q_-,q_+]^{\Z^d}$, $q>q_-$ and $E_q := \inf \sigma(H_0 + q W)$, for the Dirichlet restriction $H_{\Lambda}^{\mathrm{D}}$, with $\gamma \ge \frac{\eta}{q-q_-}$.
\end{enumerate}
\end{lem}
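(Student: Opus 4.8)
The plan is to verify the uncertainty principle \eqref{eq:ap1} in each of the two cases, using for case (1) a direct diagonal bound and for case (2) a positivity argument based on the monotonicity of $H_\Lambda^{\mathrm D}(\omega)$ in $\omega$ together with the Dirichlet bracketing identity recalled above.

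\emph{Case (1).} Suppose $u_\alpha \ge c\,\delta_\alpha$ for all $\alpha$. First I would note that $W_\Lambda = \sum_{\alpha\in\mathcal I_F} U_\alpha$, and since $U_\alpha(\mathbf x) = \sum_{i=1}^n u_\alpha(x_i) \ge c\sum_{i=1}^n \delta_\alpha(x_i)$ is the multiplication operator associated with a nonnegative function, the key observation is that for each $\mathbf j = (j_1,\dots,j_n) \in \Lambda$ one has $\sum_{\alpha\in\mathcal I_F} U_\alpha(\mathbf j) \ge c\sum_{\alpha\in\mathcal I_F}\sum_{i=1}^n\delta_\alpha(j_i) = c\sum_{i=1}^n \one[j_i \in \mathcal I_F]$. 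Because $\mathcal I_F = \bigcup_{i=1}^n \Lambda_{L+R}^{(1)}(x_i) \supseteq \Lambda_L^{(1)}(x_i)$ contains every coordinate $j_i$ of any point $\mathbf j$ in the cube $\Lambda = \Lambda_L^{(n)}(\mathbf x)$, each indicator is $1$, so $W_\Lambda \ge nc \cdot \mathrm{Id}$ as a multiplication operator on $\ell^2(\Lambda)$. Then for any $\psi$, $\langle \chi_I(H^{\mathrm S/\mathrm D/\mathrm N}_\Lambda)W_\Lambda \chi_I(\cdots)\psi,\psi\rangle \ge nc\,\|\chi_I(\cdots)\psi\|^2 = nc\,\langle\chi_I(\cdots)\psi,\psi\rangle$, which is \eqref{eq:ap1} with $\gamma = nc$. (One should double-check that the Dirichlet/Neumann restrictions are still Schr\"odinger operators of the form $-\Delta_\bullet + V$ with the same multiplicative potential, so that $W_\Lambda$ acts the same way; this is immediate from the definition of the boundary conditions.)

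\emph{Case (2).} Here the idea is the standard Dirichlet-bracketing argument. Fix $\eta>0$, $q>q_-$, and take $I \subset (-\infty, E_q-\eta]$. The operator $H_\Lambda^{\mathrm D}(\omega) = H_{0,\Lambda}^{\mathrm D} + \sum_{\alpha\in\mathcal I_F}\omega_\alpha U_\alpha$ is monotone increasing in each $\omega_\alpha$ since $U_\alpha\ge0$; in particular, writing $\omega_\alpha = q - (q-\omega_\alpha)$ with $q - \omega_\alpha \le q - q_-$, one gets $H_\Lambda^{\mathrm D}(\omega) \ge H_{0,\Lambda}^{\mathrm D} + q W_\Lambda - (q-q_-)W_\Lambda = H_{0,\Lambda}^{\mathrm D}(qW_\Lambda\text{ shifted}) $; more precisely $H_\Lambda^{\mathrm D}(\omega) \ge (H_0+qW)_\Lambda^{\mathrm D} - (q-q_-)W_\Lambda$, where $(H_0+qW)_\Lambda^{\mathrm D}$ is the Dirichlet restriction of $H_0+qW$ and I use that $\sum_{\alpha\in\mathcal I_F}U_\alpha$ restricted to $\Lambda$ coincides with $W$ restricted to $\Lambda$ because the $U_\alpha$ with $\alpha\notin\mathcal I_F$ vanish on $\Lambda$ (by the support condition $u_\alpha(j)=0$ for $j\notin\Lambda_R^{(1)}(\alpha)$). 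By the bracketing identity $H_0+qW \le (H_0+qW)_\Lambda^{\mathrm D}\oplus(\cdots)$ quoted above, $\inf\sigma\big((H_0+qW)_\Lambda^{\mathrm D}\big) \ge \inf\sigma(H_0+qW) = E_q$. Now on $\mathrm{Ran}\,\chi_I(H_\Lambda^{\mathrm D}(\omega))$, for a normalized $\psi$ in that range one has $\langle H_\Lambda^{\mathrm D}(\omega)\psi,\psi\rangle \le E_q-\eta$, hence
\[
E_q - \eta \ge \langle H_\Lambda^{\mathrm D}(\omega)\psi,\psi\rangle \ge \langle (H_0+qW)_\Lambda^{\mathrm D}\psi,\psi\rangle - (q-q_-)\langle W_\Lambda\psi,\psi\rangle \ge E_q - (q-q_-)\langle W_\Lambda\psi,\psi\rangle,
\]
so $\langle W_\Lambda\psi,\psi\rangle \ge \frac{\eta}{q-q_-}$ for all $\psi\in\mathrm{Ran}\,\chi_I(H_\Lambda^{\mathrm D}(\omega))$. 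Translating this back into operator form via $\chi_I(H_\Lambda^{\mathrm D}(\omega))W_\Lambda\chi_I(H_\Lambda^{\mathrm D}(\omega)) \ge \gamma\,\chi_I(H_\Lambda^{\mathrm D}(\omega))$ with $\gamma = \frac{\eta}{q-q_-}$ gives \eqref{eq:ap1}.

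The only genuinely delicate point I anticipate is bookkeeping with the boundary conditions and the index set $\mathcal I_F$: one must make sure that the Dirichlet restriction $(H_0 + qW)_\Lambda^{\mathrm D}$ equals $H_{0,\Lambda}^{\mathrm D} + qW_\Lambda$ (i.e. that restricting $-\Delta+V_0$ and then adding the potential commutes with forming $W_\Lambda=\sum_{\alpha\in\mathcal I_F}U_\alpha$), which uses precisely that the $U_\alpha$ with $\alpha\notin\mathcal I_F$ are supported outside $\Lambda$. Everything else is the routine translation between the quadratic-form inequality on $\mathrm{Ran}\,\chi_I$ and the operator inequality \eqref{eq:ap1}, plus the observation that monotonicity in $\omega$ lets us replace $\omega_\alpha$ by its lower bound $q_-$ after the splitting $\omega_\alpha = q-(q-\omega_\alpha)$.
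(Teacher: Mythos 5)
Your proof is correct, and part (1) coincides with the paper's own argument: the pointwise bound $W_{\Lambda}(\mathbf{j}) \ge nc$ on the cube (every coordinate $j_i$ lies in $\mathcal{I}_F$) makes (\ref{eq:ap1}) trivial, for any of the boundary conditions, with $\gamma = nc$. For part (2) you follow the same skeleton as the paper --- decompose $H^{\mathrm{D}}_{\Lambda}(\omega)$, use $U_{\alpha} \ge 0$ and $\omega_{\alpha} \ge q_-$ to dominate it from below by the Dirichlet restriction of $H_0+qW$ minus $(q-q_-)W_{\Lambda}$, and use that Dirichlet conditions shift the spectrum up so that $(H_0+qW)^{\mathrm{D}}_{\Lambda} \ge E_q$ --- but where the paper then converts the resulting spectral information ($\lambda_{\omega}(q-q_-) - \max I \ge \eta$) into (\ref{eq:ap1}) by quoting the abstract uncertainty principle \cite[Theorem 1.1]{BLS}, you prove the needed inequality directly: for normalized $\psi \in \ran \chi_I(H^{\mathrm{D}}_{\Lambda}(\omega))$ one has $\langle H^{\mathrm{D}}_{\Lambda}(\omega)\psi,\psi\rangle \le \sup I \le E_q - \eta$ while $\langle (H^{\mathrm{D}}_{\Lambda}(\omega) + (q-q_-)W_{\Lambda})\psi,\psi\rangle \ge E_q$, giving $\langle W_{\Lambda}\psi,\psi\rangle \ge \eta/(q-q_-)$ and hence (\ref{eq:ap1}). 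This quadratic-form computation is exactly the content of the cited black box in this finite-dimensional setting, so your route is legitimately self-contained and slightly more elementary; what the citation buys the paper is brevity and a statement that also covers more general (form, continuum) situations. Your bookkeeping is also sound: the $U_{\alpha}$ with $\alpha \notin \mathcal{I}_F$ vanish on $\Lambda$ by the support condition, so $(H_0+qW)^{\mathrm{D}}_{\Lambda} = H^{\mathrm{D}}_{0,\Lambda} + qW_{\Lambda}$ on $\ell^2(\Lambda)$, and the splitting $\omega_{\alpha} = q-(q-\omega_{\alpha})$ only uses $\omega_{\alpha} \ge q_-$, consistent with the lemma's hypothesis $q>q_-$ (no upper bound $\omega_{\alpha} \le q$ is needed).
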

Theorem~\ref{thm:multi-cont} combined with Lemma~\ref{lem:uncertain} thus provide a Wegner bound in either situation. If $q_-<0$, we may take $q = 0$ and obtain a Wegner bound below $E_0 := \inf \sigma(H_0)$. Otherwise, $0 \le q_-<q$, and the bound is interesting if $E_q>E_0$, for $I \subseteq [E_0, E_q)$.
\begin{proof}
For (1), note that if $u_{\alpha} \ge c \cdot \delta_{\alpha}$, then for any $\mathbf{y} \in \Lambda_L^{(n)}(\mathbf{x})$,
\[
W_{\Lambda}(\mathbf{y}) \ge c \sum_{1 \le i \le n} \sum_{\alpha \in \mathcal{I}_F} \delta_{\alpha}(y_i) = c \sum_{1 \le i \le n} 1 = nc,
\]
so that $W_{\Lambda} \ge nc$ and (\ref{eq:ap1}) holds trivially in any interval with $\gamma = nc$.

For (2), let $H_q := H_0 + qW$ and given $\omega \in \Omega$, let $\lambda_{\omega}(t) := \inf \sigma(H^{\mathrm{D}}_{\Lambda}(\omega) + tW_{\Lambda})$. Then for any $t \ge q-q_-$ we have
\[
\lambda_{\omega}(t) = \inf \sigma (H^{\mathrm{D}}_{q,\Lambda} + V^{\omega}_{\Lambda} + (t-q)W_{\Lambda}) \ge \inf \sigma \Big(H_q + \sum_{\alpha\in\Z^d} (\omega_{\alpha}+t-q) U_{\alpha} \Big) \ge E_q
\]
where we used the fact that Dirichlet boundary conditions shift the spectrum up. Thus, if $I \subset (-\infty, E_q - \eta]$, we get $\lambda_{\omega}(q-q_-) - \max I \ge \eta$. By \cite[Theorem 1.1]{BLS}, (\ref{eq:ap1}) thus holds in $I$ with $\gamma \ge \frac{\eta}{q-q_-}$. 
\end{proof}

\subsection{Continuum multi-particle models}      \label{sub:cover}
Consider the Hilbert space $L^2(\R^{nd})$, where $n \in \N^{\ast}$ represents the number of particles living in $\R^d$. Let $G \subset \R^d$ be a discrete non-empty set such that $\# \{ \Lambda \cap G \} < \infty$ for any bounded $\Lambda \subset \R^d$ and consider a probability space $(\Omega, \prob)$, where $\Omega := [q_-,q_+]^G$ and $\prob$ has no atoms. Given $\omega=(\omega_{\alpha}) \in \Omega$, let
\[
H(\omega) = H_0 + V^{\omega}, \qquad H_0 := - \Delta + V_0 \, ,
\]
where $V_0 \ge v_0$ is a bounded real non-random potential. We can consider more general $H_0$; we only need $H_{0,\Lambda}$ to satisfy a Weyl law, and this is true for $H_0 = (-i\nabla-A)^2 + V_0$ with weak conditions on $A$ and $V_0$; see \cite[Lemma 5]{HKNSV}. Given $\mathbf{x}=(x_1,\ldots,x_n)\in (\R^d)^n \equiv \R^{nd}$,
\[ 
V^{\omega}(\mathbf{x}) = \sum_{1 \le i \le n} V^{\omega}(x_i) = \sum_{1 \le i \le n} \sum_{\alpha \in G} \omega_{\alpha}u_{\alpha}(x_i) \, .
\]

Let $\Lambda_L^{(n)}(\mathbf{x}) := \{ \mathbf{y} \in \R^{nd} : \|\mathbf{y}-\mathbf{x}\|_{\infty} < L \}$. We assume the $u_{\alpha}:\R^d \to \R$ satisfy $0 \le u_{\alpha} \le C_u$ for some uniform $C_u>0$ and $\supp u_{\alpha} \subset \Lambda^{(1)}_R(\alpha)$ for some $R>0$ independent of $\alpha$. This model encompasses sparse potentials such as Delone and surface potentials. Now put
\[
U_{\alpha}(y_1,\ldots,y_n) := \sum_{i=1}^n u_{\alpha}(y_i) \qquad \text{and} \qquad W:= \sum_{\alpha \in G} U_{\alpha} \, .
\]

Given $\mathbf{z} = (z_1,\ldots,z_n) \in (\R^d)^n$ and a cube $\Lambda_L^{(n)}(\mathbf{z}) \subset \R^{nd}$, let $H^{\bullet}_{\Lambda_L^{(n)}(\mathbf{z})}(\omega)$ be a restriction of $H(\omega)$ acting on $\mathcal{H} := L^2 \big(\Lambda_L^{(n)}(\mathbf{z}) \big)$, with $\bullet = \mathrm{D}$, $\mathrm{N}$, $\mathrm{per}$. Note that without a growth condition on $G$, $H(\omega)$ may not be self-adjoint (cf. \cite{KV}), but here we are only concerned with its restriction, which is self-adjoint.

\textbf{Discussion the results.} Our bounds are not linear in $|\Lambda|$, but may be used for localization.
\begin{enumerate}[\textbullet]
\item The covering situation, i.e. when $G = \Z^d$ and $u_{\alpha} \ge c \chi_{\alpha}$ for all $\alpha$, where $\chi_{\alpha}$ is the characteristic function of $[\alpha-\frac{1}{2}, \alpha+\frac{1}{2}]^d$, has already been analyzed in \cite{BCSS} and \cite{KZ}. There the authors proved Wegner bounds in any interval $I \subset \R$ and for arbitrary $\prob$. We do the same here, simply to illustrate Theorem~\ref{thm:weg-stol}.

\item For negative perturbations, we have a Wegner bound below $E_0:= \inf \sigma(H_0)$. This extends \cite[Theorems 8,13]{KV} because we do not impose any regularity on $\prob$, but \cite{KV} has some Wegner bounds which depend linearly on $|\Lambda|$. In the case of surface potentials, i.e. when $G = \Z^{d_1} \times \{0\}$, \cite[Theorem 2.1]{Kitagaki10} provides an optimal bound.

As in the lattice, there is the issue of whether $H(\omega)$ has some spectrum below $E_0$. We show in the Appendix (Section~\ref{sec:spectra}) that continuum single-particle operators with half-space potentials are good examples of operators which have no covering condition and to which we have a non-trivial Wegner bound. 
\item For positive perturbations, we obtain a Wegner bound below $E_q := \inf \sigma(H_0+qW)$, for any $q>q_+$. This result is very close in spirit to \cite[Theorem 2.1]{BLS}, because both are interesting when $E_0$ is a \emph{weak fluctuation boundary}, i.e. when $E_0 < E_q$. Besides the fact that we allow for multi-particles\footnote{Let us mention here that there is a work in progress by Hislop and Klopp in which an optimal Wegner estimate is derived for some non-covering multi-particle Hamiltonians.}, note that our proof is quite elementary. On the other hand, \cite{BLS} builds on the results of \cite{CHK}, which are technically involved, but they provide an optimal bound.

In contrast to the lattice, the question of whether there are interesting operators for which $E_0<E_q$ is well established in the continuum when $n=1$. Already in \cite[Theorem 2.2]{KSS}, it is shown that if $E(t) := \inf \sigma(H_0 + tW)$, then $E(t) - E_0$ grows linearly in $t$, even if the $u_{\alpha}$ have small support, provided $G=\Z^d$ and $V_0$ is periodic. It was later shown in \cite[Sections 4,5]{BNSS} that $E_q>E_0$ for more general operators with surface or Delone potentials, assuming $V_0$ is periodic. In the case of Delone potentials, this result was very recently improved in \cite[Lemma 4.2]{Klein}, namely, it is shown that $E(t) - E_0$ grows linearly in $t$, and $V_0$ is no longer assumed to be periodic.
\end{enumerate}

Much stronger results are known if $n=1$, $G$ is a Delone set, each $u_{\alpha}>0$ in an open set and $\prob = \mathop \otimes_{\alpha \in G} \mu_{\alpha}$. Namely, the Wegner bound of \cite{RV}, which was improved in \cite{Klein}, is valid for any small interval, not just intervals near the spectral bottom. The result of \cite{Klein} also extends the one of \cite{CHK} who considered $G=\Z^d$, but relies on it.

\begin{thm}     \label{thm:multicov}
For any $I=(E_1,E_2)$, there exists $C_W>0$ such that for any cube $\Lambda_L^{(n)}(\mathbf{x})$,
\begin{enumerate}[\rm (1)]
\item If $G = \Z^d$ and $\exists c>0$ with $u_{\alpha} \ge c \cdot \chi_{\alpha}$ for all $\alpha$, where $\chi_{\alpha} := \chi_{[\alpha - \frac{1}{2},\alpha+\frac{1}{2}]^d}$, then 
\[
\expect \{ \tr[ \chi_I(H^{\bullet}_{\Lambda_L^{(n)}(\mathbf{x})})] \} \le C_W \cdot |\Lambda_L^{(n)}(\mathbf{x})| \cdot | \mathcal{I}_F| \cdot s_F\Big(\prob,\frac{|I|}{nc} \Big) \, ,
\]
where $\mathcal{I}_F := \Big(\bigcup_{j=1}^n \Lambda_{L+R}^{(1)}(x_j)\Big) \bigcap G$ and $s_F(\prob,\varepsilon)$ is defined in \textup{(\ref{eq:sP0})}.
\item In the general case, for any $q>q_+$, if $E_2<E_q := \inf \sigma(H_0+qW)$, then
\[
\expect \{ \tr[ \chi_I(H^{\mathrm{D}}_{\Lambda_L^{(n)}(\mathbf{x})})] \} \le C_W \cdot |\Lambda_L^{(n)}(\mathbf{x})| \cdot | \mathcal{I}_F| \cdot s_F\Big(\prob,\frac{q-q_-}{E_q - E_2} |I| \Big) \, .
\]
\end{enumerate}
\end{thm}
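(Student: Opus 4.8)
The plan is to derive part~(1) from Theorem~\ref{thm:weg-stol} and part~(2) from Theorem~\ref{thm:lppvcov}, in each case by verifying Hypotheses~(C) for the restriction $H^{\bullet}_{\Lambda}(\omega)$, $\Lambda:=\Lambda_L^{(n)}(\mathbf{x})$, and then replacing the abstract constant $K$ by a Weyl bound $K\le C_I\cdot|\Lambda|$ with $C_I$ depending only on $I$ and the model data, not on $\Lambda$. Several observations are common to both parts. For $\bullet=\mathrm{D}$, $\mathrm{N}$ or $\mathrm{per}$ and a bounded cube $\Lambda$, $H^{\bullet}_{\Lambda}(\omega)=H^{\bullet}_{0,\Lambda}+V^{\omega}_{\Lambda}$ is self-adjoint, bounded below uniformly in $\omega$, and has compact resolvent (the form domain --- $H^1_0(\Lambda)$, $H^1(\Lambda)$, or the periodic $H^1$ --- embeds compactly into $L^2(\Lambda)$), so it has a discrete spectrum and an orthonormal basis of eigenvectors; this is (C.2). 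Since $V^{\omega}$ is bounded on $\Lambda$ uniformly in $\omega$ --- either because $q_-\ge0$ and $V^{\omega}\ge0$, or because only boundedly many of the $u_{\alpha}$ are nonzero at a given point --- we get $H^{\bullet}_{\Lambda}(\omega)\ge H_-$ for a fixed operator $H_-$ independent of $\omega$, and the Weyl law (classical if $A\equiv0$, otherwise \cite[Lemma~5]{HKNSV}) applied to $H_-$ gives $\#\{n:\lambda_n(\omega)< E_2\}\le C_I\cdot|\Lambda|$, i.e.\ (C.3) with $K=C_I\cdot|\Lambda|$. As the form domain is $\omega$-independent, (C.4) holds with $\mathcal{D}=D(\mathfrak{h}^{\omega}_{\Lambda})$ and
\[
f_u(\omega)=\mathfrak{h}^{\omega}_{\Lambda}[u]=\mathfrak{h}^{\bullet}_{0,\Lambda}[u]+\sum_{\alpha\in\mathcal{I}_F}\omega_{\alpha}\,\langle U_{\alpha}u,u\rangle,\qquad\langle U_{\alpha}u,u\rangle:=\int_{\Lambda}U_{\alpha}|u|^2\ge0.
\]
Since $\supp u_{\alpha}\subset\Lambda^{(1)}_R(\alpha)$, one has $U_{\alpha}|_{\Lambda}=0$ unless $\alpha\in\mathcal{I}_F=\big(\bigcup_{j=1}^n\Lambda^{(1)}_{L+R}(x_j)\big)\cap G$, so $f_u$ depends only on the finite family $(\omega_{\alpha})_{\alpha\in\mathcal{I}_F}$, as required in (C.5); and (C.1) holds because $\Omega=[q_-,q_+]^G$ and $\prob$ has no atoms. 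Finally, the measurability of $\tr[\chi_I(H^{\bullet}_{\Lambda}(\omega))]$ follows from the classical arguments of \cite{KM82}, so $\overline{\expect}$ reduces to $\expect$.

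\textbf{Part (1).} When $G=\Z^d$ and $u_{\alpha}\ge c\,\chi_{\alpha}$ (which forces $R\ge\frac12$), for every $\mathbf{y}\in\Lambda$ each $y_i$ lies in a unit cube centred at some $\alpha\in\Z^d$ with $\|\alpha-x_i\|_{\infty}<L+\frac12\le L+R$, hence $\alpha\in\mathcal{I}_F$ and $W_{\Lambda}(\mathbf{y})=\sum_{i=1}^n\sum_{\alpha\in\mathcal{I}_F}u_{\alpha}(y_i)\ge nc$. Thus $f_u$ is affine in $\omega$ with $\partial f_u/\partial\omega_{\alpha}=\langle U_{\alpha}u,u\rangle\ge0$ and $\sum_{\alpha\in\mathcal{I}_F}\partial f_u/\partial\omega_{\alpha}=\langle W_{\Lambda}u,u\rangle\ge nc\,\|u\|^2$, which is exactly Hypothesis~(C.5.b) with $\gamma=nc$. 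Theorem~\ref{thm:weg-stol} gives $\expect\{\tr[\chi_I(H^{\bullet}_{\Lambda})]\}\le2K\cdot|\mathcal{I}_F|\cdot s_F(\prob,|I|/(nc))$, and substituting $K\le C_I\cdot|\Lambda|$ yields the claim with $C_W=2C_I$.

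\textbf{Part (2).} Keep $\bullet=\mathrm{D}$. Classical Dirichlet bracketing gives $H_0+qW\le(H_0+qW)^{\mathrm{D}}_{\Lambda}\oplus(H_0+qW)^{\mathrm{D}}_{\Lambda^c}$, and since $U_{\alpha}|_{\Lambda}=0$ for $\alpha\notin\mathcal{I}_F$ we have $(H_0+qW)^{\mathrm{D}}_{\Lambda}=H^{\mathrm{D}}_{0,\Lambda}+qW_{\Lambda}$, whence $\inf\sigma(H^{\mathrm{D}}_{0,\Lambda}+qW_{\Lambda})\ge\inf\sigma(H_0+qW)=E_q$. Put $\hat H(\omega):=H^{\mathrm{D}}_{\Lambda}(\omega)-E_q$; writing $\omega_{\alpha}=q-(q-\omega_{\alpha})$ with $q>q_+$, so that $q-\omega_{\alpha}\ge0$, its form is
\[
\hat{\mathfrak{h}}^{\omega}[u]=\big\langle(H^{\mathrm{D}}_{0,\Lambda}+qW_{\Lambda}-E_q)u,u\big\rangle-\sum_{\alpha\in\mathcal{I}_F}(q-\omega_{\alpha})\,\langle U_{\alpha}u,u\rangle,
\]
which has the form \textup{(\ref{eq:lppv})} with $\zeta=1$, $a(u)=\langle(H^{\mathrm{D}}_{0,\Lambda}+qW_{\Lambda}-E_q)u,u\rangle\ge0$ and $b_{\alpha}(u)=\langle U_{\alpha}u,u\rangle\ge0$. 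Now $\chi_I(H^{\mathrm{D}}_{\Lambda}(\omega))=\chi_{\hat I}(\hat H(\omega))$ with $\hat I=(E_1-E_q,E_2-E_q)$ and $\hat E_2:=E_2-E_q<0$ (because $E_2<E_q$), hence $|\hat E_2|=E_q-E_2$; as $\hat H(\omega)$ inherits (C.1)--(C.4) from the first paragraph (the constant shift affects none of them), Theorem~\ref{thm:lppvcov} in the case $\zeta=1$ applies and gives $\expect\{\tr[\chi_I(H^{\mathrm{D}}_{\Lambda})]\}\le2K\cdot|\mathcal{I}_F|\cdot s_F(\prob,\frac{q-q_-}{E_q-E_2}|I|)$; substituting $K\le C_I\cdot|\Lambda|$ finishes the proof with $C_W=2C_I$.

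\textbf{Main obstacle.} The abstract theorems of Section~\ref{sec:abs} do almost all the work, so the substantive points are: the uniform-in-$\omega$ lower bound behind the Weyl estimate for $K$, which requires controlling the local overlap of the $\supp u_{\alpha}$ (or exploiting $V^{\omega}\ge0$); and, in part~(2), the choice of the spectral shift by $E_q$ together with the bracketing inequality $\inf\sigma(H^{\mathrm{D}}_{0,\Lambda}+qW_{\Lambda})\ge E_q$, which is what brings the interval into the regime $\hat E_2<0$ demanded by Theorem~\ref{thm:lppvcov}. Everything else reduces to checking that $f_u(\omega)$ is affine and monotone in the correct direction.
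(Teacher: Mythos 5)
Your proposal is correct and follows essentially the same route as the paper: part (1) is obtained by verifying Hypotheses (C) with $\gamma=nc$ (via the covering bound $W_{\Lambda}\ge nc$) and invoking Theorem~\ref{thm:weg-stol} with a Weyl constant $K=C|\Lambda|$, and part (2) by shifting by $E_q$, using Dirichlet bracketing to get $H^{\mathrm{D}}_{0,\Lambda}+qW_{\Lambda}-E_q\ge0$ so that the form matches (\ref{eq:lppv}) with $\zeta=1$, $a(u)\ge0$, $b_{\alpha}(u)=\langle U_{\alpha}u,u\rangle\ge0$, and then applying Theorem~\ref{thm:lppvcov} on $I'=(E_1-E_q,E_2-E_q)$. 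The only cosmetic difference is that you phrase the monotonicity via the quadratic form and (C.5.b) while the paper uses the operator formulation, and you spell out the measurability and Weyl-law details the paper leaves implicit.
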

Here $C_W = C_W(nd,E_2,v_0)$ if $q_- \ge 0$ and $C_W = C_W(nd,E_2,v_0,nq_-C_uR)$ otherwise.

If $q_+<0$, i.e. the perturbation is negative, we may take $q=0$ and obtain a Wegner bound below $E_0 := \inf \sigma(H_0)$. Otherwise, $0 \le q_+ < q$ and $E_q > E_0$, for many models.
\begin{proof}
Let $\Lambda:=\Lambda_L^{(n)}(\mathbf{x})$. For (1), note that $H^{\bullet}_{\Lambda}(\omega)$ is a self-adjoint operator given by $H^{\bullet}_{\Lambda}(\omega) = H_1 + \sum_{\alpha \in \mathcal{I}_F} \omega_{\alpha} U_{\alpha}$, where $H_1:= H^{\bullet}_{0,\Lambda}$. Given $u \in D(H^{\bullet}_{\Lambda})$, if $f_u(\omega) = \langle H^{\bullet}_{\Lambda}(\omega) u,u \rangle$, then $f_u$ is monotone increasing since $U_{\alpha} \ge 0$. Moreover, if $\mathbf{y} \in \Lambda_L^{(n)}(\mathbf{x})$, then $W_{\Lambda}(\mathbf{y}) := \sum_{\alpha \in \mathcal{I}_F} U_{\alpha}(\mathbf{y}) \ge c \sum_{1 \le i \le n} \sum_{\alpha \in \mathcal{I}_F} \chi_{\alpha}(y_i) = nc$. Hence, $f_u(\omega+t \cdot \one_F) - f_u(\omega) = t \langle W_{\Lambda} u,u \rangle \ge nct \|u\|$. Hypotheses (C) are thus satisfied with $\gamma=nc$, a Weyl constant $K=C|\Lambda|$, and the claim follows from Theorem~\ref{thm:weg-stol}.

For (2), let $A(\omega) := H^{\mathrm{D}}_{\Lambda}(\omega) - E_q$ and $I' = (E_1-E_q,E_2-E_q)$. Then $\chi_I(\lambda) = \chi_{I'}(\lambda-E_q)$, hence $\expect \{ \tr [ \chi_I (H^{\mathrm{D}}_{\Lambda}(\omega)) ] \} = \expect \{ \tr [ \chi_{I'} (A(\omega)) ] \}$. 

Now $A(\omega)$ is a self-adjoint operator given by $A(\omega) = H_1 + \sum_{\alpha \in \mathcal{I}_F} (\omega_{\alpha}-q) U_{\alpha}$, where $H_1:= H^{\mathrm{D}}_{0,\Lambda}+qW_{\Lambda}-E_q$. Since Dirichlet boundary conditions shift the spectrum up, we have $H_1 \ge 0$. Thus, $A(\omega)$ satisfies the hypotheses of Theorem~\ref{thm:lppvcov} in $I'$ with $\zeta=1$, a Weyl constant $K = C |\Lambda|$, and the claim follows since $|I'|=|I|$.
\end{proof}

\subsection{Quantum graphs with random edge length}      \label{sub:length}
Consider the metric graph $(\mathcal{E},\mathcal{V})$ with vertex set $\mathcal{V} = \Z^d$ and edge set $\mathcal{E} = \{ (m,m+h_j): m \in \Z^d, j=1, \ldots, d \}$, where $(h_j)_{j=1}^d$ is the standard basis of $\Z^d$. Each edge $e=(v,v')$ has an initial vertex $\iota e = v$ and a terminal vertex $\tau e = v'$. Now fix $0<l_{\min} < l_{\max} < \infty$ and let $(\Omega,\prob)$ be a probability space, where $\Omega :=[l_{\min},l_{\max}]^{\mathcal{E}}$ and $\prob$ has no atoms. Given $l^{\omega}=(l^{\omega}_e) \in \Omega$, we identify each edge $e$ with $[0,l^{\omega}_e]$, such that $\iota e$ and $\tau e$ correspond to $0$ and $l^{\omega}_e$ respectively, and consider the Hilbert space $\mathcal{H} := \mathop \oplus _{e \in \mathcal{E}} L^2 [0,l^{\omega}_e]$. Fix $\alpha \in \R$ and define the operator
\[
H(l^{\omega},\alpha) :(f_e) \mapsto (- f_e'') \, ,
\] 
\[
D(H(l^{\omega},\alpha)) := \left\{f=(f_e) \in \mathop \oplus \limits_{e \in \mathcal{E}} W^{2,2} (0,l^{\omega}_e) \left| \begin{array} {l} f \text{ is continuous at each} \\ v \in \mathcal{V} \text{ and } f'(v) = \alpha f(v). \\ 
 \end{array} \right. \right\} \, .
\] 

By continuity at $v$, we mean that if $\tau e = \iota b = v$, then $f_e(l^{\omega}_e) = f_b(0) =: f(v)$. Here $f'(v) := \sum \limits_{e: \iota e = v} f_e'(0) - \sum \limits_{e: \tau e = v} f_e'(l^{\omega}_e)$.

Given $L \in \N^{\ast}$, let $\Lambda_L := \{ e \in \mathcal{E} : \| \iota e \|_{\infty} \le L \text{ or } \| \tau e \|_{\infty} \le L \}$ be a cube and put $\mathcal{V}_{\Lambda_L} := \{ \iota e : e \in \Lambda_L \} \cup \{ \tau e : e \in \Lambda_L \}$. This yields a graph $(\Lambda_L,\mathcal{V}_{\Lambda_L})$ and a corresponding operator $H_{\Lambda_L}(l^{\omega},\alpha)$. We denote $H_{\Lambda_L}^{\omega}(\alpha) := H_{\Lambda_L}(l^{\omega},\alpha)$.

\begin{thm}    \label{thm:length}
Let $I \subset (0, \infty)$ be an interval such that $\bar{I} \cap D_0 = \emptyset$, where $D_0 := \bigcup_{k \in \Z} \Big[ \frac{\pi^2 k^2}{l_{\max}^2}, \frac{\pi^2 k^2}{l_{\min}^2} \Big]$. Then there exists $c_1=c_1(d)$ and $c_2=c_2(I)>0$ such that for any interval $J \subset I$ and any cube $\Lambda$ we have
\[
\prob \{ \sigma(H_{\Lambda}^{\omega}(\alpha)) \cap J \neq \emptyset \} \le c_1 \cdot | \Lambda|^2 \cdot s_F\big(\prob, c_2 |J|\big),
\]
where $| \Lambda|$ is the number of edges in $\mathcal{I}_F := \Lambda$ and $s_F(\prob,\varepsilon)$ is as in \textup{(\ref{eq:sP0})}.
\end{thm}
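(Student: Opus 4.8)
The plan is to deduce the estimate from the monotone--decreasing version of the Remark following Theorem~\ref{thm:weg-stol}, using the random edge lengths $l^{\omega}=(l^{\omega}_e)_{e\in\Lambda}$ as the coupling constants and $\mathcal{I}_F:=\Lambda$ as the finite index set (so $|\mathcal{I}_F|=|\Lambda|$). First I would pass to the eigenvalue--counting function: fixing an open interval $J'$ with $J\subseteq J'\subseteq I$ and writing $N_{J'}(\omega):=\tr[\chi_{J'}(H_{\Lambda}^{\omega}(\alpha))]$, we have $\prob\{\sigma(H_{\Lambda}^{\omega}(\alpha))\cap J\neq\emptyset\}\le\prob\{N_{J'}(\omega)\ge 1\}\le\overline{\expect}\{N_{J'}(\omega)\}$ by the Chebyshev--Markov inequality, and letting $J'$ decrease to $J$ at the very end turns $|J'|$ into $|J|$ because $\prob$ has no atoms, so $\varepsilon\mapsto s_F(\prob,\varepsilon)$ is right--continuous. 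Although the fibered space $\bigoplus_{e\in\Lambda}L^2[0,l^{\omega}_e]$ depends on $\omega$, the unitary maps $f_e\mapsto\sqrt{l^{\omega}_e}\,f_e(l^{\omega}_e\,\cdot)$ identify it with the fixed space $\bigoplus_{e\in\Lambda}L^2[0,1]$ without changing the spectrum; since the ingredients we need (Hypotheses (C.1)--(C.3) plus the eigenvalue input of the Remark) are all unitarily invariant, I may work with eigenvalues alone and never with the operator domain. Hypothesis (C.2) holds because $H_{\Lambda}^{\omega}(\alpha)$, a $\delta$--coupling Laplacian on the compact metric graph $(\Lambda,\mathcal{V}_{\Lambda})$, is self--adjoint, bounded below and has compact resolvent.

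For (C.3) one needs a bound on the number of eigenvalues below $E_2:=\sup I$ that is uniform in $\omega$. Decoupling $(\Lambda,\mathcal{V}_{\Lambda})$ into its individual edges with Dirichlet conditions only raises eigenvalues, and a Dirichlet segment of length $l^{\omega}_e\in[l_{\min},l_{\max}]$ has at most $O(\sqrt{E_2}\,l_{\max})$ eigenvalues below $E_2$; summing over the $|\Lambda|$ edges gives $K\le C\cdot|\Lambda|$. Moreover $\bar I\cap D_0=\emptyset$ forces $E_2$ to be bounded in terms of $l_{\min},l_{\max}$ alone, since at large energies the intervals comprising $D_0$ overlap and cover all of $(0,\infty)$; hence $C$ depends only on $d$ and the fixed parameters $l_{\min},l_{\max}$, and $K\le C(d)\cdot|\Lambda|$ accounts for the first factor $|\Lambda|$.

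The heart of the matter is the monotonicity--with--rate statement: the ordered eigenvalues $\lambda_n(l^{\omega})$ are monotone decreasing in each $l^{\omega}_e$, and $\lambda_n(l^{\omega})-\lambda_n(l^{\omega}-t\cdot\one_F)\le -t\gamma$ for all admissible $t\ge0$ whenever $\lambda_n(l^{\omega})\in I$, with some $\gamma=\gamma(I)>0$. I would establish this by a Hadamard shape--derivative (Hellmann--Feynman) computation: if $f$ is a normalized eigenfunction for $\lambda:=\lambda_n(l^{\omega})$, then $\partial\lambda_n/\partial l_e=-E_e(f)$, where $E_e(f):=|f_e'|^2+\lambda|f_e|^2$ is the energy density, constant along $e$ since $-f_e''=\lambda f_e$; in particular each partial derivative is $\le0$, which gives the monotonicity. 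Summing over $e$ and using the eigenvalue relation $\|f'\|^2+\alpha\sum_{v\in\mathcal{V}_{\Lambda}}|f(v)|^2=\lambda$ yields $\sum_{e\in\Lambda}E_e(f)\,l^{\omega}_e=\|f'\|^2+\lambda=2\lambda-\alpha\sum_{v}|f(v)|^2\ge\lambda$, because $\alpha\sum_v|f(v)|^2\le\lambda$ when $\alpha\ge0$ (as $\|f'\|^2\ge0$) and $\alpha\sum_v|f(v)|^2\le0$ when $\alpha<0$; only the eigenvalue relation and $\lambda>0$ are used here. Since $l^{\omega}_e\le l_{\max}$, this gives $\frac{d}{dt}\lambda_n(l^{\omega}-t\cdot\one_F)\big|_{t=0}=\sum_e E_e(f)\ge\lambda/l_{\max}\ge(\inf I)/l_{\max}=:\gamma(I)$, uniformly in $\omega$, and the same inequality holds along the flow as long as the (increasing) eigenvalue stays positive, so integrating the monotone branch $t\mapsto\lambda_n(l^{\omega}-t\cdot\one_F)$ produces the displayed rate for all admissible $t$. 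The hypothesis $\bar I\cap D_0=\emptyset$ enters precisely to make this rigorous: it keeps $|\sin(\sqrt{\lambda}\,l^{\omega}_e)|$ bounded away from $0$ for $\lambda$ near $I$, so the vertex secular matrix (whose entries involve $\cot(\sqrt{\lambda}l^{\omega}_e)$ and $1/\sin(\sqrt{\lambda}l^{\omega}_e)$) is well defined and smooth there, the branches $\lambda_n(\cdot)$ are piecewise real--analytic in $l^{\omega}$ near $I$, and the shape--derivative formula and its integration are legitimate.

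Granting (C.1)--(C.3) and this eigenvalue input, the monotone--decreasing form of the Remark following Theorem~\ref{thm:weg-stol} gives $\overline{\expect}\{N_{J'}(\omega)\}\le 2K\cdot|\mathcal{I}_F|\cdot s_F(\prob,|J'|/\gamma(I))\le C(d)\,|\Lambda|^2\,s_F(\prob,|J'|/\gamma(I))$; letting $J'\downarrow J$ finishes the proof with $c_1=c_1(d)$ and $c_2=c_2(I)=1/\gamma(I)$, while the $\mathfrak{F}_{\prob}$--measurability of $N_{J'}(\omega)$ follows from the continuity of $l^{\omega}\mapsto\lambda_n(l^{\omega})$. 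The main obstacle I anticipate is the rigorous form of the shape--derivative step, and in particular ensuring that the rate inequality holds for \emph{all} admissible $t$ rather than only to first order: this rests on the uniform bound $\sum_e E_e(f)\,l^{\omega}_e\ge(\inf I)\|f\|^2$ above --- which uses only the eigenvalue relation and $I\subset(0,\infty)$, hence survives after $\lambda_n$ leaves $I$ --- together with a careful treatment of eigenvalue crossings inside $I$, for which the non--degeneracy furnished by $\bar I\cap D_0=\emptyset$ is essential.
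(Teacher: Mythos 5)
Your route is genuinely different from the paper's: the paper does not touch the eigenvalues of $H_{\Lambda}^{\omega}(\alpha)$ directly, but imports from \cite{KP2} the reduction to a discrete operator $M_{\Lambda}(l^{\omega},E_J)$ on $\ell^2(\mathcal{V}_{\Lambda})$, checks Hypothesis (C.5.b) for it (its partial derivatives in $l_e^{\omega}$ are bounded below by $\beta I^e$), and then applies Theorem~\ref{thm:weg-stol} plus Markov; the condition $\bar I\cap D_0=\emptyset$ is exactly what that reduction needs. Your idea of working with the continuum eigenvalues via the Hadamard formula $\partial\lambda_n/\partial l_e=-(|f_e'|^2+\lambda|f_e|^2)$ and the identity $\sum_e l_e E_e(f)=2\lambda-\alpha\sum_v|f(v)|^2\ge\lambda$ is attractive and the rate computation $\sum_e E_e(f)\ge \lambda/l_{\max}$ is correct for $\lambda>0$. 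However, as written there are two genuine gaps. First, the abstract tool you invoke (the Remark after Theorem~\ref{thm:weg-stol}, which rests on Lemmas~\ref{lem:monotone}, \ref{lem:auxi} and \ref{lem:sto}) requires $\lambda_n$ to be monotone on \emph{all} of $\Omega$, not only where $\lambda_n\in I$: the monotonicity enters the measurability argument and, crucially, the section argument in Lemma~\ref{lem:auxi}. Your justification ``each partial derivative is $\le 0$'' uses $\lambda\ge 0$, and since $\alpha\in\R$ is arbitrary the theorem covers $\alpha<0$, where negative eigenvalue branches exist and need not be monotone in the lengths (for two attractive $\delta$-vertices joined by one edge the ground state \emph{increases} towards $-\alpha^2$ as the edge lengthens). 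So global monotonicity, as you assert it, is false in general; it can be repaired, e.g.\ by applying the lemma to $\max(\lambda_n,\varepsilon_0)$ with $0<\varepsilon_0<\inf I$, which your own positive-energy derivative bound shows to be globally monotone, but this step is missing.

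Second, your verification of (C.3) goes the wrong way: Dirichlet decoupling \emph{raises} eigenvalues, so the decoupled Dirichlet count is a lower bound for the counting function of $H_{\Lambda}^{\omega}(\alpha)$, not an upper bound. To get $K\le C|\Lambda|$ you need a comparison operator \emph{below} $H_{\Lambda}^{\omega}(\alpha)$, i.e.\ Neumann decoupling, and for $\alpha<0$ a trace estimate to absorb the negative vertex terms, exactly as in Lemma~\ref{lem:potpourri}; this is fixable but the step as stated fails. Two smaller remarks: the role you assign to $\bar I\cap D_0=\emptyset$ (well-definedness of a vertex secular matrix, treatment of crossings) belongs to the \cite{KP2} reduction used in the paper, not to your direct argument, where the Hadamard formula comes from analytic perturbation theory of the rescaled forms and the hypothesis only serves to bound $\sup I$ by a constant depending on $l_{\min},l_{\max}$; and your constant $K$ then depends on $l_{\min},l_{\max}$ (and on $\alpha_-$ if $\alpha<0$), whereas the paper gets $K=|\mathcal{V}_{\Lambda}|\le c_d|\Lambda|$ for free because $M_{\Lambda}$ acts on a finite-dimensional space.
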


Previous estimates appeared in \cite{LPPV} and \cite{KP2}, both assumed that $\prob = \mathop \otimes_{e \in E} \mu_e$, with $\mu_e = h_e(\lambda) \dd \lambda$, but their bounds were linear in $|\Lambda|$. Our proof heavily relies on the analysis of \cite{KP2}. Our point here is twofold: first, if one makes use of the black box Theorem~\ref{thm:weg-stol}, then a large part of the proof of \cite{KP2} can be omitted, second this allows to extend their localization results in case $\alpha>0$ to $\mu_e$ which are (log-)H\"older continuous.
\begin{proof}
It is proved in \cite[Eq. (9)-(14)]{KP2}, by spectral analytic arguments and without any assumption on $\prob$, that if $E_J$ is the midpoint of $J$, then there exists a discrete random self-adjoint operator $M_{\Lambda}(l^{\omega},E_J)$ acting on $\ell^2(\mathcal{V}_{\Lambda})$ and $b>0$ such that
\[
\prob \{ \sigma(H_{\Lambda}^{\omega}(\alpha)) \cap J \neq \emptyset \} \le \prob \big\{ \dist \big( \sigma(M_{\Lambda}(l^{\omega},E_J)), \alpha \big) \le b |J| \big\} \, .
\]

Moreover, given $u \in \ell^2(\mathcal{V}_{\Lambda})$, the map $l^{\omega} \mapsto f_u( {l^\omega}) := \langle M_{\Lambda}(l^{\omega},E_J) u, u \rangle$ is in $C^1(\Omega)$, only depends on $(l_e^{\omega})_{e \in \Lambda}$ and there exists $\beta>0$ such that
\begin{itemize}
\item $\frac{ \partial M_{\Lambda}(l^{\omega},E_J)}{ \partial l_e^{\omega}} \ge \beta \cdot I^e$ for all $e \in \Lambda$, where $I^e f(v) = f(v)$ if $v \in \{\iota e, \tau e\}$ and $I^e f(v) = 0$ otherwise,
\item and $\sum_{e \in \Lambda} \frac{ \partial M_{\Lambda}(l^{\omega},E_J)}{ \partial l_e^{\omega}} \ge \beta \cdot \sum_{e \in \Lambda} I^e  \ge \beta \cdot \mathrm{Id}_{\ell^2(\mathcal{V}_{\Lambda})}$.
\end{itemize}

Thus, $\frac{\partial f_u(l^{\omega})}{\partial l_e^{\omega}} \ge \beta( |u(\iota e)|^2 + |u(\tau e)|^2) \ge 0$ for $e \in \Lambda$ and $\sum_{e \in \Lambda} \frac{\partial f_u(l^{\omega})}{\partial l_e^{\omega}} \ge \beta \cdot \| u \|^2$. Hence $M_{\Lambda}(l^{\omega},E_J)$ satisfies Hypothesis (C.5.b). Since $\ell^2(\mathcal{V}_{\Lambda})$ is finite dimensional, the rest of Hypotheses (C) are clearly satisfied with $\mathcal{I}_F = \Lambda$ and $K=|\mathcal{V}_{\Lambda}| \le c_d |\Lambda|$. We may thus apply Theorem~\ref{thm:weg-stol} and Markov inequality to get
\begin{align*}
\prob \big\{ \dist \big( \sigma(M_{\Lambda}(l^{\omega},E_J)), \alpha \big) \le b |J| \big\} & = \prob \big\{ \tr \chi_{[\alpha - b|J|, \alpha + b|J|]}(M_{\Lambda}(l^{\omega},E_J)) \ge 1 \big\} \\
& \le 2 c_d |\Lambda|^2 s_F\Big(\prob, \frac{2b}{\beta} |J| \Big) \, . \qedhere
\end{align*}
\end{proof}

\subsection{Quantum graphs with random vertex coupling}
We finally show that Theorem~\ref{thm:lppvcov} can tackle random vertex coupling models without any analytic effort. It seems there are no previous Wegner estimates for such models.

For simplicity consider the graph $(\mathcal{E},\mathcal{V})$ given by $\mathcal{V}=\Z^d$ and $\mathcal{E}$ the set of segments $e=(v,v')$ between the vertices, assigned lengths $l_e$ with $l_{\min} \le l_e \le l_{\max}$. More general structures can be treated similarly. Given $e=(v,v')$, we put $\iota e = v$ and $\tau e = v'$.

Fix $\alpha_-,\alpha_+\in \R$, $\alpha_-<\alpha_+$ and $\emptyset \neq G \subseteq \mathcal{V}$. Let $(\Omega, \prob)$ be a probability space, where $\Omega = [\alpha_-,\alpha_+]^G$, $\prob$ has no atoms and let $\mathcal{H} = \mathop \oplus_{e \in \mathcal{E}} L^2[0,l_e]$. Let $V=(V_e)$ be a bounded real potential, $V \ge c_0$, and given $\alpha^{\omega} = (\alpha^{\omega}_v) \in \Omega$, consider the operator
\[
H(\alpha^{\omega}):(f_e) \mapsto (-f_e''+V_e f_e) \, ,
\]
acting on $(f_e) \in \mathop \oplus_{e \in \mathcal{E}} W^{2,2}(0,l_e)$ which are continuous at all vertices, i.e. $f_e(l_e) = f_b(0) =: f(v)$ if $\tau e = \iota b = v$, and which satisfy
\[
f'(v):= \sum_{e: \iota e = v} f_e'(0) - \sum_{e: \tau e =v} f_e'(l_e) = \begin{cases}
\alpha^{\omega}_v f(v) &\text{if }v \in G\\
0&\text{otherwise}.\end{cases}
\]

The authors in \cite{KP1} studied the case $G = \mathcal{V}$ and established localization for high disorder and near spectral edges using the fractional moments method (which does not rely on Wegner bounds). Their idea was to reduce the problem to one on $\ell^2(\mathcal{V})$, for an associated discrete operator. Below we prove a direct Wegner bound instead.

Given $\Lambda \subseteq \mathcal{E}$, let $\mathcal{V}_{\Lambda} := \{ \iota e : e \in \Lambda \} \cup \{ \tau e : e \in \Lambda \}$ and $\partial \Lambda := \mathcal{V}_{\Lambda} \cap \mathcal{V}_{\Lambda^c}$. Consider the form
\[
\mathfrak{h}^{\omega,\mathrm{D}}_{\Lambda}[f] = \sum_{e \in \Lambda} \big(\|f_e'\|^2_{L^2(0,l_e)} + \langle V_e f_e, f_e \rangle_{L^2(0,l_e)} \big) + \sum_{v \in G \cap \mathcal{V}_{\Lambda}} \alpha^{\omega}_v |f(v)|^2
\]
acting on $(f_e) \in \mathop \oplus_{e \in \Lambda} W^{1,2}(0,l_e)$ which are continuous at $v \in \mathcal{V}_{\Lambda} \setminus \partial \Lambda$ and vanish at $v \in \partial \Lambda$. Note that $\partial \Lambda$ is empty if $\Lambda = \mathcal{E}$. It is known (see \cite{Ku} or \cite[Lemma 4.1]{LPPV}) that $\mathfrak{h}^{\omega,\mathrm{D}}_{\Lambda}$ is closed and bounded from below, and thus corresponds to a self-adjoint operator $H_{\Lambda}^{\mathrm{D}}(\alpha^{\omega})$. Moreover, $H(\alpha^{\omega}) = H_{\mathcal{E}}^{\mathrm{D}}(\alpha^{\omega})$, so we denote $\mathfrak{h}^{\omega}:= \mathfrak{h}_{\mathcal{E}}^{\omega,\mathrm{D}}$.

\begin{lem}           \label{lem:potpourri}
For any $\Lambda \subseteq \mathcal{E}$, $H(\alpha^{\omega}) \le H_{\Lambda}^{\mathrm{D}}(\alpha^{\omega}) \mathop \oplus H_{\Lambda^c}^{\mathrm{D}}(\alpha^{\omega})$. If $\Lambda$ is finite, $H_{\Lambda}^{\mathrm{D}}(\alpha^{\omega})$ has a compact resolvent. Its eigenvalues, denoted $E_j^{\Lambda,\mathrm{D}}$ counting multiplicity, satisfy the following Weyl law: for any $S \in \R$, there exists a non-random $C=C(S,c_0,\alpha_-,l_{\min},l_{\max})$ such that $E_j^{\Lambda,\mathrm{D}} > S$ if $j > C \cdot |\Lambda|$, where $|\Lambda|$ is the number of edges in $\Lambda$.
\end{lem}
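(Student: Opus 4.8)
\medskip
\noindent\textbf{Proof plan.} All three assertions follow from standard quadratic--form and bracketing arguments for metric graphs, and I would establish them in the stated order. For the operator inequality, the idea is to compare the form $\mathfrak{h}^{\omega} = \mathfrak{h}^{\omega,\mathrm{D}}_{\mathcal{E}}$ of $H(\alpha^{\omega})$ with the form $\mathfrak{h}^{\omega,\mathrm{D}}_{\Lambda} \oplus \mathfrak{h}^{\omega,\mathrm{D}}_{\Lambda^c}$ of $H_{\Lambda}^{\mathrm{D}}(\alpha^{\omega}) \oplus H_{\Lambda^c}^{\mathrm{D}}(\alpha^{\omega})$. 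Because $\partial\Lambda = \partial\Lambda^c = \mathcal{V}_\Lambda \cap \mathcal{V}_{\Lambda^c}$, the form domain of the latter is exactly the set of $f = (f_e) \in \bigoplus_{e\in\mathcal{E}} W^{1,2}(0,l_e)$ that are continuous at every $v \notin \partial\Lambda$ and vanish at every $v \in \partial\Lambda$; since vanishing forces continuity at that vertex, this is a subspace of $D(\mathfrak{h}^{\omega})$. On this subspace the two forms are in fact \emph{equal}: splitting the edge sum over $\Lambda$ and $\Lambda^c$ reproduces the edge part of $\mathfrak{h}^{\omega}$, and a vertex $v \in G\cap\partial\Lambda$, although it appears in both $\mathcal{V}_\Lambda$ and $\mathcal{V}_{\Lambda^c}$, contributes $0$ to both expressions because $f(v)=0$ there. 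Invoking the closedness and semiboundedness recalled above (\cite{Ku}, \cite[Lemma 4.1]{LPPV}) together with the variational characterisation of the ordering of semibounded self-adjoint operators ($A\le B$ iff the form domain of $B$ lies in that of $A$ and $\mathfrak{a}[u]\le\mathfrak{b}[u]$ on it) yields $H(\alpha^{\omega}) \le H_{\Lambda}^{\mathrm{D}}(\alpha^{\omega}) \oplus H_{\Lambda^c}^{\mathrm{D}}(\alpha^{\omega})$.

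\emph{Compact resolvent.} For $\Lambda$ finite I would check that $\mathfrak{h}^{\omega,\mathrm{D}}_{\Lambda}[f] + c\|f\|^2$ is, for $c$ large, equivalent on $D(\mathfrak{h}^{\omega,\mathrm{D}}_{\Lambda})$ to $\sum_{e\in\Lambda}\|f_e\|^2_{W^{1,2}(0,l_e)}$: the upper bound is trivial, and the lower bound follows from semiboundedness, which when $\alpha_- < 0$ rests on the one--dimensional trace inequality $|f_e(x)|^2 \le \varepsilon\|f_e'\|^2 + C_\varepsilon\|f_e\|^2$ with $C_\varepsilon$ depending only on $\varepsilon$ and $l_{\min}$. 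Since $W^{1,2}(0,l_e) \hookrightarrow L^2(0,l_e)$ compactly on each bounded edge and $\Lambda$ is finite, a closed subspace of $\bigoplus_{e\in\Lambda} W^{1,2}(0,l_e)$ embeds compactly into $\bigoplus_{e\in\Lambda} L^2(0,l_e)$, whence $H_{\Lambda}^{\mathrm{D}}(\alpha^{\omega})$ has compact resolvent.

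\emph{Weyl law.} The plan is to bracket $H_{\Lambda}^{\mathrm{D}}(\alpha^{\omega})$ \emph{from below} by a decoupled operator, so that its eigenvalue counting function $N(S) := \#\{j : E_j^{\Lambda,\mathrm{D}} \le S\}$ is bounded \emph{above}. Using $V_e \ge c_0$, $\alpha^{\omega}_v \ge \alpha_-$, $\sum_{v\in\mathcal{V}_\Lambda}|f(v)|^2 \le \sum_{e\in\Lambda}(|f_e(0)|^2 + |f_e(l_e)|^2)$ and the trace inequality above with $\varepsilon$ chosen depending on $\alpha_-$, one obtains $\mathfrak{h}^{\omega,\mathrm{D}}_{\Lambda}[f] \ge \tfrac12\sum_{e\in\Lambda}\|f_e'\|^2 - C'\sum_{e\in\Lambda}\|f_e\|^2$ for all $f \in D(\mathfrak{h}^{\omega,\mathrm{D}}_{\Lambda})$, with $C' = C'(c_0,\alpha_-,l_{\min})$. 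The right-hand side is the form of $B := \tfrac12\bigoplus_{e\in\Lambda}(-\Delta^{\mathrm{N}}_{[0,l_e]}) - C'$ evaluated on $D(\mathfrak{h}^{\omega,\mathrm{D}}_{\Lambda})$, which is contained in the form domain $\bigoplus_{e\in\Lambda}W^{1,2}(0,l_e)$ of $B$; hence $H_{\Lambda}^{\mathrm{D}}(\alpha^{\omega}) \ge B$ in the form sense and $N(S) \le N_B(S)$. Finally $N_B(S)$ is explicit: the Neumann Laplacian on $[0,l_e]$ has eigenvalues $\pi^2 k^2/l_e^2$, $k \ge 0$, so each edge contributes at most $1 + l_{\max}\sqrt{2(S+C')_+}/\pi$ eigenvalues of $B$ below $S$; summing over $e \in \Lambda$ yields $N(S) \le C\,|\Lambda|$ with $C = C(S,c_0,\alpha_-,l_{\min},l_{\max})$, which is the assertion.

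\emph{Main obstacle.} The only genuinely delicate point is the treatment of the vertex couplings when $\alpha_- < 0$: one needs the one--dimensional trace inequality with constants uniform over all edges (this is where $l_{\min}$ and $l_{\max}$ enter), and one must keep careful track of the direction of the form inequalities so that the counting--function estimate goes the right way. The remaining steps are routine.
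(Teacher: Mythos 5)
Your proposal is correct and follows essentially the same route as the paper: the operator inequality via inclusion and equality of the quadratic forms on the decoupled Dirichlet form domain, and the Weyl law via a lower form bound by the Neumann-decoupled Laplacian obtained from the one-dimensional trace inequality with constants uniform in $l_{\min}, l_{\max}$. The only cosmetic difference is the compact-resolvent step, where you invoke the compact embedding of the form domain into $L^2$ while the paper deduces it from the divergence of the eigenvalues of the comparison operator; both arguments are standard and interchangeable here.
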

\begin{proof}
The bracketing result follows \cite[Lemma 4.2]{LPPV}, namely, $D(\mathfrak{h}_{\Lambda}^{\omega,\mathrm{D}}) \mathop \oplus D(\mathfrak{h}_{\Lambda^c}^{\omega,\mathrm{D}}) \subset D(\mathfrak{h}^{\omega})$ since a function in $D(\mathfrak{h}_{\Lambda}^{\omega,\mathrm{D}}) \mathop \oplus D(\mathfrak{h}_{\Lambda^c}^{\omega,\mathrm{D}})$ is automatically continuous at all $v$. Moreover, if $f = f_1 \mathop \oplus f_2 \in D(\mathfrak{h}_{\Lambda}^{\omega,\mathrm{D}}) \mathop \oplus D(\mathfrak{h}_{\Lambda^c}^{\omega,\mathrm{D}})$, then $\mathfrak{h}_{\Lambda}^{\omega,\mathrm{D}}[f_1] + \mathfrak{h}_{\Lambda^c}^{\omega,\mathrm{D}}[f_2] = \mathfrak{h}^{\omega}[f]$ because $f(v)=0$ on boundary vertices. Thus, $H \le H_{\Lambda}^{\mathrm{D}} \mathop \oplus H_{\Lambda^c}^{\mathrm{D}}$. 

Now suppose $\Lambda$ is finite and as in \cite{EHS}, consider the Neumann-decoupled Laplacian $-\Delta_{\Lambda}^{\text{dec},\mathrm{N}}$ defined via the form $\mathfrak{k}[f] = \sum_{e \in \Lambda} \|f_e'\|^2_{L^2[0,l_e]}$ with $D(\mathfrak{k}) = \mathop \oplus_{e \in \Lambda} W^{1,2}(0,l_e)$. Then $D(\mathfrak{h}^{\omega,\mathrm{D}}_{\Lambda}) \subset D(\mathfrak{k})$ and $\mathfrak{h}^{\omega,\mathrm{D}}_{\Lambda}[f] \ge \mathfrak{k}[f] + c_0\|f\|^2 + \alpha_- \sum_{v \in G \cap \mathcal{V}_{\Lambda}} |f(v)|^2 \ge \frac{1}{2}((\mathfrak{k}+C)[f])$ for some $C=C(l_{\min},l_{\max},\alpha_-,c_0)$ by standard trace estimates, see e.g. \cite[Lemma 8]{Ku}. Thus, $H_{\Lambda}^{\mathrm{D}}(\alpha^{\omega}) \ge \frac{1}{2}(-\Delta_{\Lambda}^{\text{dec},\mathrm{N}}+C)$. But since $-\Delta_{\Lambda}^{\text{dec},\mathrm{N}} = \mathop \oplus_{e \in \Lambda} -\Delta_{(0,l_e)}^{\mathrm{N}}$, its eigenvalues $E_j^{\text{dec},\Lambda}$ are just the eigenvalues $E_k(-\Delta_{(0,l_e)}^{\mathrm{N}}) = \frac{\pi^2 k^2}{4l_e^2}$ with multiplicity $|\Lambda|$. In particular, $E_j^{\text{dec},\Lambda} \to \infty$ as $j \to \infty$, hence $E_j^{\Lambda,\mathrm{D}} \to \infty$ as $j \to \infty$ and $H_{\Lambda}^{\mathrm{D}}(\alpha^{\omega})$ has a compact resolvent by \cite[Theorem XIII.64]{RS}. Moreover, we have $E_j^{\Lambda,\mathrm{D}} \ge \frac{1}{2}(E_j^{\text{dec},\Lambda}+C)$. By the explicit form of $E_j^{\text{dec},\Lambda}$, we know that $E_j^{\text{dec},\Lambda} > 2S-C$ if $j > C_2 |\Lambda|$ for some $C_2 = C_2(l_{\max},S,C)$. Thus, $E_j^{\Lambda,\mathrm{D}} > S$ if $j > C_2 |\Lambda|$ and we are done.
\end{proof}

We may now state our Wegner bound. Fix $q>\alpha_+$ and let $H_0$, $H_q$ be the operators corresponding to $\mathfrak{h}_0[f] = \sum_{e \in \mathcal{E}} \big(\|f_e'\|_{L^2[0,l_e]}^2+\langle V_e f_e, f_e \rangle \big)$ and $\mathfrak{h}_q[f] = \mathfrak{h}_0[f] + q\sum_{v \in G} |f(v)|^2$ respectively, with $D(\mathfrak{h}_0)=D(\mathfrak{h}_q)=D(\mathfrak{h}^{\omega})$. Let $\mathcal{I}_F := G \cap \mathcal{V}_{\Lambda}$ and $s_F(\prob,\varepsilon)$ as in (\ref{eq:sP0}).

\begin{thm}
Let $I=(E_1,E_2)$ be an open interval. 

There exists $C_W=C_W(E_2,c_0,\alpha_-,l_{\min},l_{\max})>0$ such that for any finite $\Lambda \subset \mathcal{E}$ and any $q>\alpha_+$, if $E_2<E_q := \inf \sigma(H_q)$, then
\[
\expect \{ \tr[ \chi_I(H^{\mathrm{D}}_{\Lambda}(\alpha^{\omega}))] \} \le C_W \cdot |\Lambda| \cdot |\mathcal{I}_F| \cdot s_F\Big(\prob, \frac{q-\alpha_-}{E_q-E_2} |I|\Big) \, .
\]
\end{thm}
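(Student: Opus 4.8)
The strategy is to verify that the operators $A(\alpha^\omega) := H^{\mathrm D}_\Lambda(\alpha^\omega) - E_q$ fall under Theorem~\ref{thm:lppvcov} with $\zeta = 1$, after shifting the interval $I$ down by $E_q$, and then to read off the constants. This is exactly the pattern used in the proof of Theorem~\ref{thm:multicov}(2), and the quantum graph is in fact cleaner because the randomness enters the form linearly through the vertex values $|f(v)|^2$.

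\textbf{Step 1: shifting.} Put $A(\alpha^\omega) := H^{\mathrm D}_\Lambda(\alpha^\omega) - E_q$ and $I' := (E_1 - E_q, E_2 - E_q)$, so $|I'| = |I|$ and, since $E_2 < E_q$, the right endpoint of $I'$ is negative: $E_2 - E_q < 0$. As $\chi_I(\lambda) = \chi_{I'}(\lambda - E_q)$, we have $\tr[\chi_I(H^{\mathrm D}_\Lambda(\alpha^\omega))] = \tr[\chi_{I'}(A(\alpha^\omega))]$, so it suffices to bound the expectation of the latter.

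\textbf{Step 2: form identity of the LPPV type.} The form of $A(\alpha^\omega)$ is $\mathfrak a^\omega[f] = \mathfrak h^{\omega,\mathrm D}_\Lambda[f] - E_q\|f\|^2$, and $D(\mathfrak a^\omega) = D(\mathfrak h^{\omega,\mathrm D}_\Lambda)$ is $\omega$-independent (Hypothesis (C.4) holds in the form sense). Writing out $\mathfrak h^{\omega,\mathrm D}_\Lambda$ and isolating the random terms, for $u \in D(\mathfrak a^\omega)$ and with $\mathcal I_F = G \cap \mathcal V_\Lambda$ one gets
\[
f_u(\alpha^\omega) = \mathfrak a^\omega[u] = a(u) - \sum_{v \in \mathcal I_F}(q - \alpha^\omega_v)\,b_v(u),
\]
where $b_v(u) := |u(v)|^2 \ge 0$ and $a(u) := \mathfrak h^{\omega,\mathrm D}_\Lambda[u]\big|_{\alpha^\omega \equiv q \text{ on } G\cap\mathcal V_\Lambda} - E_q\|u\|^2 = \mathfrak h^{\mathrm D}_{q,\Lambda}[u] - E_q\|u\|^2$, with $\mathfrak h^{\mathrm D}_{q,\Lambda}$ the form obtained by replacing each coupling by $q$. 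Since $q > \alpha_+$ one checks $\mathfrak h^{\mathrm D}_{q,\Lambda} \ge H_q$ restricted with Dirichlet conditions, which shifts the spectrum up (as in Lemma~\ref{lem:potpourri}'s bracketing, restricting only raises the bottom), so $\inf\sigma(H^{\mathrm D}_{q,\Lambda}) \ge E_q$, hence $a(u) \ge 0$. Thus the representation \eqref{eq:lppv} holds with $\zeta = 1$, $q > q_+ = \alpha_+$, $q_- = \alpha_-$, and $a(u), b_v(u) \ge 0$.

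\textbf{Step 3: remaining hypotheses and conclusion.} Hypotheses (C.1)--(C.3) must be checked for $A(\alpha^\omega)$ on $I'$: (C.1) holds since $\prob$ has no atoms on $[\alpha_-,\alpha_+]^G$; (C.2) holds because $H^{\mathrm D}_\Lambda(\alpha^\omega)$ is self-adjoint, bounded below and (by Lemma~\ref{lem:potpourri}) has compact resolvent, hence an orthonormal eigenbasis; (C.3) holds with $K = C\cdot|\Lambda|$ for $C = C(E_2, c_0, \alpha_-, l_{\min}, l_{\max})$, which is precisely the Weyl law of Lemma~\ref{lem:potpourri} applied with $S = E_2$ (eigenvalues of $H^{\mathrm D}_\Lambda$ above $E_2$ correspond to eigenvalues of $A$ outside $\overline{I'}$). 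Now apply Theorem~\ref{thm:lppvcov} to $A(\alpha^\omega)$ in $I'$, which is an interval with right endpoint $E_2 - E_q < 0$; its $\mathfrak F_\prob$-measurability gives, via the classical Kirsch--Martinelli-type argument alluded to after Theorem~\ref{thm:weg-stol}, genuine $\mathfrak F$-measurability so that $\overline{\expect} = \expect$. The theorem yields, in the $\zeta = 1$ case,
\[
\expect\{\tr[\chi_{I'}(A(\alpha^\omega))]\} \le 2K\cdot|\mathcal I_F|\cdot s_F\!\Big(\prob, \frac{q - \alpha_-}{|E_2 - E_q|}\,|I'|\Big),
\]
and since $|E_2 - E_q| = E_q - E_2$ and $|I'| = |I|$, setting $C_W := 2C$ finishes the proof.

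\textbf{Main obstacle.} The only genuinely substantive point is Step 2, specifically verifying $a(u) \ge 0$: one must be careful that the "replace all couplings by $q$" form $\mathfrak h^{\mathrm D}_{q,\Lambda}$ is bounded below by $E_q = \inf\sigma(H_q)$ and not just by $\inf\sigma$ of some intermediate operator. This is the Dirichlet-bracketing monotonicity ($H_q \le H^{\mathrm D}_{q,\Lambda}\oplus H^{\mathrm D}_{q,\Lambda^c}$, exactly as in Lemma~\ref{lem:potpourri}), combined with the trivial fact that $q > \alpha_+$ is consistent with the domain of $H_q$ having coupling $q$ on all of $G$; everything else is bookkeeping of constants and a routine invocation of the already-established abstract theorem and Weyl law.
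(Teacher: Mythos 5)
Your proposal is correct and follows essentially the same route as the paper: shift by $E_q$ to the interval $I'=(E_1-E_q,E_2-E_q)$ with negative right endpoint, write the shifted form as $\mathfrak{a}^{\omega}[u]=(\mathfrak{h}^{\mathrm{D}}_{q,\Lambda}-E_q)[u]-\sum_{v\in G\cap\mathcal{V}_{\Lambda}}(q-\alpha^{\omega}_v)|u(v)|^2$, use the Dirichlet bracketing of Lemma~\ref{lem:potpourri} to get $a(u)\ge 0$ and its Weyl law to get $K=C|\Lambda|$, and apply Theorem~\ref{thm:lppvcov} with $\zeta=1$, $q_-=\alpha_-$. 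The only differences are expository (you spell out the measurability upgrade and the choice $S=E_2$ in the Weyl law, which the paper leaves implicit).
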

If $\alpha_+<0$, we may take $q=0$ and obtain a Wegner bound below $E_0 := \inf(\sigma(H_0))$. This result is non-trivial at least when $G = \Z^d$ and the disorder is high, because $H(\alpha^{\omega})$ will have some spectrum below $E_0$ in this case almost surely; see \cite[Theorem 12]{KP1} and the remark thereafter. If $\alpha_+ \ge 0$, the non-triviality will be ensured if $E_0<E_q$.
\begin{proof}
Let $A(\omega) := H^{\mathrm{D}}_{\Lambda}(\alpha^{\omega}) - E_q$ and $I' = (E_1-E_q,E_2-E_q)$. Then $\chi_I(\lambda) = \chi_{I'}(\lambda-E_0)$, hence $\expect \{ \tr [ \chi_I (H^{\mathrm{D}}_{\Lambda}(\alpha^{\omega})) ] \} = \expect \{ \tr [ \chi_{I'} (A(\omega)) ] \}$. Moreover, $A(\omega)$ corresponds to the form $\mathfrak{a}^{\omega}[f] = (\mathfrak{h}_{\Lambda}^{\omega,\mathrm{D}}-E_q)[f]$ with $\mathcal{D} := D(\mathfrak{a}^{\omega}) = D(\mathfrak{h}_{\Lambda}^{\omega,\mathrm{D}})$ non-random, and we have $\mathfrak{a}^{\omega}[f] = \mathfrak{h}_1[f] + \sum_{v \in G \cap \mathcal{V}_{\Lambda}} (\alpha^{\omega}_v-q) |f(v)|^2$, where $\mathfrak{h}_1 := \mathfrak{h}_{q,\Lambda}^{\mathrm{D}} - E_q$. By the bracketing in Lemma~\ref{lem:potpourri}, we have $\mathfrak{h}_1 \ge 0$. Thus, $A(\omega)$ satisfies Hypotheses (C.1) to (C.4) in $I'$, with a Weyl constant $K = C |\Lambda|$ from Lemma~\ref{lem:potpourri}, and the claim follows from Theorem~\ref{thm:lppvcov}.
\end{proof}

\section{Proofs of the general theorems}            \label{sec:proofs}
\subsection{Proof of Proposition~\ref{prp:local}}           \label{sec:proof1}
\begin{proof}
Put $\chi_I := \chi_I(H(\omega))$. By hypothesis, for a.e. $\omega$,
\begin{equation}
\tr [ \chi_I ] \le \gamma^{-1} \tr[ \chi_I W \chi_I ] = \gamma^{-1} \tr[ \chi_I W ] \, .     \label{eq:mah}
\end{equation}
Given $j \in J$ we have
\begin{align*}
\langle \chi_I W e_j, e_j \rangle & = \langle \chi_I W e_j, \chi_I e_j \rangle \le \|\chi_I W e_j\| \cdot \| \chi_I e_j \| \\
& \le \frac{c}{2} \|\chi_I W e_j\|^2 + \frac{1}{2c} \|\chi_I e_j\|^2 = \frac{c}{2} \langle W \chi_I W e_j, e_j \rangle + \frac{1}{2c} \langle \chi_I e_j, e_j \rangle
\end{align*}
for any $c>0$. Summing over $j \in J$ and choosing $c = \gamma^{-1}$ we get by (\ref{eq:mah})
\[
\tr[\chi_I] \le \gamma^{-1} \Big(\frac{\gamma^{-1}}{2} \tr[W \chi_I W] + \frac{1}{2\gamma^{-1}} \tr[ \chi_I] \Big) \, . 
\]
Thus,
\begin{align*}
\tr [ \chi_I ] & \le \gamma^{-2} \tr[W \chi_I W ] \\
& = \gamma^{-2} \sum_{j \in J} \sum_{\alpha, \alpha' \in \mathcal{I}_F} \langle U_{\alpha} \chi_I U_{\alpha'} e_j, e_j \rangle \\
& = \gamma^{-2} \sum_{j \in J} \sum_{\alpha, \alpha' \in \mathcal{I}_j} \langle \chi_I U_{\alpha'} e_j, \chi_I U_{\alpha} e_j \rangle \\
& \le \frac{\gamma^{-2}}{2} \sum_{j \in J} \sum_{\alpha, \alpha' \in \mathcal{I}_j} \big( \| \chi_I U_{\alpha'} e_j \|^2 + \| \chi_I U_{\alpha} e_j \|^2 \big) \\
& \le \gamma^{-2} C_{\text{fin}} \sum_{j \in J} \sum_{\alpha \in \mathcal{I}_j} \| \chi_I U_{\alpha} e_j \|^2 \, .
\end{align*}

This completes the proof, since $\| \chi_I U_{\alpha} e_j \|^2 = \langle U_{\alpha} \chi_I U_{\alpha} e_j, e_j \rangle$, and the terms with $j \notin J_{\text{eff}}$ are zero.
\end{proof}

\subsection{Proof of Theorem~\ref{thm:1}}           \label{sec:proof2}
We first recall \cite[Theorem 3.2]{St10}:

\begin{av*}     \label{thm:av}
Let $\mu$ be a probability measure on $\R$ and $\mathcal{H}$ a Hilbert space. If $A$ is a self-adjoint operator and $0 \le B$ is a bounded operator on $\mathcal{H}$, then for any interval $I$ and any $\phi \in \mathcal{H}$ we have
\[ 
\int_{\R} \langle B^{1/2}\chi_I(A+tB)B^{1/2} \phi , \phi \rangle \dd\mu(t) \le 6 \|B \| \|\phi \|^2 s(\mu,|I|) \, ,
\]
where $s(\mu,\varepsilon) := \sup_{E \in \R} \mu(E,E+\varepsilon)$.
\end{av*}

Note that we could use instead the spectral averaging of \cite{CHK}; in this case the upper bound should be replaced by $4 \|B\|(1+\|B\|) \|\phi\|^2 s(\mu,|I|)$.

The proof in \cite{St10} actually gives $s(\mu,\varepsilon) = \sup_{E \in \R} \mu[E,E+\varepsilon)$, but since
\begin{equation}
\sup_{E \in \R} \mu[E,E+\varepsilon) = \sup_{E \in \R} \mu(E,E+\varepsilon] = \sup_{E \in \R} \mu(E,E+\varepsilon)      \label{eq:sup}
\end{equation}
(see Section~\ref{sub:technical}), the above bound holds.

\begin{proof}[Proof of Theorem~\ref{thm:1}]
To show that $\chi_I(H(\omega))$ is weakly measurable, it suffices to show that $H(\omega)$ is weakly measurable; see \cite{KM82}. Let $\varphi, \psi \in \mathcal{H}$ and let $g(\omega) = \langle H(\omega) \varphi, \psi \rangle = \langle H_0 \varphi, \psi \rangle + \sum_{\alpha \in \mathcal{I}_F} \omega_{\alpha} \langle U_{\alpha} \varphi, \psi \rangle$. Then $g$ only depends on $(\omega_{\alpha})_{\alpha \in \mathcal{I}_F}$, i.e. $\{ \omega : g(\omega) \ge a \} = A \times \mathcal{B}^{\mathcal{I} \setminus \mathcal{I}_F}$ for some $A \subseteq \mathcal{B}^{\mathcal{I}_F}$, so by definition of $\mathfrak{F}$, it suffices to show that $A \in \mathop \otimes_{\alpha \in \mathcal{I}_F} \mathfrak{B}$. In turn, it suffices to show that the map $g_0: \mathcal{B}^{\mathcal{I}_F} \to \R$ given by $g_0: (\omega_{\alpha})_{\alpha \in \mathcal{I}_F} \mapsto \langle H_0 \varphi, \psi \rangle + \sum_{\alpha \in \mathcal{I}_F} \omega_{\alpha} \langle U_{\alpha} \varphi, \psi \rangle$ is Borel measurable, but this is obvious since it is affine. Hence, $\chi_I(H(\omega))$ is weakly measurable and $\tr[ \chi_I(H(\omega))]$ is measurable.

We may thus integrate in Proposition~\ref{prp:local} to get
\[ 
\expect \{ \tr[ \chi_I(H(\omega))] \} \le \gamma^{-2} C_{\text{fin}} \sum_{j \in J_{\text{eff}}} \sum_{\alpha \in \mathcal{I}_j} \expect \{ \langle U_{\alpha} \chi_I(H(\omega)) U_{\alpha} e_j, e_j \rangle \} \, .
\]
Fix $j \in J_{\text{eff}}$, $\alpha \in \mathcal{I}_j$ and put $\phi:= U_{\alpha}^{1/2} e_j$. Then by \cite[Theorem 10.2.1]{Dud},
\[ 
\expect \{ \langle U_{\alpha} \chi_I(H(\omega)) U_{\alpha} e_j, e_j \rangle \} = \expect_{Y_{\alpha}} \Big\{ \int_{\mathcal{B}} \langle U_{\alpha}^{1/2} \chi_I(H(\omega)) U_{\alpha}^{1/2} \phi, \phi \rangle \dd\mu_{\hat{\omega}_{\alpha}}(\omega_{\alpha}) \Big\} \, .
\]
Using the spectral averaging with $A=H_1 + \sum_{\beta \neq \alpha} \omega_{\beta} U_{\beta}$, $B = U_{\alpha}$ and $t= \omega_{\alpha}$, we get
\[ 
\expect \{ \langle U_{\alpha} \chi_I(H(\omega)) U_{\alpha} e_j, e_j \rangle \} \le 6 \| U_{\alpha} \| \|U_{\alpha}^{1/2} e_j \|^2 \expect_{Y_{\alpha}} \{s(\mu_{\hat{\omega}_{\alpha}},|I|) \} \le 6 C^2_U \expect_{Y_{\alpha}}\{ s(\mu_{\hat{\omega}_{\alpha}},|I|) \}. 
\]
Since $\expect_{Y_{\alpha}}\{s(\mu_{\hat{\omega}_{\alpha}},|I|) \} \le s_F(\prob,|I|)$, the proof is complete.
\end{proof} 

\subsection{Proof of Theorem~\ref{thm:weg-stol}}           \label{sec:proof3}
Through this subsection $(\Omega,\mathfrak{F},\prob)$ is a probability space with $\Omega := \mathcal{C}^{\mathcal{I}}$, where $\mathcal{C} \subseteq \R$ is an interval and $\mathcal{I}$ is a countable index set. $\mathfrak{F}_{\prob}$ denotes the $\prob$-completion of $\mathfrak{F}$. We fix a finite set $\mathcal{I}_F \subseteq \mathcal{I}$ and denote by $\one_F$ the element $\one_F=(x_{\alpha}) \in \R^{\mathcal{I}}$ such that $x_{\alpha} = 1$ if $\alpha \in \mathcal{I}_F$ and $x_{\alpha}=0$ otherwise. 

We will use the fact that monotone functions $\varphi:\Omega\to \R$ which depend on finitely many $\omega_{\alpha}$ are $\mathfrak{F}_{\prob}$-measurable; this is proved in Lemma~\ref{lem:monotone}. Note that for any fixed $x \in \R^{\mathcal{I}}$, the map $\varphi(\omega -x)$ is also monotone increasing, hence $\mathfrak{F}_{\prob}$-measurable. We may thus state the following lemma, whose basic idea stem from \cite{Sto2}, see also \cite{C} and \cite{BCSS}.

\begin{lem}      \label{lem:auxi}
Suppose $\varphi : \Omega \to \R$ is monotone increasing and depends on finitely many $\omega_{\alpha}$. Given $c \in \R$ and $\eta>0$, define $A := \{ \omega : \varphi(\omega) \le c \}$, $A^{\eta} := \{ \omega : \omega - \eta \cdot \one_F \in \Omega \text{ and } \varphi(\omega - \eta \cdot \one_F) \le c\}$, $B := \{ \omega : \varphi(\omega) \ge c \}$ and $B^{\eta} := \{ \omega : \omega + \eta \cdot \one_F \in \Omega \text{ and } \varphi(\omega + \eta \cdot \one_F) \ge c\}$. Then
\[
\overline{\prob}(A^{\eta} \setminus A) \le |\mathcal{I}_F| \cdot s_F(\prob, \eta) \quad \text{and} \quad \overline{\prob} (B^{\eta} \setminus B) \le |\mathcal{I}_F| \cdot s_F(\prob, \eta) \, ,
\]
where $\overline{\prob}$ denotes the extension of $\prob$ to $\mathfrak{F}_{\prob}$ and $s_F(\prob,\eta)$ is as in \textup{(\ref{eq:sP0})}.
\end{lem}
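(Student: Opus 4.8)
The plan is to reduce everything to a one-dimensional statement about the conditional measures $\mu_{\hat\omega_\alpha}$ and then sum over $\alpha \in \mathcal{I}_F$. By symmetry (replace $\varphi$ by $-\varphi(-\,\cdot\,)$, or just run the argument for $B$ analogously) it suffices to bound $\overline{\prob}(A^\eta \setminus A)$. First I would observe that $A^\eta \setminus A$ consists of those $\omega$ for which $\varphi(\omega - \eta\cdot\one_F) \le c < \varphi(\omega)$, i.e. as we slide the coordinates indexed by $\mathcal{I}_F$ down by $\eta$, the value of $\varphi$ crosses the level $c$. The idea is to perform this sliding one coordinate at a time: enumerate $\mathcal{I}_F = \{\alpha_1,\dots,\alpha_m\}$ and for $0 \le k \le m$ let $v_k \in \R^{\mathcal{I}}$ be the vector that equals $\eta$ on $\{\alpha_1,\dots,\alpha_k\}$ and $0$ elsewhere, so $v_0 = 0$ and $v_m = \eta\cdot\one_F$. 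Set $A_k := \{\omega : \omega - v_k \in \Omega,\ \varphi(\omega - v_k) \le c\}$. Then $A_0 = A$, $A_m = A^\eta$, and monotonicity of $\varphi$ gives $A_0 \subseteq A_1 \subseteq \cdots \subseteq A_m$ modulo the domain constraint — more precisely, each $A_k \setminus A_{k-1}$ is contained in the set where lowering only the $\alpha_k$-coordinate by $\eta$ takes $\varphi$ from $>c$ to $\le c$. Hence $A^\eta \setminus A = \bigcup_{k=1}^m (A_k \setminus A_{k-1})$ (up to a $\prob$-null set coming from boundary effects of $\Omega$), and $\overline{\prob}(A^\eta\setminus A) \le \sum_{k=1}^m \overline{\prob}(A_k \setminus A_{k-1})$.

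Next I would estimate a single term $\overline{\prob}(A_k \setminus A_{k-1})$ using the disintegration of $\prob$ along the coordinate $\alpha_k$ set up in the Notations section: $\prob(\cdot) = \int_{Y_{\alpha_k}} \mu_{\hat\omega_{\alpha_k}}(\cdot_{\hat\omega_{\alpha_k}})\,\dd\prob_{Y_{\alpha_k}}(\hat\omega_{\alpha_k})$. Fix $\hat\omega_{\alpha_k}$; since $\varphi$ depends on only finitely many coordinates, the function $t \mapsto \varphi$ evaluated with the $\alpha_k$-coordinate equal to $t$ (and the others fixed appropriately) is monotone increasing in $t$, so $\{t : \varphi < c \text{ at level } t\}$ is a left-ray or empty. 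Therefore the section of $A_k \setminus A_{k-1}$ in the $\alpha_k$-variable is contained in an interval of length at most $\eta$: it is the set of $t$ for which shifting by $\eta$ crosses $c$, which lies in some $(\lambda, \lambda+\eta]$. Consequently $\mu_{\hat\omega_{\alpha_k}}\big((A_k\setminus A_{k-1})_{\hat\omega_{\alpha_k}}\big) \le \sup_{E\in\R}\mu_{\hat\omega_{\alpha_k}}(E,E+\eta)$ (using the identities (\ref{eq:sP1}), (\ref{eq:sup}) to pass freely between half-open and open intervals). Integrating over $\hat\omega_{\alpha_k}$ gives $\overline{\prob}(A_k\setminus A_{k-1}) \le \expect_{Y_{\alpha_k}}\{\sup_E \mu_{\hat\omega_{\alpha_k}}(E,E+\eta)\} \le s_F(\prob,\eta)$. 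Summing the $m = |\mathcal{I}_F|$ terms yields $\overline{\prob}(A^\eta \setminus A) \le |\mathcal{I}_F|\cdot s_F(\prob,\eta)$, and the same argument applied to $-\varphi$ handles $B$.

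The main obstacle I anticipate is bookkeeping rather than conceptual: making the telescoping decomposition $A^\eta\setminus A = \bigcup_k (A_k\setminus A_{k-1})$ precise while correctly handling the boundary of $\Omega$ (the condition $\omega - v_k \in \Omega$ may fail for some $\omega$), and carefully justifying the measurability of all the intermediate sets — this is where one invokes Lemma~\ref{lem:monotone} to know each $A_k$ lies in $\mathfrak{F}_{\prob}$, together with the measurability of $\hat\omega_\alpha \mapsto \sup_E \mu_{\hat\omega_\alpha}(E,E+\eta)$ furnished by (\ref{eq:sP1}). A secondary point requiring a little care is the monotonicity reduction: one must check that fixing all coordinates except $\alpha_k$ at the shifted/unshifted values consistently produces a genuinely monotone one-variable function, so that the relevant section really is an interval of length $\le \eta$; this follows directly from the definition of monotone increasing for $\varphi$, but should be spelled out.
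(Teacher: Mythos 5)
Your proposal follows essentially the same route as the paper's proof: the same one-coordinate-at-a-time telescoping of the shifted sets, the same monotonicity argument showing each difference set has sections of length at most $\eta$ in the moving coordinate, and the same conditional-measure estimate summed over the $|\mathcal{I}_F|$ coordinates. The only point you wave at as bookkeeping that the paper treats with an extra device is that the difference sets are merely $\mathfrak{F}_{\prob}$-measurable, so the disintegration stated for $\mathfrak{F}$ does not apply directly; the paper encloses each difference set in a set whose sections are honest intervals, disintegrates the completed measure, and then identifies the resulting conditional measures with the $\mu_{\hat{\omega}_{\alpha}}$ on rational intervals almost surely so that the bound is expressed through $s_F(\prob,\eta)$.
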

\begin{proof}
We prove the second bound; the first is similar. Let $\mathcal{I}_F = \{\alpha_1,\ldots,\alpha_m\}$ and $\mathcal{I}_k = \{\alpha_1,\ldots,\alpha_k \}$ for $1 \le k \le m$. Let $\one_j$ be the element $\one_j = (x_{\alpha}) \in \R^{\mathcal{I}}$ with $x_{\alpha} = 1$ if $\alpha \in \mathcal{I}_k$ and $x_{\alpha} = 0$ otherwise, so that $\one_m = \one_F$. Set
\[
B_0^{\eta} := B \quad \text{and} \quad B_j^{\eta} := \{\omega: \omega + \eta \cdot \one_j \in \Omega \text{ and } \varphi(\omega + \eta \cdot \one_j) \ge c \}
\]
for $1 \le j \le m$. Note that if $B_0,\ldots,B_m$ is any collection of sets, then one checks by induction that $B_m \setminus B_0 \subseteq \bigcup_{j=1}^m (B_j \setminus B_{j-1})$, so we have in particular
\begin{equation}
\overline{\prob} ( B_m^{\eta} \setminus B_0) \le \sum_{j=1}^m \overline{\prob} ( B_j^{\eta} \setminus B_{j-1}^{\eta} ) \, .           \label{eq:wegstol1}
\end{equation}

Now fix $j \in \{1,\ldots, m\}$, let $\hat{\omega}_j = (\omega_{\beta})_{\beta \neq \alpha_j} \in \mathcal{C}^{\mathcal{I} \setminus \{\alpha_j\}}$ and denote by $(x,\hat{\omega}_j)$ the element $(x_{\alpha}) \in \R^{\mathcal{I}}$ with $x_{\alpha_j} = x$ and $x_{\beta} = \omega_{\beta}$ for $\beta \neq \alpha_j$. Define the section
\[
C_{\hat{\omega}_j} := \{ x \in \mathcal{C} : (x,\hat{\omega}_j) \in B_j^{\eta} \setminus B_{j-1}^{\eta} \} = (B_j^{\eta} \setminus B_{j-1}^{\eta})_{\hat{\omega}_{\alpha_j}} \, .
\]

We show that $C_{\hat{\omega}_j}$ is contained in an interval of length $\eta$. If $C_{\hat{\omega}_j} = \emptyset$, this is clear, so suppose $x \in C_{\hat{\omega}_j}$. Fix $\delta \ge \eta$. If $x - \delta \in C_{\hat{\omega}_j}$, then $(x- \delta,\hat{\omega}_j) \in B_j^{\eta}$ and thus $\varphi\big( (x - \delta, \hat{\omega}_j) + \eta \cdot \one_j \big) \ge c$. But $\varphi$ is monotone increasing, so $\varphi\big( (x - \delta, \hat{\omega}_j) + \eta \cdot \one_j \big) = \varphi\big((x - \delta + \eta, \hat{\omega}_j) + \eta \cdot \one_{j-1} \big) \le \varphi\big((x, \hat{\omega}_j) + \eta \cdot \one_{j-1} \big) < c$, since $(x, \hat{\omega}_j) \notin B_{j-1}^{\eta}$\footnote{Note that if $\omega + \eta \cdot \one_j \in \Omega$, then $\omega_{\alpha}+\eta \in \mathcal{C}$ for any $\alpha \in \mathcal{I}_j$, so in particular for any $\alpha \in \mathcal{I}_{j-1}$ and thus $\omega + \eta \cdot \one_{j-1} \in \Omega$.}. This contradiction shows that $x - \delta \notin C_{\hat{\omega}_j}$ for any $\delta \ge \eta$, i.e. $C_{\hat{\omega}_j}$ is contained in a semi-open interval $I_{\hat{\omega}_j}$ of length $\eta$. Let $D_j^{\eta}$ be the set $B_j^{\eta} \setminus B_{j-1}^{\eta}$ with each section $C_{\hat{\omega}_j}$ replaced by $I_{\hat{\omega}_j}$. Then $B_j^{\eta} \setminus B_{j-1}^{\eta} \subseteq D_j^{\eta}$ and $I_{\hat{\omega}_j}$ is a Borel set for any $\hat{\omega}_j$. So applying \cite[Corollary 10.4.15]{Bog07} to $D^{\eta}_j$, taking $Y_j := \mathcal{C}^{\mathcal{I} \setminus \{\alpha_j\}}$ and using (\ref{eq:sup}), we may find $\overline{\mu}_{\hat{\omega}_j}$ such that
\begin{equation}
\overline{\prob}(B_j^{\eta} \setminus B_{j-1}^{\eta}) \le \overline{\expect}_{Y_j}\{ \overline{\mu}_{\hat{\omega}_j}(I_{\hat{\omega}_j}) \} \le \overline{\expect}_{Y_j} \Big\{ \sup_{E \in \R} \overline{\mu}_{\hat{\omega}_j}(E,E+\eta) \Big\} \, .         \label{eq:wegstol2}
\end{equation}
But $\prob\{ \omega_{\alpha_j} \in (E,E+\eta) \} = \expect_{Y_j} \{ \mu_{\hat{\omega}_j}(E,E+\eta) \} = \overline{\expect}_{Y_j} \{ \overline{\mu}_{\hat{\omega}_j}(E,E+\eta) \}$, hence $\mu_{\hat{\omega}_j}(E,E+\eta) = \overline{\mu}_{\hat{\omega}_j}(E,E+\eta)$ outside a $\overline{\prob}_{Y_j}$-null set $\Omega_E$. Let $\Omega_{\ast} = \cup_{E \in \Q} \Omega_E$. Then $\overline{\prob}_{Y_j}(\Omega_{\ast}) = 0$ and $\sup_{E \in \Q} \mu_{\hat{\omega}_j}(E,E+\eta) = \sup_{E \in \Q} \overline{\mu}_{\hat{\omega}_j}(E,E+\eta)$ for any $\hat{\omega}_j \notin \Omega_{\ast}$. So using (\ref{eq:sP1}),
\[
\overline{\expect}_{Y_j} \Big\{ \sup_{E \in \R} \overline{\mu}_{\hat{\omega}_j}(E,E+\eta) \Big\} = \expect_{Y_j} \Big\{ \sup_{E \in \R} \mu_{\hat{\omega}_j}(E,E+\eta) \Big\} \le s_F(\prob,\eta) \, ,
\]
and the claim follows by (\ref{eq:wegstol1}) and (\ref{eq:wegstol2}).
\end{proof}

We may now prove a first extension of Stollmann's Lemma from \cite{Sto2}. Namely, we allow intervals $\mathcal{C}$ and relax the diagonal condition by adding cutoffs $\chi_I(\varphi(\omega))$. The inclusion of cutoffs is actually immediate and will not be used in the proof of Theorem~\ref{thm:weg-stol}. However, this idea plays a major role in the proof of Theorem~\ref{thm:lppvcov}.

\begin{lem}             \label{lem:sto}
Let $I \subset \R$ an open interval. Suppose $\varphi : \Omega \to \R$ is monotone increasing, depends on finitely many $\omega_{\alpha}$ and satisfies
\begin{equation}
\big(\varphi(\omega) - \varphi(\omega - t\cdot \one_F) \big) \chi_I(\varphi(\omega)) \ge t \gamma \cdot \chi_I(\varphi(\omega)) \label{eq:proof0}
\end{equation}
for some $\gamma>0$ and all $t\ge0$ such that $\omega - t\cdot \one_F \in \Omega$. Then
\[ 
\overline{\prob} \{ \varphi(\omega) \in I \} \le \delta \cdot |\mathcal{I}_F| \cdot s_F\Big(\prob, \frac{|I|}{\gamma}\Big), \quad \text{ where } \delta = 
\begin{cases}
1 &\text{if } \inf \mathcal{C}=-\infty,\\
2&\text{otherwise.}\end{cases}
\]
This bound is also true if $\varphi$ is monotone decreasing and satisfies
\begin{equation}
\big(\varphi(\omega) - \varphi(\omega - t\cdot \one_F) \big) \chi_I(\varphi(\omega)) \le - t \gamma \cdot \chi_I(\varphi(\omega)) \, .   \label{eq:proof00}
\end{equation}
\end{lem}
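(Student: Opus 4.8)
The plan is to reduce the statement to Lemma~\ref{lem:auxi}. Write $I=(a,b)$ and put $\varepsilon:=|I|/\gamma=(b-a)/\gamma$. The basic observation is that if $\varphi(\omega)\in I$ then $\chi_I(\varphi(\omega))=1$, so \eqref{eq:proof0} yields $\varphi(\omega)-\varphi(\omega-t\cdot\one_F)\ge t\gamma$ for \emph{every} $t\ge0$ with $\omega-t\cdot\one_F\in\Omega$; in particular, whenever $\omega-\varepsilon\cdot\one_F\in\Omega$ we get $\varphi(\omega-\varepsilon\cdot\one_F)\le\varphi(\omega)-|I|<b-(b-a)=a$.

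Accordingly, I would split $\{\varphi(\omega)\in I\}=E_1\cup E_2$ with $E_1:=\{\varphi(\omega)\in I,\ \omega-\varepsilon\cdot\one_F\in\Omega\}$ and $E_2:=\{\varphi(\omega)\in I,\ \omega-\varepsilon\cdot\one_F\notin\Omega\}$. Both are $\mathfrak{F}_{\prob}$-measurable: $\{\varphi(\omega)\in I\}$ and $A^{\varepsilon}:=\{\omega-\varepsilon\cdot\one_F\in\Omega,\ \varphi(\omega-\varepsilon\cdot\one_F)\le a\}$ are measurable because $\varphi$ and $\omega\mapsto\varphi(\omega-\varepsilon\cdot\one_F)$ are monotone functions of finitely many coordinates (Lemma~\ref{lem:monotone}), while $\{\omega-\varepsilon\cdot\one_F\in\Omega\}$ is a cylinder set. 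For $E_1$ the displayed inequality gives $E_1\subseteq A^{\varepsilon}\setminus A$ with $A:=\{\varphi\le a\}$ — these are precisely the sets of Lemma~\ref{lem:auxi} with $c=a$, $\eta=\varepsilon$ — so $\overline{\prob}(E_1)\le|\mathcal{I}_F|\,s_F(\prob,\varepsilon)$.

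For $E_2$: if $\inf\mathcal{C}=-\infty$ then $\omega-\varepsilon\cdot\one_F\in\Omega$ for all $\omega$, so $E_2=\emptyset$ and the bound holds with $\delta=1$. Otherwise set $q_-:=\inf\mathcal{C}>-\infty$; then $\omega\in E_2$ forces $\omega_{\alpha}-\varepsilon\notin\mathcal{C}$ for some $\alpha\in\mathcal{I}_F$, and since $\omega_{\alpha}\in\mathcal{C}$ and $\mathcal{C}$ is an interval this means $\omega_{\alpha}$ lies in $\mathcal{C}$ at distance less than $\varepsilon$ from $q_-$, i.e.\ in a half-open interval of length $\varepsilon$ of the form $[q_-,q_-+\varepsilon)$ or $(q_-,q_-+\varepsilon]$. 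Using the disintegration of $\prob$ from the Notations together with \eqref{eq:sup}, each such event has probability at most $\expect_{Y_{\alpha}}\{\sup_{E}\mu_{\hat{\omega}_{\alpha}}[E,E+\varepsilon)\}=\expect_{Y_{\alpha}}\{\sup_{E}\mu_{\hat{\omega}_{\alpha}}(E,E+\varepsilon)\}\le s_F(\prob,\varepsilon)$, so $\overline{\prob}(E_2)\le\sum_{\alpha\in\mathcal{I}_F}\prob\{\omega_{\alpha}\in\cdots\}\le|\mathcal{I}_F|\,s_F(\prob,\varepsilon)$. Adding the two contributions gives $\overline{\prob}\{\varphi(\omega)\in I\}\le2|\mathcal{I}_F|\,s_F(\prob,|I|/\gamma)$.

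Finally, the monotone decreasing case reduces to the increasing one: $\psi:=-\varphi$ is monotone increasing, depends on finitely many coordinates, and \eqref{eq:proof00} together with $\chi_I(\varphi(\omega))=\chi_{-I}(\psi(\omega))$ shows that $\psi$ satisfies \eqref{eq:proof0} with the open interval $-I$ (and $|{-I}|=|I|$), while $\{\psi(\omega)\in-I\}=\{\varphi(\omega)\in I\}$ and $\delta$ depends only on $\mathcal{C}$. The only genuinely delicate point is the treatment of $E_2$: one must check that the sliver of $\mathcal{C}$ near $q_-$ is exactly a half-open interval of length $\varepsilon$, so that \eqref{eq:sup} applies and one obtains $s_F(\prob,\varepsilon)$ rather than $s_F(\prob,2\varepsilon)$; the measurability bookkeeping for $E_1$ and $A^{\varepsilon}$ is routine given the remark that $\varphi(\omega-x)$ is again monotone.
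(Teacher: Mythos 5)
Your proposal is correct and follows essentially the same route as the paper: the same decomposition into the event where $\omega-\eta\cdot\one_F\in\Omega$ (handled via the inclusion in $A^{\eta}\setminus A$ and Lemma~\ref{lem:auxi}) and the boundary event near $\inf\mathcal{C}$ (handled via \eqref{eq:sup} and a union bound), with the decreasing case reduced to the increasing one by passing to $-\varphi$ and the reflected interval. The only differences are cosmetic (explicit measurability bookkeeping and the case distinction $q_-\in\mathcal{C}$ versus $q_-\notin\mathcal{C}$, which the paper also notes).
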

\begin{proof}
Let $I=(a,b)$, $\varepsilon := b-a$ and $\eta := \frac{\varepsilon}{\gamma}$. We have
\[
\overline{\prob} \{\varphi(\omega) \in I \} \le \overline{\prob} \{ \omega : \varphi(\omega) \in I \text{ and } \omega - \eta \cdot \one_F \in \Omega \} + \prob\{ \omega - \eta \cdot \one_F \notin \Omega \} \, .
\]

Put $A := \{ \omega : \varphi(\omega) \le a \}$, $A^{\eta} := \{ \omega : \omega - \eta \cdot \one_F \in \Omega \text{ and } \varphi(\omega - \eta \cdot \one_F) \le a \}$ and let $\omega \in \mathcal{A} :=\{ \omega : \varphi(\omega) \in I \text{ and } \omega - \eta \cdot \one_F \in \Omega \}$. Then by (\ref{eq:proof0}),
\[
\varphi(\omega - \eta \cdot \one_F) \le \varphi(\omega) - \gamma \eta = \varphi(\omega) - \varepsilon \le b - \varepsilon = a \, .
\]
Hence $\omega \in A^{\eta}$. Furthermore, $\varphi(\omega) \in I$ implies $\varphi(\omega) > a$ and thus $\omega \notin A$. Hence $\mathcal{A} \subseteq A^{\eta} \setminus A$ and $\overline{\prob} (\mathcal{A}) \le |\mathcal{I}_F| \cdot s_F(\prob,\eta)$ by Lemma~\ref{lem:auxi}.

If $\inf \mathcal{C} = - \infty$, then $\prob\{ \omega - \eta \cdot \one_F \notin \Omega \} = 0$, since $\mathcal{C}$ is an interval. Otherwise, let $q_-:= \inf \mathcal{C}$. If $q_- \in \mathcal{C}$, then using (\ref{eq:sup}),
\[
\prob\{ \omega - \eta \cdot \one_F \notin \Omega \} = \prob \{ \omega_{\alpha} \in [q_-,q_- + \eta) \text{ for some } \alpha \in \mathcal{I}_F \} \le |\mathcal{I}_F| \cdot s_F(\prob,\eta)
\]
since $\prob\{\omega_{\alpha} \in [q_-,q_-+\eta)\} = \expect_{Y_{\alpha}} \{ \mu_{\hat{\omega}_{\alpha}}[q_-,q_-+\eta)\}$. If $q_- \notin \mathcal{C}$, replace $[q_-,q_- + \eta)$ by $(q_-,q_- + \eta]$.

Finally, if $\varphi$ is decreasing and satisfies (\ref{eq:proof00}), then $\psi:=-\varphi$ is increasing and $\chi_I(\varphi(\omega)) = \chi_{I'}(\psi(\omega))$, where $I':=(-b,-a)$, hence $\psi$ satisfies (\ref{eq:proof0}) in $I'$. Applying the first part we obtain $\overline{\prob} \{ \varphi(\omega) \in I \} = \overline{\prob} \{ \psi(\omega) \in I' \}
\le \delta \cdot |\mathcal{I}_F| \cdot s_F\Big(\prob, \frac{|I'|}{\gamma}\Big)$.
\end{proof} 

\begin{proof}[Proof of Theorem~\ref{thm:weg-stol}]
Let $\{ \varphi_n(\omega) \}$ be an orthonormal basis of eigenvectors of $H(\omega)$ with eigenvalues $\lambda_n(\omega)$. Then $\langle \chi_I(H(\omega)) \varphi_n(\omega) , \varphi_n(\omega) \rangle = \chi_I(\lambda_n(\omega))$. So using (C.3), we get
\[ 
\tr[ \chi_I(H(\omega)) ] = \sum_n \langle \chi_I(H(\omega)) \varphi_n(\omega) , \varphi_n(\omega) \rangle  = \sum_{n\le K} \chi_I(\lambda_n(\omega)) \, .
\]
By (C.4), (C.5) and min-max, each $\lambda_n:\Omega \to \R$ is monotone (see below for (C.5.b) and (C.5.d)) and only depends on $(\omega_{\alpha})_{\alpha \in \mathcal{I}_F}$. So by Lemma~\ref{lem:monotone}, each $\lambda_n$ is $\mathfrak{F}_{\prob}$-measurable, hence $\chi_I(\lambda_n(\omega)) = \chi_{\lambda_n^{-1}(I)}(\omega)$ is $\mathfrak{F}_{\prob}$-measurable, and we may integrate to get
\[ 
\overline{\expect}\{ \tr[\chi_I(H(\omega)) ] \} = \sum_{n\le K} \overline{\expect} \{ \chi_I(\lambda_n(\omega)) \} = \sum_{n \le K} \overline{\prob} \{ \lambda_n(\omega) \in I \} \, .
\]
Now assume (C.5.a) holds. Then by min-max, $\lambda_n(\omega)$ are monotone increasing and satisfy $\lambda_n(\omega) \ge \lambda_n(\omega - t \cdot \one_F) + t \gamma$ for all $t\ge 0$ such that $\omega - t \cdot \one_F \in \Omega$. So by Lemma~\ref{lem:sto},
\[ 
\overline{\prob} \{ \lambda_n(\omega) \in I \} \le 2 \cdot |\mathcal{I}_F| \cdot s_F \Big(\prob, \frac{|I|}{\gamma} \Big), 
\]
as asserted. The case (C.5.c) is similar. Let us show that (C.5.b) implies (C.5.a) and (C.5.d) implies (C.5.c).

Let $f \in C^1(\Omega)$. Given $v,w \in \Omega$, we have $v+t(w-v) \in \Omega$ for any $t \in [0,1]$. Moreover, $t \mapsto f(v+t(w-v))$ is continuous on $[0,1]$ and continuously differentiable on $(0,1)$, hence
\[
f(w)-f(v) = \int_0^1 \frac{\dd}{\dd t} f(v+t(w-v)) \, \dd t = \int_0^1 \sum_{\alpha \in \mathcal{I}} (w_{\alpha}-v_{\alpha}) \frac{\partial f}{\partial \omega_{\alpha}}(v+t(w-v)) \, \dd t. 
\]
If $f$ only depends on $(\omega_{\alpha})_{\alpha \in \mathcal{I}_F}$, then the sum reduces to $\mathcal{I}_F$. If moreover $\frac{\partial f}{\partial \omega_{\alpha}} \ge 0$ on $\Omega$ $\forall \alpha \in \mathcal{I}_F$ and $w_{\alpha} \ge v_{\alpha}$, then $f(w) - f(v) \ge 0$, i.e. $f$ is monotone increasing. Similarly, if $\forall \alpha \in \mathcal{I}_F$, $\frac{\partial f}{\partial \omega_{\alpha}} \le 0$ on $\Omega$, then $f$ is monotone decreasing. Finally, for $w=\omega$ and $v=\omega - \nu \cdot \one_F$ we get
\[ 
f(\omega) - f(\omega - \nu \cdot \one_F) = \nu \int_0^1 \sum_{ \alpha\in \mathcal{I}_F} \frac{\partial f}{\partial \omega_{\alpha}}(\omega - \nu \cdot \one_F +t(\nu \cdot \one_F)) \, \dd t,
\]
hence $\sum_{ \alpha\in \mathcal{I}_F} \frac{\partial f}{\partial \omega_{\alpha}} \ge c$ on $\Omega$ implies $f(\omega) - f(\omega - \nu \cdot \one_F) \ge \nu c$ and $\sum_{ \alpha\in \mathcal{I}_F} \frac{\partial f}{\partial \omega_{\alpha}} \le - c$ on $\Omega$ implies $f(\omega) - f(\omega - \nu \cdot \one_F) \le - \nu c$.
\end{proof}

\subsection{Proof of Theorem~\ref{thm:lppvcov}}           \label{sec:proof4}
The proof of Theorem~\ref{thm:lppvcov} uses two ideas: the first one is roughly to consider the change of variables $v_{\alpha} = \ln \omega_{\alpha}$, so that $\expect \{ f(\omega) \} = \int f(\omega) \prob(\dd \omega) = \int f((e^{v_{\alpha}})) \tilde{\prob}(\dd v)$. This idea was used before in \cite[Theorem 2.9]{LPPV}. The new measure $\tilde{\prob}$ is easily described if $\prob$ is a product measure; the general case is given in Lemma~\ref{lem:push}. The second idea is to generalize Stollmann's lemma to include cutoffs $\chi_I(\varphi(\omega))$ (as we did in Lemma~\ref{lem:sto}) and also extend the diagonal growth condition. This is done in Lemma~\ref{Stol-mod}.

\begin{lem}               \label{lem:push}
Let $\Omega =[q_-,q_+]^{\mathcal{I}}$, fix $q>q_+$ and let $\tilde{\Omega} := [v_-,v_+]^{\mathcal{I}}$, where $v_- = \ln(q-q_+)$ and $v_+ = \ln(q-q_-)$. Define $T : \Omega \to \tilde{\Omega}$ by $T : (\omega_{\alpha}) \mapsto (\ln (q-\omega_{\alpha}))$ and let $\tilde{\prob} := \prob \circ T^{-1}$. Then 
\[
s_F(\tilde{\prob},\varepsilon) \le s_F(\prob,(q-q_-)(e^{\varepsilon} -1)) \, .
\]
\end{lem}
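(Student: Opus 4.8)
The plan is to exploit the coordinate-wise structure of $T$, transport the disintegration of $\prob$ through $T$ to describe the conditional measures of $\tilde{\prob}$, and then finish with an elementary comparison of interval lengths. First I would write $g:[q_-,q_+]\to[v_-,v_+]$, $g(x):=\ln(q-x)$; then $g$ is a decreasing Borel isomorphism with inverse $g^{-1}(y)=q-e^{y}$, and $T$ acts coordinate-wise, $T((\omega_{\alpha}))=(g(\omega_{\alpha}))$. Fix $\alpha\in\mathcal{I}_F$. Under the identifications $\Omega=[q_-,q_+]\times Y_{\alpha}$ and $\tilde{\Omega}=[v_-,v_+]\times\tilde{Y}_{\alpha}$, one has $T=g\times T_{\hat{\alpha}}$, where $T_{\hat{\alpha}}:Y_{\alpha}\to\tilde{Y}_{\alpha}$ applies $g$ to each coordinate $\beta\neq\alpha$; it is a bimeasurable bijection and $\tilde{\prob}_{\tilde{Y}_{\alpha}}=\prob_{Y_{\alpha}}\circ T_{\hat{\alpha}}^{-1}$.

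Next, using $(T^{-1}\tilde{A})_{\hat{\omega}_{\alpha}}=g^{-1}\big(\tilde{A}_{T_{\hat{\alpha}}(\hat{\omega}_{\alpha})}\big)$, I would start from the defining relation $\tilde{\prob}(\tilde{A})=\prob(T^{-1}\tilde{A})=\int_{Y_{\alpha}}\mu_{\hat{\omega}_{\alpha}}\big((T^{-1}\tilde{A})_{\hat{\omega}_{\alpha}}\big)\,\dd\prob_{Y_{\alpha}}(\hat{\omega}_{\alpha})$ and change variables $\hat{v}_{\alpha}=T_{\hat{\alpha}}(\hat{\omega}_{\alpha})$ to obtain $\tilde{\prob}(\tilde{A})=\int_{\tilde{Y}_{\alpha}}\big(g_*\mu_{T_{\hat{\alpha}}^{-1}(\hat{v}_{\alpha})}\big)(\tilde{A}_{\hat{v}_{\alpha}})\,\dd\tilde{\prob}_{\tilde{Y}_{\alpha}}(\hat{v}_{\alpha})$, where $g_*\nu:=\nu\circ g^{-1}$. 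By the essential uniqueness of the family furnished by \cite[Corollary 10.4.15]{Bog07} (which holds here since $[q_-,q_+]$ is Polish: two versions must agree $\tilde{\prob}_{\tilde{Y}_{\alpha}}$-a.e.\ on each rational subinterval, and these form a countable generating $\pi$-system), one may take $\tilde{\mu}_{\hat{v}_{\alpha}}=g_*\mu_{\hat{\omega}_{\alpha}}$ with $\hat{\omega}_{\alpha}:=T_{\hat{\alpha}}^{-1}(\hat{v}_{\alpha})$ for $\tilde{\prob}_{\tilde{Y}_{\alpha}}$-a.e.\ $\hat{v}_{\alpha}$; by \eqref{eq:sP1} this choice is immaterial for $s_F(\tilde{\prob},\varepsilon)$.

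Finally, set $\delta:=(q-q_-)(e^{\varepsilon}-1)$. For $E\in\R$ the set $g^{-1}\big((E,E+\varepsilon)\big)=(q-e^{E+\varepsilon},q-e^{E})$ has length $e^{E}(e^{\varepsilon}-1)$, and, since $\mu_{\hat{\omega}_{\alpha}}$ is concentrated on $[q_-,q_+]$, $\tilde{\mu}_{\hat{v}_{\alpha}}(E,E+\varepsilon)=\mu_{\hat{\omega}_{\alpha}}\big((q-e^{E+\varepsilon},q-e^{E})\big)$. If this is nonzero, the interval meets $[q_-,q_+]$, so $q-e^{E}>q_-$, i.e.\ $e^{E}<q-q_-$; hence the interval has length $<\delta$ and lies in $\big(q-e^{E+\varepsilon},\,q-e^{E+\varepsilon}+\delta\big)$, so $\tilde{\mu}_{\hat{v}_{\alpha}}(E,E+\varepsilon)\le\sup_{E'\in\R}\mu_{\hat{\omega}_{\alpha}}(E',E'+\delta)$; the same bound is trivial when the left side vanishes. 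Taking the supremum over $E$, then using $\expect_{\tilde{Y}_{\alpha}}\{F(\hat{v}_{\alpha})\}=\expect_{Y_{\alpha}}\{F(T_{\hat{\alpha}}(\hat{\omega}_{\alpha}))\}$, and then maximizing over $\alpha\in\mathcal{I}_F$, I get $s_F(\tilde{\prob},\varepsilon)\le\max_{\alpha\in\mathcal{I}_F}\expect_{Y_{\alpha}}\{\sup_{E'\in\R}\mu_{\hat{\omega}_{\alpha}}(E',E'+\delta)\}=s_F(\prob,\delta)$. The only non-routine step I anticipate is the measure-theoretic identification $\tilde{\mu}_{\hat{v}_{\alpha}}=g_*\mu_{\hat{\omega}_{\alpha}}$, which needs the a.e.-uniqueness of the disintegration and careful bookkeeping of the correspondence $\hat{v}_{\alpha}\leftrightarrow\hat{\omega}_{\alpha}$ under the coordinate-wise map; everything after that is the one-line length estimate above.
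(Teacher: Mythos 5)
Your proposal is correct and follows essentially the same route as the paper: identify the conditional measures of $\tilde{\prob}$ with the pushforwards under $g$ of those of $\prob$ (modulo null sets handled via rational endpoints and \eqref{eq:sP1}), then observe that $g^{-1}\big((E,E+\varepsilon)\big)=(q-e^{E+\varepsilon},q-e^{E})$ is contained in an interval of length $(q-q_-)(e^{\varepsilon}-1)$ whenever it charges $[q_-,q_+]$. The only cosmetic difference is that you identify the full disintegration at once via essential uniqueness, whereas the paper establishes the a.e.\ identification interval-by-interval for each fixed $E$ using the change-of-variables formula; both amount to the same argument.
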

Here $s_F(\tilde{\prob},\varepsilon)$ is defined as before, i.e. if $Z_{\alpha} :=[v_-,v_+]^{\mathcal{I} \setminus \{\alpha \}}$, $\pi_{Z_{\alpha}} : \tilde{\Omega} \to Z_{\alpha}$ is defined by $\pi_{Z_{\alpha}} : v \mapsto \hat{v}_{\alpha}$ and if $\tilde{\prob}_{Z_{\alpha}} = \tilde{\prob} \circ \pi_{Z_{\alpha}}^{-1}$, then $s_F(\tilde{\prob},\varepsilon) = \max_{\alpha \in \mathcal{I}_F} \tilde{\expect}_{Z_{\alpha}} \big\{\sup_{E \in \R} \tilde{\mu}_{\hat{v}_{\alpha}}( E,E+\varepsilon) \big\}$.
\begin{proof}
First recall that by \cite[Theorem 4.1.11]{Dud}, if $\mathcal{T}:(X,\mathcal{X},\prob) \to (Y, \mathcal{Y})$ is any measurable map, and if $\prob^{\mathcal{T}} = \prob \circ \mathcal{T}^{-1}$, then for any measurable $g: Y \to \R$, we have
\begin{equation}
\expect^{\mathcal{T}} \{ g(y) \} = \expect \{ (g \circ \mathcal{T})(x) \} \, ,      \label{eq:change}
\end{equation}
whenever either side exists. Fix $\alpha \in \mathcal{I}_F$ and let $G:= \{ v_{\alpha} \in (E, E+\varepsilon) \}$. Then
\begin{align*}
\tilde{\expect}_{Z_{\alpha}} \{ \tilde{\mu}_{\hat{v}_{\alpha}}(E,E+\varepsilon) \} & = \tilde{\prob}(G) = \prob(T(\omega) \in G) = \prob\{ \ln (q-\omega_{\alpha}) \in (E, E+\varepsilon)\} \\
& = \prob\{\omega_{\alpha} \in (q-e^{E+\varepsilon},q-e^E)\} = \expect_{Y_{\alpha}} \{ \mu_{\hat{\omega}_{\alpha}} (q-e^{E+\varepsilon},q-e^E) \} \, ,
\end{align*}
where $Y_{\alpha} := [q_-,q_+]^{\mathcal{I} \setminus \{ \alpha \}}$. Define $\hat{T}_2 : Z_{\alpha} \to Y_{\alpha}$ by $\hat{T}_2 : (v_{\alpha}) \mapsto (q-e^{v_{\alpha}})$. Then $\hat{T}_2 \circ \pi_{Z_{\alpha}} \circ T = \pi_{Y_{\alpha}}$, so $\tilde{\prob}_{Z_{\alpha}}^{\hat{T}_2} = \prob_{Y_{\alpha}}$ and using (\ref{eq:change}) we get $\expect_{Y_{\alpha}} \{ \mu_{\hat{\omega}_{\alpha}} (q-e^{E+\varepsilon},q-e^E) \} = \tilde{\expect}_{Z_{\alpha}} \{ \mu_{\hat{T_2}(\hat{v}_{\alpha})} (q-e^{E+\varepsilon},q-e^E) \}$. Hence $\tilde{\mu}_{\hat{v}_{\alpha}}( E,E+\varepsilon) = \mu_{\hat{T_2}(\hat{v}_{\alpha})} (q-e^{E+\varepsilon},q-e^E)$ outside a $\tilde{\prob}_{Z_{\alpha}}$-null set $\Omega_E$. Let $\Omega_{\ast} = \cup_{E \in \Q} \Omega_E$. Then $\tilde{\prob}_{Z_{\alpha}}(\Omega_{\ast}) = 0$ and $\sup_{E \in \Q} \tilde{\mu}_{\hat{v}_{\alpha}}( E,E+\varepsilon) = \sup_{E \in \Q} \mu_{\hat{T_2}(\hat{v}_{\alpha})} (q-e^{E+\varepsilon},q-e^E)$ for any $\hat{v}_{\alpha} \notin \Omega_{\ast}$. So using (\ref{eq:sP1}) and (\ref{eq:change}),
\begin{align*}
\tilde{\expect}_{Z_{\alpha}} \Big\{ \sup_{E \in \R} \tilde{\mu}_{\hat{v}_{\alpha}}( E,E+\varepsilon) \Big\} & = \tilde{\expect}_{Z_{\alpha}} \Big\{ \sup_{E \in \R} \mu_{\hat{T_2}(\hat{v}_{\alpha})} (q-e^{E+\varepsilon},q-e^E) \Big\} \\
& = \expect_{Y_{\alpha}} \Big\{ \sup_{E \in \R} \mu_{\hat{\omega}_{\alpha}} (q-e^{E+\varepsilon},q-e^E) \Big\} \, .
\end{align*}
If $q-e^E < q_-$, the RHS is zero, since $\mu_{\hat{\omega}_{\alpha}}$ is supported in $[q_-,q_+]$. So suppose $e^E \le q-q_-$. Then $(q-e^E)-(q-e^{E+ \varepsilon}) = e^E(e^{\varepsilon} -1) \le (q-q_-)(e^{\varepsilon} -1)$. This completes the proof.
\end{proof}

\begin{lem}     \label{Stol-mod}
Let $(\Omega,\prob)$ be a probability space, $\Omega = [c_-,c_+]^{\mathcal{I}}$ and $I \subset \R$ an open interval. Suppose $\varphi : \Omega \to \R$ is monotone increasing, depends on finitely many $\omega_{\alpha}$ and satisfies
\begin{equation}
\big(\varphi(\omega+t\cdot \one_F) - \varphi(\omega) \big) \chi_I(\varphi(\omega)) \ge \gamma (e^{\zeta t}-1) \cdot \chi_I(\varphi(\omega)) \label{eq:proof4}
\end{equation}
for some $\zeta>0$, $\gamma>0$ and all $t\ge0$ such that $\omega+t \cdot \one_F \in \Omega$. Then 
\[
\overline{\prob} \{ \varphi(\omega) \in I \} \le 2 \cdot |\mathcal{I}_F| \cdot s_F\Big(\prob, \frac{1}{\zeta} \ln ( 1+ \frac{|I|}{\gamma}) \Big) \, .
\]
This bound is also true if $\varphi$ is monotone decreasing and satisfies for all $t \ge 0$ such that $\omega - t\cdot \one_F \in \Omega$ the bound
\begin{equation}
\big(\varphi(\omega) - \varphi(\omega - t\cdot \one_F) \big) \chi_I(\varphi(\omega)) \le \gamma (1-e^{\zeta t}) \cdot \chi_I(\varphi(\omega)) \, . \label{eq:proof5}
\end{equation}
\end{lem}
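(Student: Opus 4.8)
The plan is to run the proof of Lemma~\ref{lem:sto} essentially verbatim, the only change being to choose the shift parameter so that the \emph{exponential} increment guaranteed by \textup{(\ref{eq:proof4})} covers the length of $I$. Since $\varphi$ is monotone and depends on finitely many coordinates, it is $\mathfrak{F}_{\prob}$-measurable by Lemma~\ref{lem:monotone}, so $\{\varphi(\omega)\in I\}\in\mathfrak{F}_{\prob}$ and the left-hand side is meaningful. Writing $I=(a,b)$, I would set $t_\ast:=\frac1\zeta\ln\bigl(1+\tfrac{|I|}{\gamma}\bigr)$, so that $\gamma(\eul^{\zeta t_\ast}-1)=b-a$; this is precisely the modulus appearing in the conclusion.

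For the monotone increasing case I would split
\[
\overline{\prob}\{\varphi(\omega)\in I\}\ \le\ \overline{\prob}\bigl\{\varphi(\omega)\in I \text{ and } \omega+t_\ast\cdot\one_F\in\Omega\bigr\}\ +\ \prob\bigl\{\omega+t_\ast\cdot\one_F\notin\Omega\bigr\},
\]
the last event being a cylinder set, hence genuinely $\mathfrak{F}$-measurable. On the first event $\chi_I(\varphi(\omega))=1$, so \textup{(\ref{eq:proof4})} with $t=t_\ast$ gives $\varphi(\omega+t_\ast\cdot\one_F)\ge\varphi(\omega)+(b-a)>b$, while $\varphi(\omega)<b$; thus with $B:=\{\varphi(\omega)\ge b\}$ and $B^{t_\ast}$ as in Lemma~\ref{lem:auxi} (with $c=b$, $\eta=t_\ast$), that event sits inside $B^{t_\ast}\setminus B$, whose measure Lemma~\ref{lem:auxi} bounds by $|\mathcal{I}_F|\cdot s_F(\prob,t_\ast)$. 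For the boundary term, $\omega+t_\ast\cdot\one_F\notin\Omega$ forces $\omega_\alpha\in(c_+-t_\ast,c_+]$ for some $\alpha\in\mathcal{I}_F$, so a union bound, the disintegration of $\prob$ into the measures $\mu_{\hat\omega_\alpha}$, and \textup{(\ref{eq:sup})} give $\prob\{\omega+t_\ast\cdot\one_F\notin\Omega\}\le|\mathcal{I}_F|\cdot s_F(\prob,t_\ast)$. Adding the two pieces produces the factor $2$.

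For the monotone decreasing case I would put $\psi:=-\varphi$ and $I':=-I$, so that $\psi$ is monotone increasing, depends on finitely many coordinates, $\chi_I(\varphi(\omega))=\chi_{I'}(\psi(\omega))$ and $\overline{\prob}\{\varphi(\omega)\in I\}=\overline{\prob}\{\psi(\omega)\in I'\}$; rewriting \textup{(\ref{eq:proof5})} turns it into $\bigl(\psi(\omega)-\psi(\omega-t\cdot\one_F)\bigr)\chi_{I'}(\psi(\omega))\ge\gamma(\eul^{\zeta t}-1)\chi_{I'}(\psi(\omega))$, the same shape as \textup{(\ref{eq:proof0})} but with an exponential right-hand side. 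The argument then runs with the shifted-\emph{down} point: if $\psi(\omega)\in I'$ then $\psi(\omega-t_\ast\cdot\one_F)<\inf I'$ while $\psi(\omega)>\inf I'$, so the relevant event lies in $A^{t_\ast}\setminus A$ for $A:=\{\psi\le\inf I'\}$, and one adds the escape estimate $\prob\{\omega-t_\ast\cdot\one_F\notin\Omega\}\le|\mathcal{I}_F|\cdot s_F(\prob,t_\ast)$ (now exiting $\Omega$ through the lower endpoint $c_-$). This is a routine adaptation; the only points needing care are the choice of $t_\ast$ matching the exponential rather than a linear increment, and the observation that \textup{(\ref{eq:proof4})} puts its cutoff on the \emph{smaller} value $\varphi(\omega)$ and so pairs with the ``$B^\eta\setminus B$'' half of Lemma~\ref{lem:auxi}, whereas after negation \textup{(\ref{eq:proof5})} pairs with the ``$A^\eta\setminus A$'' half. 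I do not expect any genuine obstacle.
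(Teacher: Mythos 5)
Your proposal is correct and follows essentially the same route as the paper's proof: the same choice of shift $\frac{1}{\zeta}\ln\bigl(1+\frac{|I|}{\gamma}\bigr)$ matching the exponential increment, the same split into the shifted event inside $\Omega$ (handled by Lemma~\ref{lem:auxi}, the $B^{\eta}\setminus B$ half for the increasing case and the $A^{\eta}\setminus A$ half after passing to $\psi=-\varphi$ for the decreasing case) plus the boundary-escape term estimated by $|\mathcal{I}_F|\cdot s_F(\prob,\eta)$ via \textup{(\ref{eq:sup})}. No gaps.
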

\begin{proof}
Let $I=(a,b)$, $\varepsilon := b-a$ and $\eta := \frac{1}{\zeta} \ln ( 1+ \frac{\varepsilon}{\gamma})$. Suppose first that $\varphi$ is monotone increasing and satisfies (\ref{eq:proof4}). We have
\[
\overline{\prob} \{\varphi(\omega) \in I \} \le \overline{\prob} \{\varphi(\omega) \in I \text{ and } \omega+ \eta \cdot \one_F \in \Omega\} + \prob \{ \omega + \eta \cdot \one_F \notin \Omega \} \, .
\]
For the first term, let $\omega \in \mathcal{A}:= \{\varphi(\omega) \in I \text{ and } \omega+ \eta \cdot \one_F \in \Omega\}$. Then by (\ref{eq:proof4}),
\[
\varphi(\omega + \eta \cdot \one_F) \ge \varphi(\omega) + \gamma (e^{\zeta \eta}-1) = \varphi(\omega) + \varepsilon \ge a + \varepsilon = b \, ,
\]
hence if $B^{\eta} := \{\omega: \omega + \eta \cdot \one_F \in \Omega \text{ and } \varphi(\omega + \eta \cdot \one_F) \ge b\}$, we have $\omega \in B^{\eta}$. Moreover, $\varphi(\omega) \in I$ implies $\varphi(\omega) < b$ and thus $\omega \notin B:=\{ \omega : \varphi(\omega) \ge b \}$. Hence, $\mathcal{A} \subseteq B^{\eta} \setminus B$ and $\overline{\prob}(\mathcal{A}) \le |\mathcal{I}_F| \cdot s_F(\prob, \eta)$ by Lemma~\ref{lem:auxi}. 

For the second term, $\prob \{ \omega + \eta \cdot \one_F \notin \Omega \} = \prob \{ \omega_{\alpha} \in (c_+ - \eta, c_+] \text{ for some } \alpha \in \mathcal{I}_F \} \le |\mathcal{I}_F| \cdot s_F(\prob, \eta)$ by (\ref{eq:sup}). This proves the first claim.

Now suppose $\varphi$ is decreasing and satisfies (\ref{eq:proof5}). Again,
\[
\overline{\prob} \{\varphi(\omega) \in I \} \le \overline{\prob} \{\varphi(\omega) \in I \text{ and } \omega- \eta \cdot \one_F \in \Omega\} + \prob \{ \omega - \eta \cdot \one_F \notin \Omega \} \, .
\]
The second term is assessed as before. For the first term, let $\psi(\omega) := - \varphi(\omega)$ and put $A := \{ \omega : \psi(\omega) \le -b \}$, $A^{\eta} := \{\omega: \omega - \eta \cdot \one_F \in \Omega \text{ and } \varphi(\omega - \eta \cdot \one_F) \le -b\}$ and let $\omega \in \mathcal{A}':=\{\varphi(\omega) \in I \text{ and } \omega- \eta \cdot \one_F \in \Omega\}$. Then by (\ref{eq:proof5}),
\[
\varphi(\omega - \eta \cdot \one_F) \ge \varphi(\omega) - \gamma (1-e^{\zeta \eta}) = \varphi(\omega) + \varepsilon \ge a + \varepsilon = b \, ,
\]
hence $\psi(\omega- \eta \cdot \one_F) \le - b$ and $\omega \in A^{\eta}$. Moreover, $\varphi(\omega) \in I$ implies $\varphi(\omega) < b$, i.e. $\psi(\omega) > -b$ and thus $\omega \notin A$. Hence, $\mathcal{A}' \subseteq A^{\eta} \setminus A$ and the claim follows from Lemma~\ref{lem:auxi}.
\end{proof}

\begin{proof}[Proof of Theorem~\ref{thm:lppvcov}]
Let $A(\omega) = -H(\omega)$ and $I'=(-E_2,-E_1)$. Then $\tr[\chi_I(H(\omega))] = \tr[\chi_{I'}(A(\omega))]$. Moreover, if $r_u(\omega) = - f_u (\omega) = -a(u) + \sum_{\alpha \in \mathcal{I}_F} (q-\omega_{\alpha})^{\zeta} b_{\alpha}(u)$, then using min-max for $H(\omega)$, we obtain the formula
\begin{equation}
\mu_n(\omega) = \inf_{\varphi_1,\ldots,\varphi_{n-1}} \sup_{\substack {u \in \mathcal{D}, \|u \|=1, \\ u \in \{\varphi_1,\ldots,\varphi_{n-1}\}^{\bot}}} r_u(\omega) \label{eq:minmax}
\end{equation}
for the decreasing set $\mu_1(\omega) \ge \mu_2(\omega) \ge \ldots$ of eigenvalues of $A(\omega)$ (here $\mu_j(\omega) = - \lambda_j(\omega)$). Since $b_{\alpha}(u) \ge 0$ for any $u$, each $\mu_n(\omega)$ is monotone and only depends on $(\omega_{\alpha})_{\alpha \in \mathcal{I}_F}$ by (\ref{eq:minmax}), hence each is $\mathfrak{F}_{\prob}$-measurable by Lemma~\ref{lem:monotone}. Thus, as in the proof of Theorem~\ref{thm:weg-stol}, $\tr [ \chi_{I'} (A(\omega)) ]$ is $\mathfrak{F}_{\prob}$-measurable and we may integrate to get
\begin{equation}
\overline{\expect} \{ \tr [ \chi_{I'} (A(\omega)) ] \} = \sum_{n \le K} \overline{\prob} \{ \mu_n(\omega) \in I'\} = \sum_{n \le K} \overline{\tilde{\prob}} \{ \mu_n(T_2(v)) \in I'\} \, , \label{eq:proof1}
\end{equation}
where, using the notations of Lemma~\ref{lem:push}, $T_2 : \tilde{\Omega} \to \Omega$ is given by $T_2 : (v_{\alpha}) \mapsto (q-e^{v_{\alpha}})$, and we applied (\ref{eq:change}) to $g(v) := \chi_{I'}(\mu_n(T_2(v)))$, noting that $(T_2 \circ T)(\omega)=\omega$.

Suppose now that (\ref{eq:lppv}) holds with $\zeta>0$ and fix $u \in \mathcal{D}$. Since $r_u \circ T_2(v) = -a(u) + \sum_{\alpha \in \mathcal{I}_F} e^{\zeta v_{\alpha}}b_{\alpha}(u)$, given $v \in \tilde{\Omega}$ and $t \ge 0$ such that $v+ t \cdot \one_F \in \tilde{\Omega}$, we have
\begin{align*}
(r_u \circ T_2) (v+t \cdot \one_F) & = -a(u) + \sum_{\alpha \in \mathcal{I}_F} e^{\zeta(v_{\alpha}+t)} b_{\alpha}(u) \\
& = - a(u) + e^{\zeta t} \sum_{\alpha \in \mathcal{I}_F} e^{\zeta v_{\alpha}} b_{\alpha}(u) \ge e^{\zeta t} (r_u \circ T_2)(v)
\end{align*}
since $- a(u) \ge - e^{\zeta t} a(u)$. Thus, if $\nu_n(v) := \mu_n(T_2(v))$, we get by (\ref{eq:minmax})
\[
\nu_n(v+t \cdot \one_F) \ge e^{\zeta t} \nu_n(v).
\]
Now note that if $\nu_n(v) \in I'$, then $\nu_n(v) \ge -E_2 =|E_2| >0$. Hence,
\begin{align*}
\big(\nu_n(v+t \cdot \one_F) - \nu_n(v)) \chi_{I'}(\nu_n(v)) & \ge \big(e^{\zeta t} \nu_n(v) - \nu_n(v)) \chi_{I'}(\nu_n(v)) \\
& \ge (e^{\zeta t}-1) |E_2| \chi_{I'}(\nu_n(v)).
\end{align*}
As $\zeta>0$, $\nu_n(v)$ is monotone increasing in $v$, so using Lemma~\ref{Stol-mod} we get
\begin{align*}
\overline{\tilde{\prob}} \{ \nu_n(v) \in I' \} & \le 2\cdot | \mathcal{I}_F | \cdot s_F \Big( \tilde{\prob}, \frac{1}{\zeta} \ln \big(1+\frac{|I'|}{|E_2|}\big) \Big) \\
& \le 2\cdot | \mathcal{I}_F | \cdot s_F \Big( \prob, (q-q_-) \Big(\big(1+\frac{|I|}{|E_2|}\big)^{\frac{1}{\zeta}} -1 \Big)\Big) \, ,
\end{align*}
where we applied Lemma~\ref{lem:push} with $\varepsilon := \frac{1}{\zeta} \ln\big(1+\frac{|I'|}{|E_2|}\big)$. Using (\ref{eq:proof1}), the proof is complete for $\zeta>0$. Now suppose that $\zeta<0$ and put $\theta:=-\zeta>0$. Then
\begin{align*}
(r_u \circ T_2)(v) & = -a(u) + \sum_{\alpha \in \mathcal{I}_F} e^{-\theta v_{\alpha}} b_{\alpha}(u) \\
& = - a(u) + e^{-\theta t} \sum_{\alpha \in \mathcal{I}_F} e^{-\theta (v_{\alpha}-t)} b_{\alpha}(u) \le e^{-\theta t} (r_u \circ T_2)(v-t \cdot \one_F)
\end{align*}
for any $t \ge 0$ such that $v-t \cdot \one_F \in \tilde{\Omega}$, since $-a(u) \le - e^{-\theta t} a(u)$. Hence,
\[
\nu_n(v) \le e^{-\theta t} \nu_n(v-t \cdot \one_F),
\]
and thus, noting that $(1-e^{\theta t}) \le 0$ we get
\begin{align*}
\big(\nu_n(v) - \nu_n(v-t \cdot \one_F)) \chi_{I'}(\nu_n(v)) & \le \big(\nu_n(v) - e^{\theta t} \nu_n(v)) \chi_{I'}(\nu_n(v)) \\
& \le (1-e^{\theta t}) |E_2| \chi_{I'}(\nu_n(v)).
\end{align*}
Furthermore, $\nu_n(v)$ is monotone decreasing. The claim of Theorem~\ref{thm:lppvcov} for $\zeta<0$ now follows as before using Lemma~\ref{Stol-mod}.
\end{proof}

\section{Appendix}       \label{sec:appendix2}

\subsection{Spectra of some Schr\"odinger operators}           \label{sec:spectra}

Let $G \subset \Z^d$ be non-empty, $\mathcal{B} \subseteq \R$ a Borel set and consider the probability space $(\Omega,\prob)$, where $\Omega = \mathcal{B}^G$ and $\prob = \mathop \otimes_{\alpha \in G} \mu$, for some probability measure $\mu$ on $\R$ with $\supp \mu \subseteq \mathcal{B}$. Define
\begin{align*}
H^\omega & = H^0 + V^{\omega} \text{ on } \ell^2(\Z^d), \text{ where } H^0 = - \Delta + V^0, \ V^{\omega} = \sum_{\alpha \in G} \omega_{\alpha} \delta_{\alpha}, \\
H_\omega & = H_0 + V_{\omega} \text{ on } L^2(\R^d), \text{ where } H_0 = - \Delta + V_0, \ V_{\omega} = \sum_{\alpha \in G} \omega_{\alpha} \chi_{\alpha} \, .
\end{align*}
Here $\delta_{\alpha}$ and $\chi_{\alpha}$ are the characteristic functions of $\{\alpha\}$ and $[\alpha-\frac{1}{2},\alpha+\frac{1}{2}]^d$ respectively and $V^0$, $V_0$ are $\Z^d$-periodic bounded real potentials. We denote points in $\R^d$ by $(x^1,\ldots,x^d)$.

We now suppose that $G$ contains a half-space of $\Z^d$, i.e., there exists $r \in \Z$ and $i \in \{1, \ldots, d\}$ such that $(x^1,\ldots,x^d) \in G$ whenever $x^i > r$. Examples are half-spaces of $\Z^d$, and sets with a finite number of holes, i.e. with $\Z^d \setminus G$ finite. We can actually consider more general sets like quarter-spaces or rotated half-spaces. The only thing we need is that $G$ should contain arbitrarily large cubes of $\Z^d$. This excludes $(2\Z)^d$ and thus excludes Delone sets. On the other hand, half-spaces are not Delone sets either since we allow for arbitrarily large cubes with no points of $G$. So the sets we consider here are neither a special case nor a generalization of Delone sets.

\begin{lem}
If $G$ contains a half-space of $\Z^d$, then $\sigma(H^\omega) \supseteq \sigma(H^0) + \supp \mu$ and $\sigma(H_{\omega}) \supseteq \sigma(H_0) + \supp \mu$ almost surely.
\end{lem}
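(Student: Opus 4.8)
The plan is a standard geometric Weyl-sequence argument. I treat the lattice operator $H^\omega$ in detail; the continuum case $H_\omega$ is identical after replacing the point masses $\delta_\alpha$ by the cubes $\chi_\alpha$, with one extra point noted at the end. First I would reduce the almost sure spectral inclusion to a countable family of statements. Fix countable dense subsets $D_0 \subseteq \sigma(H^0)$ and $D_\mu \subseteq \supp\mu$. Since $\sigma(H^\omega)$ is closed, $D_0 + D_\mu$ is dense in $\sigma(H^0) + \supp\mu$, and addition is continuous, it suffices to exhibit, for each pair $(E,\lambda) \in D_0 \times D_\mu$, a measurable full-measure event on which $E+\lambda \in \sigma(H^\omega)$; intersecting these countably many events and passing to closures then gives $\sigma(H^0) + \supp\mu \subseteq \sigma(H^\omega)$ almost surely. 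In turn, to get $E + \lambda \in \sigma(H^\omega)$ it is enough to produce, for every $\varepsilon > 0$, a random but explicit unit vector $\phi$ with $\|(H^\omega - (E+\lambda))\phi\| < \varepsilon$ with probability one, and then intersect over $\varepsilon = 1/n$.

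The deterministic ingredient is a compactly supported approximate eigenfunction of $H^0$, translated into the random region. Since $H^0 = -\Delta + V^0$ is $\Z^d$-periodic and $E \in \sigma(H^0)$, there is a finitely supported $\psi$ with $\|\psi\| = 1$ and $\|(H^0 - E)\psi\| < \varepsilon/2$; on the lattice this is immediate because $H^0$ is bounded (truncate any Weyl sequence at energy $E$), while in the continuum one truncates a Bloch wave. Let $S := \supp\psi$, a finite set, and for $a \in \Z^d$ put $\psi_a := \psi(\cdot - a)$. As $V^0$ is $\Z^d$-periodic, lattice translations commute with $H^0$, so $\|(H^0 - E)\psi_a\| < \varepsilon/2$ and $\supp\psi_a = S + a$. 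Now if $a \in \Z^d$ is such that $S + a \subseteq G$ and $|\omega_\alpha - \lambda| < \varepsilon/2$ for all $\alpha \in S + a$, then on $\supp\psi_a$ we have $V^\omega(x) = \omega_x$, so $\|(V^\omega - \lambda)\psi_a\| \le (\varepsilon/2)\|\psi_a\| = \varepsilon/2$ and hence
\[
\|(H^\omega - (E+\lambda))\psi_a\| \le \|(H^0 - E)\psi_a\| + \|(V^\omega - \lambda)\psi_a\| < \varepsilon .
\]

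The probabilistic ingredient uses that $G$ contains a half-space. Let $i$ be the coordinate index from the hypothesis, so $G \supseteq \{x^i > r\}$ for some $r$, and say $S \subseteq [-K,K]^d$. Translating $S$ by the vectors $a_k \in \Z^d$ whose $i$th component is $(2K+1)k$ and whose other components vanish gives, for all large $k$, pairwise disjoint finite sets $S + a_k$ contained in $G$. Because $\prob = \mathop\otimes_{\alpha \in G}\mu$, the events $\Omega_k := \{\omega : |\omega_\alpha - \lambda| < \varepsilon/2 \text{ for all } \alpha \in S + a_k\}$ are independent, and each has probability $\mu\big((\lambda - \tfrac{\varepsilon}{2}, \lambda + \tfrac{\varepsilon}{2})\big)^{|S|} > 0$ since $\lambda \in \supp\mu$. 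By the second Borel--Cantelli lemma $\prob(\limsup_k \Omega_k) = 1$, so almost surely some $\Omega_k$ occurs, and on that event $\phi := \psi_{a_k}$ is the required unit vector. This is a measurable full-measure event on which $E+\lambda\in\sigma(H^\omega)$, which is what was needed.

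For $H_\omega$ on $L^2(\R^d)$ the argument is the same, with $S$ replaced by the finite set of lattice points $\alpha$ with $\chi_\alpha\psi \not\equiv 0$; on $\supp\psi_a$ one again has $|V_\omega - \lambda| < \varepsilon/2$ a.e.\ once the corresponding $\omega_\alpha$ lie within $\varepsilon/2$ of $\lambda$. The only step requiring genuine care is the deterministic one in the continuum: since $H_0$ is unbounded, truncating an arbitrary Weyl sequence produces uncontrolled first-order commutator terms, so one must instead truncate a Floquet--Bloch eigenfunction of the periodic operator $H_0$, for which the residual is supported on the boundary of the truncation box and is negligible after normalization. This Floquet construction is the main obstacle; everything else is bookkeeping of the ``almost surely'' quantifiers together with an elementary independence argument.
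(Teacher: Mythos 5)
Your proposal is correct and takes essentially the same route as the paper: compactly supported approximate eigenfunctions of the periodic operator are translated into large regions of the half-space where, by independence and the second Borel--Cantelli lemma, all couplings are almost surely $\varepsilon$-close to $\lambda\in\supp\mu$, and one concludes by intersecting over a countable dense set of energies and using closedness of the spectrum. The only cosmetic differences are that the paper invokes Weyl's criterion directly with $C_c^\infty$ test functions where you truncate a Floquet--Bloch wave, and that you use general countable dense subsets where the paper uses rationals.
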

\begin{proof}
We only prove the claim for $H_{\omega}$; the proof is identical for $H^{\omega}$. All the arguments actually go back to \cite{KS80}, \cite{KSS}; one simply needs to choose $\Omega^{\lambda,q}_k(n)$ carefully below.

Assume $(x^1,\ldots,x^d) \in G$ whenever $x^i > r$. Let $E=\lambda+q \in \sigma(H_0) + \supp \mu$. By Weyl's criterion \cite[Theorem 7.22]{Weidmann}, we may find $f_k \in C_c^{\infty}(\R^d)$, $\|f_k\|=1$ such that $\|(H_0-\lambda)f_k\| \to 0$ as $k \to \infty$. Choose $l_k=l_k(\lambda) \in \N^{\ast}$ such that $\supp f_k \subset \Lambda_{l_k}(0)$, put $I_k^q := [q - \frac{1}{k}, q + \frac{1}{k} ]$ and consider the event
\[ 
\Omega^{\lambda,q}_k(n) := \big\{ \omega \in \Omega : \omega_{\alpha} \in I^q_k \quad \forall \alpha \in \Lambda_{l_k}(x_{n,k}) \big\} \, ,
\]
where $x_{n,k} := (3nl_k+r) e^i$ and $e^i \in \Z^d$ has $1$ in the $i$th coordinate and $0$ otherwise. First note that $\Lambda_{l_k}(x_{n,k}) \cap G = \Lambda_{l_k}(x_{n,k})$, so that the above event is well defined. Moreover, $\Lambda_{l_k}(x_{n,k}) \cap \Lambda_{l_k}(x_{m,k}) = \emptyset$ for $n \neq m$, so the events $\{ \Omega^{\lambda,q}_k(n) \}_{n \in \N^{\ast}}$ are independent and $\prob(\Omega^{\lambda,q}_k(n)) = \mu(I^q_k)^{|\Lambda_{l_k}|}$ is the same for all $n$ and strictly positive since $q \in \supp \mu$. It follows by Borel-Cantelli lemma II that if $\Omega^{\lambda,q}_k := \cap_{m \ge 1} \cup_{n \ge m} \Omega^{\lambda,q}_k(n)$, then $\prob(\Omega^{\lambda,q}_k)=1$. Let $\Omega^{\lambda,q} := \cap_{k \in \N^{\ast}} \Omega^{\lambda,q}_k$, then $\prob(\Omega^{\lambda,q}) = 1$.

Now fix $\omega \in \Omega^{\lambda,q}$ and let $k \in \N^{\ast}$. Then $\omega \in \Omega^{\lambda,q}_k$, so we may find $n \in \N^{\ast}$ such that $\omega \in \Omega^{\lambda,q}_k(n)$. But
\[
\|(H_{\omega}-E)f_k(\, \cdot \, - x_{n,k}) \| \le \|(H_0 - \lambda) f_k(\, \cdot \, - x_{n,k}) \| + \|(V_{\omega} - q) f_k(\, \cdot \, - x_{n,k}) \| \, .
\]
Since $V_0$ is periodic, $\|(H_0 - \lambda) f_k(\, \cdot \, - x_{n,k}) \| = \|(H_0 - \lambda) f_k \| \to 0$. Moreover $\omega \in \Omega^{\lambda,q}_k(n)$, so $\omega_{\alpha} \in I^q_k$ for all $\alpha \in \Lambda_{l_k}(x_{n,k})$. Recalling that $\Lambda_{l_k}(x_{n,k}) \cap G = \Lambda_{l_k}(x_{n,k})$, we get
\[
\| (V_{\omega} - q) f_k(\, \cdot \, - x_{n,k}) \|^2 = \sum_{\alpha \in \Lambda_k(x_{n,k})} (\omega_{\alpha}-q)^2\| \chi_{\alpha} f_k \|^2 \le \frac{1}{k^2} \|f_k\|^2 \to 0 \, .
\]
Hence $f_k$ is a Weyl sequence for $E$. We thus showed that for any $\omega \in \Omega^{\lambda,q}$ we have $\lambda + q \in \sigma(H_{\omega})$. Let $\Omega_0 := \bigcap_{\lambda \in \sigma(H_0) \cap \Q, q \in \supp \mu \cap \Q} \Omega^{\lambda,q}$. Then $\prob(\Omega_0) = 1$ and for any $\omega \in \Omega_0$ we have $\sigma(H_{\omega}) \supseteq \sigma(H_0) \cap \Q + \supp \mu \cap \Q$. Since $\sigma(H_{\omega})$ is closed, the proof is complete.
\end{proof}

\subsection{Technical details}             \label{sub:technical}
We give here the details of some claims we made in Sections~\ref{sec:intro} and \ref{sec:proofs}. Let $\mu$ be a probability measure on $\R$. To prove (\ref{eq:sP1}), let $E \in \R$ and $E_k := \frac{ \lfloor 10^k E \rfloor}{10^k}$. Then $E_k \nearrow E$ and $c < E+\varepsilon$ iff $c<E_k + \varepsilon$ for some $k$. Hence
\begin{align*}
\mu(E,E+\varepsilon) = \mu \big( \cup_k (E, E_k + \varepsilon) \big) & = \lim_{k \to \infty} \mu(E, E_k + \varepsilon) \\ 
& \le \lim_{k \to \infty} \mu(E_k, E_k + \varepsilon) \le \sup_{F \in \Q} \mu(F, F+ \varepsilon) \, .
\end{align*}
Thus, $\sup_{E \in \R} \mu(E,E+\varepsilon) \le \sup_{F \in \Q} \mu(F,F+\varepsilon)$. This proves (\ref{eq:sP1}).

Suppose $\prob = \mathop \otimes \mu_{\alpha}$ for some probability measures $\mu_{\alpha}$ on $\R$. Then given $A \in \mathfrak{F}$, we have $\prob(A) = \int_{Y_{\alpha}} \mu_{\alpha}(A_{\hat{\omega}_{\alpha}}) \dd \prob_{Y_{\alpha}}(\hat{\omega}_{\alpha})$, so by \cite[Corollary 10.4.15]{Bog07}, $\mu_{\hat{\omega}_{\alpha}} = \mu_{\alpha}$ $\prob_{Y_{\alpha}}$-a.s., so $s_F(\prob,\varepsilon) = \max_{\alpha \in \mathcal{I}_F} \sup_{E \in \R} \mu_{\alpha}(E,E+\varepsilon)$ using (\ref{eq:sP1}). Next, note that
\begin{align*}
\mu(E,E+\varepsilon] = \mu\big( \cup_k (E+\textstyle{\frac{1}{k}},E+\varepsilon] \big) & = \lim_{k \to \infty} \mu(E+\textstyle{\frac{1}{k}},E+\varepsilon] \\
& \le \lim_{k \to \infty} \mu(E+\textstyle{\frac{1}{k}},E+\varepsilon +\textstyle{\frac{1}{k}}) \le \sup_{F \in \R} \mu(F,F+\varepsilon) \, ,
\end{align*}
so $\sup_{E \in \R} \mu(E,E+\varepsilon] \le \sup_{F \in \R} \mu(F,F+\varepsilon)$ and this proves equality. Similarly, on checks that $\sup_{E \in \R} \mu[E,E+\varepsilon) \le \sup_{F \in \R} \mu(F,F+\varepsilon)$, which proves (\ref{eq:sup}).

We finally prove the following. Here $\Omega = \mathcal{B}^{\mathcal{I}}$ with $\mathcal{B} \subseteq \R$ a Borel set and $\mathcal{I}$ is countable.

\begin{lem}           \label{lem:monotone}
If $\prob$ has no atoms, then any monotone $\varphi : \Omega \to \R$ which depends on finitely many $\omega_{\alpha}$ is $\mathfrak{F}_{\prob}$-measurable, where $\mathfrak{F}_{\prob}$ is the $\prob$-completion of $\mathfrak{F}$.
\end{lem}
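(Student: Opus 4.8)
The plan is to reduce the measurability of $\varphi$ to a countable union of genuinely $\mathfrak{F}$-measurable (in fact cylinder) sets plus a $\prob$-null correction, exploiting the monotonicity to control the ``bad'' set. Say $\varphi$ depends only on $(\omega_\alpha)_{\alpha\in\mathcal{I}_F}$ with $\mathcal{I}_F=\{\alpha_1,\dots,\alpha_m\}$ finite, and assume $\varphi$ monotone increasing (the decreasing case follows by replacing $\varphi$ with $-\varphi$). It suffices to show that for each $c\in\R$ the sublevel set $A_c:=\{\omega:\varphi(\omega)\le c\}$ lies in $\mathfrak{F}_\prob$. Because $\varphi$ factors through the projection $\Omega\to\mathcal{B}^{\mathcal{I}_F}$, we may work entirely on the finite product $\mathcal{B}^{\mathcal{I}_F}\subseteq\R^m$ with the pushforward measure, which reduces the problem to: \emph{a monotone function on a subset of $\R^m$, with respect to an atomless measure, is measurable up to completion}.

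The key step is a one-variable observation: a monotone function of a single real variable has at most countably many discontinuities, so it is Borel measurable outright. In several variables I would iterate this. Write $\varphi$ as a function of $(\omega_{\alpha_1},\hat\omega)$ where $\hat\omega=(\omega_{\alpha_2},\dots,\omega_{\alpha_m})$. For fixed $\hat\omega$, the map $t\mapsto\varphi(t,\hat\omega)$ is monotone, hence Borel; and monotonicity in all variables jointly is what lets us patch these together. Concretely, for each $q\in\Q$ let $g_q(\hat\omega):=\sup\{t:\varphi(t,\hat\omega)\le q\}$ (with the usual conventions if the set is empty or unbounded). By monotonicity in the remaining variables, $\hat\omega\mapsto g_q(\hat\omega)$ is itself monotone, so by induction on $m$ it is $\mathfrak{F}_\prob$-measurable on $\mathcal{B}^{\mathcal{I}_F\setminus\{\alpha_1\}}$. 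Then $\{\varphi\le q\}$ differs from $\{(t,\hat\omega):t\le g_q(\hat\omega)\}$ only on the set where $\varphi(\cdot,\hat\omega)$ jumps exactly across $q$ at $t=g_q(\hat\omega)$, i.e. on a graph $\{t=g_q(\hat\omega)\}$; and $\{\varphi\le c\}=\bigcap_{q>c,\,q\in\Q}\{\varphi\le q\}$, so it is enough to handle countably many $q$.

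The main obstacle is precisely controlling those graph sets $\{\omega_{\alpha_1}=g_q(\hat\omega)\}$: a priori $g_q$ is merely measurable, not continuous, so the graph need not be Borel, but it \emph{is} contained in a $\prob$-null set because $\prob$ has no atoms. This is where the atomlessness hypothesis (in the form $s_F(\prob,\varepsilon)\to0$, which forces each conditional measure $\mu_{\hat\omega_\alpha}$ to be atomless $\prob_{Y_\alpha}$-a.s.) enters: by the disintegration recalled in the Notations section, $\prob\{\omega_{\alpha_1}=g_q(\hat\omega)\}=\expect_{Y_{\alpha_1}}\{\mu_{\hat\omega_{\alpha_1}}(\{g_q(\hat\omega_{\alpha_1})\})\}=0$. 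Since a subset of a $\prob$-null set belongs to $\mathfrak{F}_\prob$, each correction set is negligible, and hence $\{\varphi\le q\}\in\mathfrak{F}_\prob$ for all rational $q$, giving $\{\varphi\le c\}\in\mathfrak{F}_\prob$. The base case $m=1$ is the one-variable monotone statement above; the inductive step is the patching just described. The only care needed is the bookkeeping when $g_q$ takes the values $\pm\infty$ or the defining set is empty, which only affects the graph set on a further measurable piece and does not disturb the argument.
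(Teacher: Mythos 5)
Your proposal is correct and takes essentially the same route as the paper's proof: induction on the number of coordinates $\varphi$ depends on, with the one-variable monotone case as base, a monotone threshold function (your $g_q(\hat\omega)=\sup\{t:\varphi(t,\hat\omega)\le q\}$ versus the paper's $\inf\{t: f(\omega^k,t)\ge a\}$) measurable by the induction hypothesis, and the observation that the sublevel set differs from $\{\omega_{\alpha_1}\le g_q(\hat\omega)\}$ only inside the graph $\{\omega_{\alpha_1}=g_q(\hat\omega)\}$, which is negligible because the conditional measures $\mu_{\hat\omega_\alpha}$ are atomless (equivalently $s_F(\prob,\varepsilon)\to0$), so completeness of $\mathfrak{F}_{\prob}$ finishes the argument. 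The one step the paper treats more carefully is the evaluation of the measure of the graph: the disintegration formula is stated for sets in $\mathfrak{F}$, while the graph is only $\mathfrak{F}_{\prob}$-measurable, so one should first pass to an inner $\mathfrak{F}$-measurable approximation (whose sections are still singletons or empty) before integrating, and the paper also organizes the induction through intermediate completed $\sigma$-algebras $\mathfrak{F}_k$ where you work directly with $\mathfrak{F}_{\prob}$ on all of $\Omega$; both are routine adjustments to your argument.
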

\begin{proof}
Suppose $\varphi$ only depends on $(\omega_{\alpha})_{\alpha \in \mathcal{I}_m}$. For notational simplicity, assume $\mathcal{I}_m=\{1,\ldots,m\}$. Put $\mathcal{I}_k := \{1,\ldots,k\}$ for $1 \le k \le m$ and let $\mathfrak{F}_k$ be the $\sigma$-algebra generated by $\big(\mathop \otimes_{\alpha \in \mathcal{I}_k} \mathfrak{B}\big) \bigcup \mathcal{N}_k(\prob)$, where $\mathcal{N}_k(\prob) := \{ M \subseteq \mathcal{B}^{\mathcal{I}_k} : \prob^{\ast}(M \times \mathcal{B}^{\mathcal{I}_k^c}) = 0 \}$. Here $\prob^{\ast}$ is the outer measure defined by $\prob$ and $\mathcal{I}_k^c = \mathcal{I} \setminus \mathcal{I}_k$. Then $A \in \mathfrak{F}_k$ implies $A \times \mathcal{B}^{\mathcal{I}_k^c} \in \mathfrak{F}_{\prob}$.

Since $\varphi : \mathcal{B}^{\mathcal{I}} \to \R$ only depends on $(\omega_{\alpha})_{\alpha \in \mathcal{I}_m}$, then given $a \in \R$, $\{ \omega : \varphi(\omega) \ge a\} = A' \times \mathcal{B}^{\mathcal{I}_m^c}$ for some $A' \subseteq \mathcal{B}^{\mathcal{I}_m}$. So to show that $\varphi$ is measurable, it suffices to show that $A' \in \mathfrak{F}_m$. But if we define $\varphi_0 : \mathcal{B}^{\mathcal{I}_m} \to \R$ by $\varphi_0(\omega^m) := \varphi(\omega^m,0)$ for $\omega^m = (\omega_{\alpha})_{\alpha \in \mathcal{I}_m}$, then $\varphi_0$ is increasing and $\{ \omega^m : \varphi_0(\omega^m) \ge a \} = A'$. Thus, it suffices to show that any monotone increasing $f: \mathcal{B}^{\mathcal{I}_m} \to \R$ is $\mathfrak{F}_m$-measurable. For this, we proceed by induction, adapting an argument of Nathaniel Eldredge showing that monotone functions on $\R^m$ are Lebesgue-measurable, following \cite[Theorem 4.4]{GG}.

For $k=1$ the assertion is clear: if $f: \mathcal{B} \to \R$ is increasing and $A = \{t: f(t) \ge a\}$, then $A = \emptyset$ or $A = I \cap \mathcal{B}$ for some interval $I$. Thus, $A \in \mathfrak{B} \subset \mathfrak{F}_1$.

Now suppose $f : \mathcal{B}^{\mathcal{I}_{k+1}} \to \R$ is increasing, fix $a \in \R$ and define $g:\mathcal{B}^{\mathcal{I}_k} \to \R$ by $g(\omega^k) = \inf\{t \in \mathcal{B}: f(\omega^k,t) \ge a\}$. Then $g$ is monotone decreasing, hence $\mathfrak{F}_k$-measurable by the induction hypothesis. So by \cite[Proposition 3.3.4]{Bog07}, we have $E := \{(\omega^k,\omega_{k+1}):g(\omega^k)<\omega_{k+1}\} \in \mathfrak{F}_k \otimes \mathfrak{B}$ and $G := \{ (\omega^k,\omega_{k+1}):g(\omega^k)=\omega_{k+1} \}\in \mathfrak{F}_k \otimes \mathfrak{B}$. Moreover, for any $\omega^k \in \mathcal{B}^{\mathcal{I}_k}$ and $y \in \mathcal{B}^{\mathcal{I}^c_{k+1}}$, we have $G_{\omega^k,y} := \{ \omega_{k+1} : (\omega^k,\omega_{k+1},y) \in G \times \mathcal{B}^{\mathcal{I}_{k+1}^c} \} = \{ \omega_{k+1} : \omega_{k+1} = g(\omega^k) \} = \{g(\omega^k) \}$. We may find $F \subseteq G \times \mathcal{B}^{\mathcal{I}_{k+1}^c}$ such that $F \in \mathfrak{F}$ and $\overline{\prob}(G \times \mathcal{B}^{\mathcal{I}_{k+1}^c}) = \prob(F)$. The section $F_{\omega^k,y}$ of such $F$ is either a singleton or empty. Thus,
\[
\overline{\prob}(G \times \mathcal{B}^{\mathcal{I}_{k+1}^c}) = \prob(F) = \expect_{Y_{k+1}} \{ \mu_{\hat{\omega}_{k+1}} (F_{\omega^k,y}) \} \le \expect_{Y_{k+1}} \Big\{ \sup_{E \in \R} \mu_{\hat{\omega}_{k+1}}( E,E+\varepsilon ) \Big\}
\]
for any $\varepsilon>0$. Since $s(\prob,\varepsilon) \to 0$ as $\varepsilon \to 0$, it follows that $\overline{\prob}(G \times \mathcal{B}^{\mathcal{I}_{k+1}^c}) = 0$.

Finally, if $M=M' \times B$ with $M' \in \mathcal{N}_k(\prob)$ and $B \subseteq \mathcal{B}$, then $\prob^{\ast}(M \times \mathcal{B}^{\mathcal{I}_{k+1}^c}) \le \prob^{\ast}(M' \times \mathcal{B}^{\mathcal{I}_k^c}) = 0$, hence $\mathfrak{F}_k \otimes \mathfrak{B} \subset \mathfrak{F}_{k+1}$ and $E,G \in \mathfrak{F}_{k+1}$. But if $A = \{(\omega^k,\omega_{k+1}) :  f(\omega^k,\omega_{k+1}) \ge a\}$, then $E \subseteq A$ and $(A \setminus E) \subseteq G$. Since $E \in \mathfrak{F}_{k+1}$ and $\prob^{\ast}((A \setminus E) \times \mathcal{B}^{\mathcal{I}_{k+1}^c}) \le \prob^{\ast}(G \times \mathcal{B}^{\mathcal{I}_{k+1}^c}) = \overline{\prob}(G \times \mathcal{B}^{\mathcal{I}_{k+1}^c}) = 0$, $A \in \mathfrak{F}_{k+1}$ and the proof is complete.
\end{proof}

It is worthwile to note that the completeness of $(\Omega, \mathfrak{F}_{\prob}, \prob)$ is not only sufficient for the above argument to work, but also necessary. Indeed, there exist monotone increasing maps $f: \R^2 \to \R$ which are not Borel-measurable; see \cite{Sab0}.

\subsection*{Acknowledgements}
This work is part of my PhD thesis, supervised by Prof. Anne Boutet de Monvel and Victor Chulaevsky. I am very thankful for their encouragement and suggestions to improve this text. I also thank Prof. Konstantin Pankrashkin and Constanza Rojas-Molina for clarifying discussions on previous results, and Prof. Fr\'ed\'eric Klopp who told me about (\ref{eq:sP1}).

\providecommand{\bysame}{\leavevmode\hbox to3em{\hrulefill}\thinspace}
\providecommand{\MR}{\relax\ifhmode\unskip\space\fi MR }
\providecommand{\MRhref}[2]{%
  \href{http://www.ams.org/mathscinet-getitem?mr=#1}{#2}
}
\providecommand{\href}[2]{#2}

 
\end{document}